\documentclass[11pt]{article}

\usepackage[a4paper, margin=1in]{geometry}
\usepackage{amsmath, amssymb, amsfonts, amsthm}
\usepackage{mathtools}
\usepackage{booktabs}
\usepackage{array}
\usepackage{graphicx}
\usepackage{xcolor}
\usepackage[hidelinks]{hyperref}
\usepackage{caption}
\usepackage{enumitem}
\usepackage{microtype}
\usepackage{authblk}
\usepackage{cite}
\usepackage{lineno}

\definecolor{forestgreen}{rgb}{0.33,0.61,0.34}

\newtheorem{theorem}{Theorem}[section]
\newtheorem{proposition}[theorem]{Proposition}
\newtheorem{lemma}[theorem]{Lemma}
\newtheorem{corollary}[theorem]{Corollary}
\theoremstyle{definition}
\newtheorem{definitionbody}[theorem]{Definition}
\newtheorem{remarkbody}[theorem]{Remark}
\newtheorem{assumptionbody}[theorem]{Assumption}
\newcommand{\envend}{\nobreak\hfill$\lozenge$}
\newenvironment{definition}{\begin{definitionbody}}{\envend\end{definitionbody}}
\newenvironment{remark}{\begin{remarkbody}}{\envend\end{remarkbody}}
\newenvironment{assumption}{\begin{assumptionbody}}{\envend\end{assumptionbody}}

\newcommand{\Var}{\operatorname{Var}}
\newcommand{\Tr}{\operatorname{Tr}}
\newcommand{\E}{\mathbb{E}}
\newcommand{\sd}{\operatorname{sd}}
\newcommand{\bx}{\boldsymbol{x}}
\newcommand{\bp}{\boldsymbol{p}}
\newcommand{\br}{\boldsymbol{r}}
\newcommand{\bz}{\boldsymbol{z}}
\newcommand{\bv}{\boldsymbol{v}}
\newcommand{\bw}{\boldsymbol{w}}
\newcommand{\bu}{\boldsymbol{u}}
\newcommand{\by}{\boldsymbol{y}}
\newcommand{\bb}{\boldsymbol{b}}
\newcommand{\bde}{\boldsymbol{\delta}}
\newcommand{\bone}{\boldsymbol{1}}
\newcommand{\bW}{\boldsymbol{W}}

\title{A theory of spatial early warning signals for tipping points\\ on complex networks}

\author[1,2,3,*]{Naoki Masuda}
\affil[1]{Gilbert S.\ Omenn Department of Computational Medicine and Bioinformatics, University of Michigan, Ann Arbor, MI 48109, USA}
\affil[2]{Department of Mathematics, University of Michigan, Ann Arbor, MI 48109, USA}
\affil[3]{Center for Computational Social Science, Kobe University, Kobe, Hyogo 657-8501, Japan}
\affil[*]{E-mail: \texttt{naokimas@umich.edu}}
\date{}

\begin{document}
\maketitle


\begin{abstract}

Spatial early warning signals (EWSs) seek evidence of an approaching tipping point from a single snapshot of many interacting elements. Existing theory largely assumes spatial homogeneity, whereas networks introduce systematic differences among nodes that may obscure fluctuation-based warning signals. We develop a mathematical framework for spatial EWSs in stochastic dynamical systems on networks. We find that the expected spatial variance, a popular spatial EWS, decomposes exactly into a structural contribution from heterogeneity in the equilibrium state and a fluctuation contribution determined by the stationary covariance. Near a simple steady-state bifurcation, the potentially divergent covariance concentrates along the critical eigendirection: the left eigenvector determines how strongly noise excites the critical fluctuation, while the right eigenvector determines its spatial pattern. Consequently, the spatial variance has a divergent fluctuation contribution when the limiting critical eigendirection is noise-excited and spatially nonuniform after centering. In contrast, the spatial coefficient of variation generally saturates, while skewness, kurtosis, and Moran's $I$ approach network-dependent limits without a universal warning direction. We also derive results for homogeneous networks, node-wise baseline subtraction as preprocessing, and Hopf bifurcations, for which the limiting distributions are qualitatively different. These results clarify when spatial EWSs provide reliable warnings and why their performance depends on network structure, noise, and preprocessing.

\end{abstract}


\section{Introduction}\label{sec:intro}

Abrupt transitions between qualitatively distinct regimes recur across real systems. Dryland vegetation can collapse to a bare, desertified state once aridity passes a threshold~\cite{dakos2011slowing}, and several components of the climate system are thought to possess thresholds beyond which change becomes self-sustaining~\cite{Lenton2008PNAS, wunderling2022recurrent}. Because the consequences of abrupt transitions are often severe, a substantial literature has grown around early warning signals (EWSs), that is, statistics computed from observational data that change detectably before the transition occurs~\cite{Scheffer2009Nature, dakos2012methods, dakos2015resilience}. The mechanism most commonly invoked in EWSs is critical slowing down: as a control parameter approaches a bifurcation, the real part of the leading eigenvalue of the linearized dynamics at the stable state approaches zero from below, recovery from perturbations becomes progressively slower, and the stationary fluctuations grow and become more strongly autocorrelated. Important caveats about false positives and negatives attach to this program~\cite{Boettiger2012ProcRSoB, kefi2013early, rietkerk2025ambiguity, Boerlijst2013PLoSOne}, but critical slowing down remains what most EWSs attempt to detect.

The classical EWSs based on critical slowing down, such as the temporal variance and the lag-one autocorrelation, are estimated from a time series while the environment drifts slowly. That requirement is demanding in practice. A reliable estimate of even one variance needs many samples~\cite{JCGM100_2008}, and those samples must be gathered while the control parameter is effectively constant, which is hard when observation is expensive, destructive, or slow~\cite{Dakos2010TheorEcol}. Rolling windows only partly relieve the difficulty, at the cost of smoothing the very trend one is trying to detect~\cite{dakos2012methods, gsell2016evaluating}. Model-fitting and machine-learning approaches, although they broaden the toolkit considerably, still generally presuppose reasonably long records~\cite{boettiger2012quantifying, hessler2022bayesian, kong2021machine, patel2023using, liu2024early}. Spatial EWSs offer an alternative by trading the time axis for the spatial one. Instead of watching one element repeatedly, one records a single snapshot across many interacting elements in the system in question, which we call nodes throughout, and computes a statistic across these nodes~\cite{Dakos2010TheorEcol, guttal2009spatial, kefi2014early, nijp2019spatial}. Thus, one snapshot per environmental condition suffices for computing and monitoring a spatial EWS. The sample variance, skewness, and kurtosis computed from the nodes' states are examples of spatial EWSs.

Most studies of spatial EWSs have considered spatially homogeneous domains, such as regular square lattices and their continuum limits~\cite{guttal2009spatial, Dakos2010TheorEcol, dakos2011slowing, kefi2014early, sankaran2018implications}. Likewise, analytical theory for EWSs in systems with more than a few variables has focused almost entirely on spatially homogeneous continua, particularly using stochastic partial differential equations~\cite{Gowda2015CNSNS, Kuehn2019EJAM, Bernuzzi2025EJAM, Bernuzzi2026PDEA, Clarke2026JPhysComplexity} (see~\cite{guttal2009spatial} for a mean-field approach and~\cite{tirabassi2024linear} for a networked ODE approach to homogeneous media). Empirical systems of interest, however, are rarely so regular. Ecological, epidemiological, genetic, infrastructural, and social systems in which tipping events occur are typically organized as heterogeneous networks~\cite{Newman2018book, LiuLiMa2022PhysRep}, whose nodes differ in degree and other properties and may therefore occupy different equilibrium states even far from a transition. Numerical simulations have shown that classical spatial EWSs behave erratically on such networks: which indicator performs best, and even whether it rises or falls as a bifurcation is approached, depends on the dynamical system, control parameter, direction of parameter variation, and network~\cite{maclaren2025applicability, robinson2025assessing}. We recently proposed referencing each node's state to its own baseline value measured far from the bifurcation before computing a spatial EWS; this procedure improves the performance of the spatial variance in particular~\cite{Bandara2026arxiv}.

These numerical findings raise more questions than they answer because the mechanisms governing spatial EWSs in heterogeneous networks remain poorly understood. What do these commonly used, heuristically defined statistics actually measure? Why do many perform well in homogeneous media but not on heterogeneous networks~\cite{maclaren2025applicability, robinson2025assessing}? Why should subtracting a node-wise baseline rescue the spatial variance while leaving skewness, kurtosis, and Moran's $I$ largely unimproved~\cite{Bandara2026arxiv}? This article develops a mathematical foundation for these and related questions. We linearize stochastic network dynamics about a stable equilibrium and analyze statistics computed across nodes from a single snapshot as the equilibrium loses stability. Our organizing observation is that, on a heterogeneous network, the expected spatial variance decomposes into a deterministic contribution determined by variation among equilibrium node states and a stochastic contribution determined by the stationary covariance. This distinction disappears on a homogeneous domain, where the equilibrium is uniform. We determine which of five standard spatial EWSs diverge near a loss of stability, quantify the uncertainty arising from a single snapshot, and establish what baseline referencing does and does not accomplish.

\section{Model and preliminaries}\label{sec:setup}

\subsection{Stochastic dynamics on networks}

We consider $N$ dynamical elements, which we call nodes, coupled on a connected network with adjacency matrix $A=(A_{ij})$ or, more generally, a different interaction matrix. When the interactions are directed, connectedness here means weak connectedness. Unless stated otherwise, we do not assume that the interaction matrix is symmetric or binary, so directed and weighted interactions are allowed. 
Let $x_i(t)\in\mathbb{R}$ denote the state of the $i$th node and $\bx(t)=(x_1(t),\ldots,x_N(t))^\top$, where ${}^{\top}$ represents the transposition. We consider the It\^o stochastic differential equation
\begin{equation}\label{eq:sde}
\mathrm{d}\bx(t) = F(\bx(t); \alpha)\,\mathrm{d}t + B\,\mathrm{d}\bW(t),
\end{equation}
where $\alpha$ is a scalar control parameter, $F:\mathbb{R}^N\to\mathbb{R}^N$ collects the self-dynamics and the network coupling, $\bW$ is an $m$-dimensional vector of independent standard Wiener processes, and $B \in \mathbb{R}^{N\times m}$ is a constant matrix. Only the diffusion matrix $BB^\top\in\mathbb{R}^{N\times N}$ enters the results below, so the number of independent noise sources, $m$, plays no role. We write $B=\mathrm{diag}(\sigma_1,\ldots,\sigma_N)$ when the noise is independent across nodes with node-dependent strength. We assume that $F$ is sufficiently smooth in $(\bx,\alpha)$ for the derivatives and local bifurcation expansions used below; $C^2$ regularity in a neighborhood of the relevant equilibrium branch is sufficient.

An example is the coupled double-well system~\cite{brummitt2015coupled, wunderling2022recurrent, MacLaren2023JRoySocInterface}
\begin{equation}\label{eq:doublewell}
\mathrm{d}x_i = \Big[-(x_i-r_1)(x_i-r_2)(x_i-r_3) + D\sum_{j=1}^N A_{ij}x_j + u\Big]\mathrm{d}t + \sigma\,\mathrm{d}W_i,
\end{equation}
with $r_1<r_2<r_3$, coupling strength $D$ and stress $u$; either $D$ or $u$ may play the role of $\alpha$.

Let $\bx^*(\alpha)$ denote a stable equilibrium of the noise-free dynamics, satisfying $F(\bx^*;\alpha)=\boldsymbol 0$. We define the stability matrix
\begin{equation}\label{eq:M}
M(\alpha) \;=\; -\left.\frac{\partial F}{\partial \bx}\right|_{\bx=\bx^*(\alpha)},
\end{equation}
that is, the sign-flipped Jacobian, so that $\bx^*$ is asymptotically stable when all eigenvalues of $M$ have positive real parts.
For the coupled double-well system~\eqref{eq:doublewell}, $M = \mathrm{diag}\big(p'(x_i^*)\big) - DA$ with $p(x)=(x-r_1)(x-r_2)(x-r_3)$, which is symmetric whenever $A$ is.

We study a one-parameter family of such systems and follow a branch of stable equilibria as the control parameter $\alpha$ varies. The situation of interest is the one in which this branch persists over a range of $\alpha$ and then loses stability at a critical value $\alpha=\alpha_{\mathrm c}$, after which the dynamics may move toward a qualitatively different state. Anticipating this event from data recorded while the system is still at $\alpha<\alpha_{\mathrm c}$, and while nothing dramatic has yet happened to the node states, is the entire purpose of an EWS. For a steady-state bifurcation, the loss of stability occurs when a real eigenvalue of the Jacobian reaches zero, equivalently when an eigenvalue of the stability matrix $M$ reaches zero. The approach of that eigenvalue to zero produces critical slowing down. The following assumption specifies the type of destabilization treated in the main part of this paper. It includes the standard saddle-node, transcritical, and pitchfork bifurcations, although transcritical and pitchfork bifurcations generally require additional conditions.

\begin{assumption}[Regular simple real critical eigenvalue]\label{ass:spec}
Along the stable equilibrium branch, let $\alpha \to \alpha_{\mathrm c}^-$ denote approach from below to the critical parameter value $\alpha_{\mathrm c}$ at which $\bx^*(\alpha)$ loses stability. We assume that the equilibrium branch has a finite limit, denoted by $\bx_{\mathrm c}^*$. For $\alpha<\alpha_{\mathrm c}$ sufficiently close to $\alpha_{\mathrm c}$,
the real matrix $M(\alpha)$ is diagonalizable over $\mathbb C$ and all of its eigenvalues have positive real parts.
Its eigenvalue of smallest real part, denoted by $\lambda_1(\alpha)$, is real and simple, with
\begin{equation}\label{eq:specgapass}
\lambda_1(\alpha)>0,\qquad \lim_{\alpha\to\alpha_{\mathrm c}^-}\lambda_1(\alpha)=0,\qquad \operatorname{Re}\lambda_k(\alpha)\ge\delta>0\quad(k=2,\ldots,N).
\end{equation}
Thus, only one real eigendirection becomes critical.
The right and left eigenvectors can be chosen as biorthogonal families,
\begin{equation}
M\bv^{(k)}=\lambda_k\bv^{(k)},\qquad \bw^{(k)\top}M=\lambda_k\bw^{(k)\top},\qquad \bw^{(k)\top}\bv^{(\ell)}=\delta_{k \ell},
\end{equation}
with $\|\bv^{(k)}\|_2=1$. Because $M(\alpha)$ is real and $\lambda_1(\alpha)$ is real and simple, its one-dimensional right and left critical eigenspaces contain real nonzero vectors~\cite{horn2013matrix}. We therefore choose $\bv\equiv\bv^{(1)}$ and $\bw\equiv\bw^{(1)}$ to be real, with the above normalization, and assume that the finite limits
\begin{equation}\label{eq:criticalvectors}
\bv_{\mathrm c}\equiv\lim_{\alpha\to\alpha_{\mathrm c}^-}\bv(\alpha),\qquad
\bw_{\mathrm c}\equiv\lim_{\alpha\to\alpha_{\mathrm c}^-}\bw(\alpha)
\end{equation}
exist.
\end{assumption}

\begin{remark}
Assumption~\ref{ass:spec} describes a loss of stability in which one real eigenvalue approaches zero. It therefore does not cover a Hopf bifurcation, where a complex-conjugate pair approaches the imaginary axis with nonzero imaginary parts. The Hopf case is treated separately in Section~\ref{sec:hopf}.
\end{remark}

\subsection{Linearization and the stationary covariance}

Writing $\bz(t)=\bx(t)-\bx^*(\alpha)$ and linearizing~\eqref{eq:sde} gives the multivariate OU process
\begin{equation}\label{eq:ou}
\mathrm{d}\bz(t) = -M\bz(t)\,\mathrm{d}t + B\,\mathrm{d}\bW(t),
\end{equation}
whose stationary distribution is Gaussian with mean $\boldsymbol0$ and covariance $C=(C_{ij})$ solving the Lyapunov equation~\cite{gardiner2009stochastic, gajic1995lyapunov}
\begin{equation}\label{eq:lyapunov}
MC + CM^\top = BB^\top.
\end{equation}
Under Assumption~\ref{ass:spec}, expanding the stationary covariance in the biorthogonal eigenvector basis gives~\cite{gardiner2009stochastic, oku2018covariance, Patterson2021AmNat}
\begin{equation}\label{eq:Cspectral}
C=\sum_{k=1}^{N}\sum_{\ell =1}^{N}
\frac{\beta_{k \ell}}{\lambda_k+\lambda_{\ell}}\,
\bv^{(k)}\big(\bv^{(\ell)}\big)^\top,
\end{equation}
where
\begin{equation}
\beta_{kl}\equiv
\bw^{(k)\top}BB^\top\bw^{(l)}.
\end{equation}
Indeed, substituting
$C=\sum_{k, \ell = 1}^N c_{k \ell}\bv^{(k)}\big(\bv^{(\ell)}\big)^\top$
into~\eqref{eq:lyapunov} yields
$c_{k \ell}(\lambda_k+\lambda_\ell)=\beta_{k \ell}$.

Equation~\eqref{eq:Cspectral} separates two distinct roles. The noise enters through the left eigenvectors $\bw^{(k)}$, whereas the spatial pattern of the fluctuations is carried by the right eigenvectors $\bv^{(k)}$. The left and right eigenvectors can be chosen to coincide in the symmetric case. If $M$ is symmetric and $B=\sigma I$, then
$C=\frac{\sigma^2}{2}M^{-1}$.

\begin{lemma}[Rank-one asymptotics of the covariance~\cite{neumaier2001estimation, ives2003estimating, oku2018covariance, Patterson2021AmNat}]\label{lem:rank1}
Under Assumption~\ref{ass:spec}, as
$\alpha\to\alpha_{\mathrm c}^-$, equivalently as
$\lambda_1\to0^+$,
\begin{equation}\label{eq:rank1}
C=
\frac{\sigma_{\mathrm{eff}}^2}{2\lambda_1}
\,\bv\bv^\top
+O(1),
\end{equation}
where
\begin{equation}\label{eq:sigma_eff-def}
\sigma_{\mathrm{eff}}^2
\equiv
\bw^\top BB^\top\bw.
\end{equation}
The $O(1)$ term in~\eqref{eq:rank1} is bounded in operator norm. If
$B=\operatorname{diag}(\sigma_1,\ldots,\sigma_N)$, then
$\sigma_{\mathrm{eff}}^2
=
\sum_{i=1}^N \sigma_i^2 w_i^2$.
\end{lemma}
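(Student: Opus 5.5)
The plan is to start from the spectral expansion~\eqref{eq:Cspectral} and split the double sum into the $(1,1)$ term and everything else. The $(1,1)$ term is exactly
\[
\frac{\beta_{11}}{2\lambda_1}\,\bv^{(1)}\bigl(\bv^{(1)}\bigr)^\top=\frac{\bw^\top BB^\top\bw}{2\lambda_1}\,\bv\bv^\top=\frac{\sigma_{\mathrm{eff}}^2}{2\lambda_1}\,\bv\bv^\top,
\]
which is the leading term in~\eqref{eq:rank1}. It then remains to show that the remaining terms, those with $(k,\ell)\neq(1,1)$, are bounded in operator norm uniformly as $\alpha\to\alpha_{\mathrm c}^-$. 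The diagonal-noise formula $\sigma_{\mathrm{eff}}^2=\sum_i\sigma_i^2w_i^2$ follows immediately by expanding $\bw^\top BB^\top\bw$ with $BB^\top=\operatorname{diag}(\sigma_i^2)$.

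For the remainder, I would bound each term separately. Take $(k,\ell)\ne(1,1)$. At least one index is at least $2$, and every eigenvalue has positive real part. Hence $\operatorname{Re}(\lambda_k+\lambda_\ell)\ge\delta$ by~\eqref{eq:specgapass}, and so $|\lambda_k+\lambda_\ell|^{-1}\le\delta^{-1}$ uniformly. Each term then has operator norm at most
\[
\delta^{-1}\,|\beta_{k\ell}|\,\|\bv^{(k)}\|\,\|\bv^{(\ell)}\|\le\delta^{-1}\|B\|^2\,\|\bw^{(k)}\|\,\|\bw^{(\ell)}\|,
\]
using $\|\bv^{(k)}\|_2=1$. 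There are at most $N^2$ terms, so what is needed is a uniform bound on the left eigenvectors $\|\bw^{(k)}\|$ as $\alpha\to\alpha_{\mathrm c}^-$.

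That uniform bound is the main obstacle. The $\bw^{(k)}$ are the rows of $V^{-1}$, where $V=[\bv^{(1)},\ldots,\bv^{(N)}]$, so the question is whether $V$ stays well-conditioned. For $k=1$ it is immediate from Assumption~\ref{ass:spec}, since $\bw(\alpha)\to\bw_{\mathrm c}$ is finite. For $k\ge2$, diagonalizability near $\alpha_{\mathrm c}$ alone does not prevent the other eigenvectors from coalescing. To handle this I would avoid the individual eigenvectors for $k\ge2$ altogether and instead use the spectral projector $P=\bv\bw^\top$ onto the critical direction, which is bounded, together with $Q=I-P$. Writing $C=PCP^\top+PCQ^\top+QCP^\top+QCQ^\top$, the identity $\bw^\top M=\lambda_1\bw^\top$ gives $PCP^\top=\frac{\sigma_{\mathrm{eff}}^2}{2\lambda_1}\bv\bv^\top$ directly from the Lyapunov equation~\eqref{eq:lyapunov}. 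Projecting~\eqref{eq:lyapunov} gives two further equations. The first is
\[
\bigl(MQ\bigr)\bigl(QCQ^\top\bigr)+\bigl(QCQ^\top\bigr)\bigl(MQ\bigr)^\top=QBB^\top Q^\top,
\]
a Lyapunov equation on the complementary invariant subspace. The second, for the cross term $PCQ^\top=\bv\,\bu^\top$, reduces to the linear equation $(\lambda_1 I+M)\bu=Q\,BB^\top\bw$ restricted to $\operatorname{range}Q^\top$. The spectrum of $M$ on $\operatorname{range}Q$ lies in $\operatorname{Re}\ge\delta$, and $M(\alpha)$ converges to $M(\alpha_{\mathrm c})$ along the branch, because $F$ is $C^1$ and $\bx^*\to\bx_{\mathrm c}^*$. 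Under these two facts, the integral representation $\int_0^\infty e^{-Mt}QBB^\top Q^\top e^{-M^\top t}\,dt$ and the resolvent $(\lambda_1+M)^{-1}$ on that subspace are uniformly bounded: the semigroup bound $\|e^{-MQt}\|\le Ke^{-\delta' t}$ is uniform for $M$ in a compact neighborhood. The projector $P$ itself stays bounded because $\bv$ and $\bw$ have finite limits. This yields the $O(1)$ operator-norm bound without requiring bounded eigenvector condition numbers for $k\ge2$. I would present the direct eigen-expansion argument first, and fall back on this projector argument to justify uniformity.
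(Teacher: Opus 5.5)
Your proposal is correct and, once the projector argument carries the uniformity, it is essentially the paper's proof; present that argument as the proof rather than a fallback, because the term-by-term bound via $\|\bw^{(k)}\|$ does not close in general under Assumption~\ref{ass:spec}, exactly as you suspected. The paper splits $e^{-Mt}=e^{-\lambda_1t}\bv\bw^\top+e^{-Mt}(I-\bv\bw^\top)$ inside $C=\int_0^\infty e^{-Mt}BB^\top e^{-M^\top t}\,\mathrm dt$, which yields your four blocks, and bounds the non-critical ones with the same uniform estimate $\|e^{-Mt}(I-\bv\bw^\top)\|_2\le Ke^{-\eta_0t}$. Two minor slips: $\bu=QC\bw$ lies in the $M$-invariant subspace $\operatorname{range}Q$ rather than $\operatorname{range}Q^\top$, and the decay bound must be stated for $e^{-Mt}Q$, since $e^{-MQt}=\bv\bw^\top+e^{-Mt}Q$ does not decay.
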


We prove Lemma~\ref{lem:rank1} in Appendix~\ref{app:rank1-proof}.

Because $B$ is fixed and $\bw\to\bw_{\mathrm c}$, we define the limiting effective noise intensity by
\begin{equation}\label{eq:sigma_eff_c}
\sigma_{\mathrm{eff,c}}^2
=
\bw_{\mathrm c}^\top BB^\top\bw_{\mathrm c}
=
\lim_{\alpha\to\alpha_{\mathrm c}^-}
\sigma_{\mathrm{eff}}^2.
\end{equation}
We say that the critical eigendirection is noise-excited in the limit when
$\sigma_{\mathrm{eff,c}}^2>0$.

\subsection{Spatial early warning signals}

We assume throughout that the observation protocol is the standard one for spatial EWSs~\cite{Dakos2010TheorEcol, maclaren2025applicability}: at each value of $\alpha$ the system is allowed to equilibrate under the noise, and a single sample of each node state is recorded. The observed snapshot is therefore
\begin{equation}\label{eq:snapshot}
x_i = x_i^*(\alpha) + z_i,
\end{equation}
where $\bz = (z_1, \ldots, z_N)$ has distribution $\mathcal{N}(\boldsymbol 0, C(\alpha))$,
and all expectations below are over the stationary law of $\bz$ at fixed $\alpha$.

For an arbitrary vector $\bu = (u_1, \ldots, u_N) \in\mathbb R^N$, we write $\bar u = N^{-1}\sum_i u_i$. We define the sample variance and the centered moments by
\begin{equation}
\Var(\bu) = \frac{1}{N-1}\sum_{i=1}^N (u_i-\bar u)^2
\end{equation}
and
\begin{equation}
\mu_k(\bu) = \frac{1}{N}\sum_{i=1}^N (u_i-\bar u)^k,
\end{equation}
respectively.

Given the observed snapshot $\bx$,
we analyze the following popular spatial EWSs~\cite{maclaren2025applicability, Bandara2026arxiv}:
\begin{align}
V &= \Var(\bx) && \text{spatial variance~\cite{guttal2009spatial, eby2017alternative, buelo2018modeling, ma2022spatiotemporal}},\label{eq:V}\\
\mathrm{CV} &= \sqrt{V}\,/\,|\bar x| && \text{spatial coefficient of variation~\cite{dai2013slower, litzow2008increased, rindi2018experimental}},\label{eq:CV}\\
g_1 &= \mu_3(\bx)\,/\,\mu_2(\bx)^{3/2} && \text{spatial skewness~\cite{guttal2009spatial, buelo2018modeling, ma2022spatiotemporal}},\label{eq:g1}\\
g_2 &= \mu_4(\bx)\,/\,\mu_2(\bx)^{2} && \text{spatial kurtosis~\cite{buelo2018modeling, ma2022spatiotemporal}},\label{eq:g2}\\
I_{\mathrm M} &= \frac{N}{W}\,\frac{\sum_{i,j} A_{ij}(x_i-\bar x)(x_j-\bar x)}{\sum_i (x_i-\bar x)^2} && \text{Moran's }I~\text{\cite{legendre1989spatial, Dakos2010TheorEcol}},\label{eq:moran}
\end{align}
with $W=\sum_{i,j = 1}^N A_{ij}>0$. We omit the spatial standard deviation because it is just equal to $\sqrt{V}$. Only $I_{\mathrm M}$ requires knowledge of the adjacency matrix. When $\bar x>0$, in particular when all node values are positive, definition~\eqref{eq:CV} agrees with the usual CV given by $\sqrt{V}/\bar x$; our CV is undefined when $\bar x=0$.

We illustrate the behavior of interest. Figure~\ref{fig:ews} shows the five spatial EWSs defined by Eqs.~\eqref{eq:V}--\eqref{eq:moran} for the coupled double-well dynamics, given by~\eqref{eq:doublewell}, on five networks, as the dynamical system is driven toward a tipping point either by gradually decreasing the coupling strength $D$ (Fig.~\ref{fig:ews}(a)--(e)) or by gradually decreasing the stress $u$ (Fig.~\ref{fig:ews}(f)--(j)). Each column of the figure uses the same network for the two control parameters. At each value of the control parameter, we equilibrate the stochastic dynamics and record one sample of each node's state, i.e., a single snapshot $\{x_1,\ldots,x_N\}$, from which we compute the five EWSs; we only use the values of the control parameter that precede the first tipping event. We describe the details of the numerical simulations and the networks in Appendix~\ref{app:sim}. Figure~\ref{fig:ews} indicates that the five EWSs disagree with each other across networks and across the two control parameters, and the different behavior is not systematically related to the number of nodes, $N$. Furthermore, even the same EWS tends to behave differently across networks and control parameters. The same EWS on the same network can even reverse its direction when we change the control parameter.
These inconsistencies across EWSs, networks, and control parameters were also reported in previous numerical work~\cite{maclaren2025applicability, robinson2025assessing, Bandara2026arxiv} and are what the theory in the following sections aims to explain, at least partially.

\begin{figure}[t]
\centering
\includegraphics[width=\textwidth]{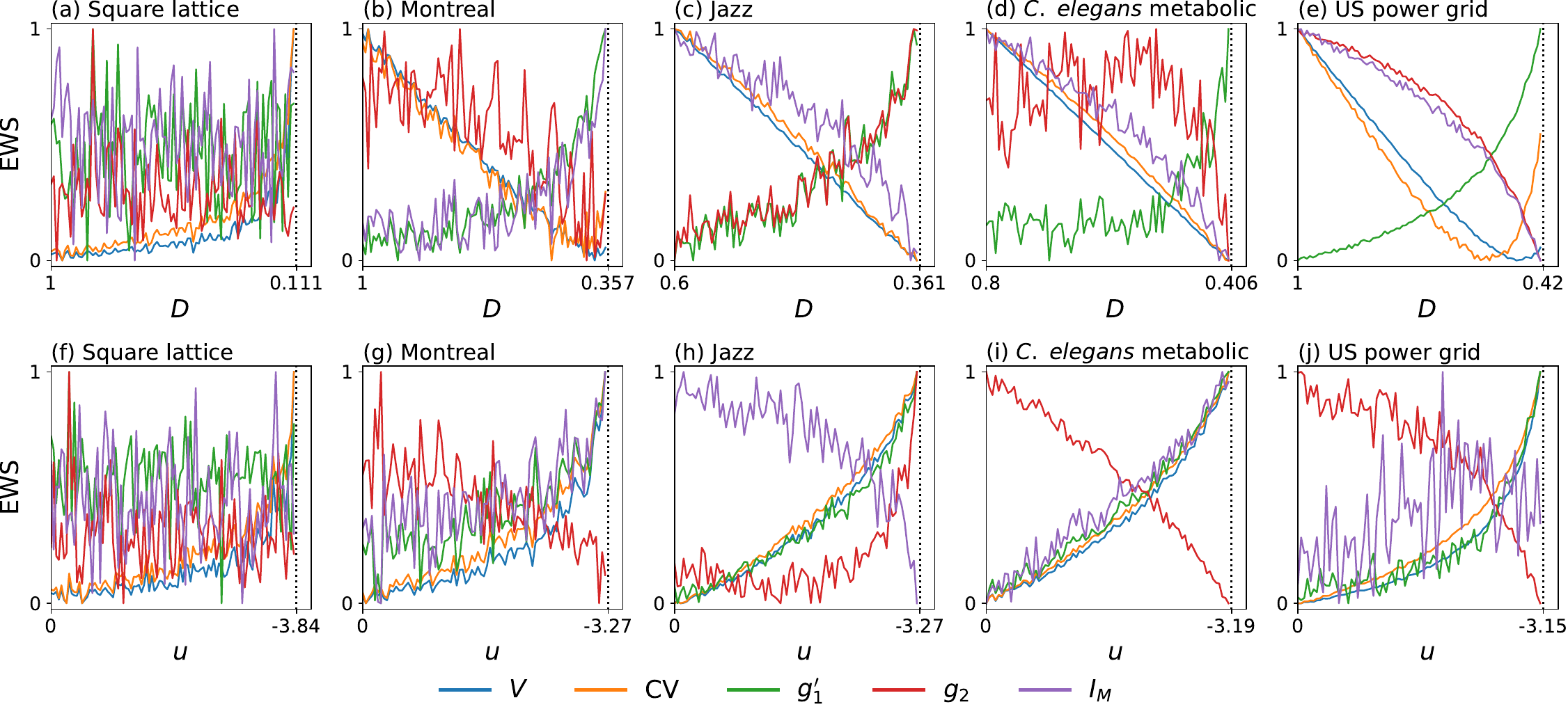}
\caption{Diverse behavior of the five spatial EWSs across networks and control parameters. We use the coupled double-well dynamics~\eqref{eq:doublewell}. Each panel shows the five spatial EWSs from the control-parameter value farthest from the tipping point (left) to the first tipping event (right; dotted line). Because both rows use a descending control parameter $D$ or $u$, we plot $g_1'=-g_1$, so that an increase in $g_1'$ signals proximity to the tipping point in the same manner as for the other four EWSs. We min--max normalize each curve to $[0,1]$ within each panel. In other words, we linearly scale the smallest and largest values that the EWS takes to $0$ and $1$, respectively, for visual purposes. (a)--(e) Descending coupling strength $D$. (f)--(j) Descending stress $u$. Each column uses the same network, i.e., the square lattice ($N=100$), Montreal ($N=29$), jazz player ($N=198$), \textit{C.\ elegans} metabolic ($N=453$), and US power grid ($N=4941$) networks. See Appendix~\ref{app:sim} for the networks.}
\label{fig:ews}
\end{figure}

\section{Structural and fluctuation contributions to the spatial variance}\label{sec:decomp}

\subsection{Exact decomposition}\label{sub:decomp}

Let $P = I - \tfrac1N\bone\bone^\top$ denote the centering projector, so that $V = \|P\bx\|^2/(N-1)$.
The following elementary observation separates the two mechanisms that drive a spatial statistic.

\begin{proposition}[Structural--fluctuation decomposition]\label{prop:decomp}
Under the observation model~\eqref{eq:snapshot}, we obtain
\begin{equation}\label{eq:decomp}
\E[V] \;=\; \underbrace{\Var\big(\bx^*(\alpha)\big)}_{\textstyle S(\alpha)} \;+\; \underbrace{\frac{1}{N-1}\Tr\big(PC(\alpha)\big)}_{\textstyle \Phi(\alpha)}.
\end{equation}
\end{proposition}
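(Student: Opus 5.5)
The plan is to work directly from the representation $V=\|P\bx\|^2/(N-1)$, where $P=I-\tfrac1N\bone\bone^\top$ is the symmetric idempotent centering projector. By the observation model~\eqref{eq:snapshot}, $\bx=\bx^*(\alpha)+\bz$, so
\begin{equation*}
(N-1)V=\|P\bx^*\|^2+2\,(P\bx^*)^\top P\bz+\bz^\top P^\top P\bz .
\end{equation*}
I will then take the expectation over the stationary law $\bz\sim\mathcal N(\boldsymbol0,C(\alpha))$, treating the three terms separately.

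The first term is deterministic. Dividing by $N-1$ gives $\|P\bx^*\|^2/(N-1)=\Var(\bx^*(\alpha))$ by the definition of the sample variance, and this is $S(\alpha)$. The cross term is linear in $\bz$, so its expectation vanishes because $\E[\bz]=\boldsymbol0$. For the quadratic term, I use $P^\top P=P^2=P$ together with the trace trick: $\E[\bz^\top P\bz]=\E[\Tr(P\bz\bz^\top)]=\Tr(P\,\E[\bz\bz^\top])=\Tr(PC)$. Dividing by $N-1$ gives $\Phi(\alpha)$, and adding the three pieces yields~\eqref{eq:decomp}.

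No step is a genuine obstacle. The only points needing care are, first, that the cross term has zero mean, which requires only the zero mean of $\bz$ and not Gaussianity. Second, $C$ must be finite, which holds for $\alpha<\alpha_{\mathrm c}$ because all eigenvalues of $M$ have positive real parts, so the Lyapunov equation~\eqref{eq:lyapunov} has a unique finite solution and $\bz$ has finite second moments. It may also be worth noting that $\Tr(PC)=\Tr(C)-\tfrac1N\bone^\top C\bone$, which gives an explicit form of $\Phi$ for the later analysis with Lemma~\ref{lem:rank1}.
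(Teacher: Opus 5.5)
Your proof is correct and follows the paper's argument essentially line by line. Like the paper, you expand $\|P\bx^*+P\bz\|^2$, identify the deterministic term as $S(\alpha)$, eliminate the cross term using $\E[\bz]=\boldsymbol0$, and evaluate the quadratic term by the trace identity $\E[\bz^\top P\bz]=\Tr(PC)$; your side remark that only the zero mean of $\bz$ is needed also matches the paper's accompanying remark.
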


\begin{proof}
By using $V=\|P\bx\|^2/(N-1)$ and $P\bx=P\bx^*+P\bz$, and taking expectations, we obtain
\begin{align}
\E[V]
&=\frac{1}{N-1}\E\left[\|P\bx^*+P\bz\|^2\right] \notag \\
&=\frac{\|P\bx^*\|^2}{N-1}
+\frac{2(P\bx^*)^\top P\E[\bz]}{N-1}
+\frac{\E[\bz^\top P\bz]}{N-1}.
\end{align}
The first term is $\Var(\bx^*)$ by definition. The middle term is zero because $\E[\bz]=\boldsymbol0$. We also obtain $\E[\bz^\top P\bz]=\Tr(P\E[\bz\bz^\top])=\Tr(PC)$, concluding the proof.
\end{proof}

We call $S(\alpha)$ the structural term and $\Phi(\alpha)$ the fluctuation term. Explicitly,
\begin{align}
S(\alpha) &= \frac{1}{N-1}\sum_{i=1}^N\big(x_i^*(\alpha)-\overline{x^*}(\alpha)\big)^2, \\
\Phi(\alpha) &= \frac{1}{N-1}\left(\Tr C - \frac{1}{N}\bone^\top C\bone\right).
\label{eq:Phi(alpha)-def}
\end{align}

Proposition~\ref{prop:decomp} is the structural reason why the theory of spatial EWSs on heterogeneous networks differs from the theory on homogeneous domains such as $\mathbb{Z}^2$ and $\mathbb{R}^2$. On the uniform equilibrium branch usually studied in a spatially homogeneous system, $x_i^*$ is independent of $i$, and hence $S(\alpha)=0$. The spatial variance then measures fluctuations and nothing else, which is the situation analyzed in, e.g.,~\cite{guttal2009spatial, dakos2011slowing, tirabassi2024linear, Clarke2026JPhysComplexity}. On a heterogeneous network, $S(\alpha)$ is generically nonzero and $O(1)$, because equilibrium node states differ systematically with degree and position even far from any bifurcation of the dynamical system. The fluctuation term $\Phi(\alpha)$ is the expected sample variance across nodes of the stochastic fluctuation vector $\bz$. In \eqref{eq:Phi(alpha)-def}, $\Tr C$ sums the fluctuation variances of the individual nodes, whereas $N^{-1}\bone^\top C\bone$ removes the part associated with fluctuations of the spatial mean; thus a fluctuation that shifts all nodes together does not contribute to $\Phi(\alpha)$.
The spatial variance, $V$, combines $S(\alpha)$ and $\Phi(\alpha)$.

\begin{remark}
The decomposition is exact and requires neither linearization nor proximity to a bifurcation; only $\E[\bz]=\boldsymbol 0$ is used. Linearization enters below solely through the evaluation of $C$.
\end{remark}

The decomposition reduces the problem to understanding the deterministic structural term $S(\alpha)$ and the fluctuation term $\Phi(\alpha)$ separately. We first examine $S(\alpha)$.

\subsection{The structural term remains bounded}\label{sub:Sbounded}

At a generic saddle-node bifurcation, the center-manifold normal form has a quadratic tangency. If $\varepsilon=\alpha_{\mathrm c}-\alpha$ measures the distance to the bifurcation on the stable side, the equilibrium displacement along the center direction is of order $\sqrt{\varepsilon}$, and the leading eigenvalue therefore satisfies $\lambda_1(\alpha)=\kappa\sqrt{\varepsilon}+O(\varepsilon)$ with $\kappa>0$ under our sign convention for $M$~\cite{strogatz2024nonlinear,kuznetsov2004elements, Bury2020JRSocInterface}.

\begin{proposition}[Behavior of the structural term]\label{prop:S}
Suppose the equilibrium branch terminates in a generic saddle-node bifurcation and write $\varepsilon=\alpha_{\mathrm c}-\alpha$. Let $\lambda_1(\alpha)=\kappa\sqrt{\varepsilon}+O(\varepsilon)$ with $\kappa>0$. Let $\bx_{\mathrm c}^*\equiv\lim_{\alpha\to\alpha_{\mathrm c}^-}\bx^*(\alpha)$, and let $\bv_{\mathrm c}$ and $\bw_{\mathrm c}$ be the corresponding limits of the right and left critical eigenvectors defined in~\eqref{eq:criticalvectors}. Then,
\begin{equation}\label{eq:Sbounded}
S(\alpha)\longrightarrow S(\alpha_{\mathrm c})=\Var(\bx_{\mathrm c}^*)<\infty,
\end{equation}
and
\begin{equation}\label{eq:Sasym}
S(\alpha)=S(\alpha_{\mathrm c})-2\kappa_S\sqrt{\varepsilon}+O(\varepsilon),
\end{equation}
where
\begin{equation}
\kappa_S=\frac{2}{\kappa}\frac{\big[\bw_{\mathrm c}^\top\partial_\alpha F(\bx_{\mathrm c}^*;\alpha_{\mathrm c})\big](P\bx_{\mathrm c}^*)^\top P\bv_{\mathrm c}}{N-1}.
\end{equation}
Therefore, $S(\alpha)$ is bounded, but its derivative generically diverges with a sign that need not be positive.
\end{proposition}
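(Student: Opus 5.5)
The plan is to write $S(\alpha)=\|P\bx^*(\alpha)\|^2/(N-1)$ and to obtain a $\sqrt{\varepsilon}$ expansion of the equilibrium branch itself,
\begin{equation*}
\bx^*(\alpha)=\bx_{\mathrm c}^*-c\sqrt{\varepsilon}\,\bv_{\mathrm c}+O(\varepsilon),
\end{equation*}
with an explicit coefficient $c$. Substituting this into the quadratic form will give the result. Indeed, $\|P\bx^*\|^2=\|P\bx_{\mathrm c}^*\|^2-2c\sqrt{\varepsilon}\,(P\bx_{\mathrm c}^*)^\top P\bv_{\mathrm c}+O(\varepsilon)$, because the square of the $\sqrt{\varepsilon}$ term is already $O(\varepsilon)$. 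The boundedness claim \eqref{eq:Sbounded} then follows at once from the existence of the finite limit $\bx_{\mathrm c}^*$ in Assumption~\ref{ass:spec} and the continuity of $\bu\mapsto\Var(\bu)$.

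To find $c$, I will differentiate the equilibrium identity $F(\bx^*(\alpha);\alpha)=\boldsymbol 0$ along the branch. Since $M$ is invertible for $\alpha<\alpha_{\mathrm c}$, this gives
\begin{equation*}
\frac{\mathrm d\bx^*}{\mathrm d\alpha}=M^{-1}\partial_\alpha F(\bx^*;\alpha).
\end{equation*}
I then expand $M^{-1}$ in the biorthogonal eigenbasis, $M^{-1}=\lambda_1^{-1}\bv\bw^\top+\sum_{k\ge2}\lambda_k^{-1}\bv^{(k)}\bw^{(k)\top}$. The spectral gap $\operatorname{Re}\lambda_k\ge\delta$ in \eqref{eq:specgapass}, together with the convergence of the eigenvectors, should make the second sum $O(1)$ in operator norm. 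This is the same mechanism as in Lemma~\ref{lem:rank1}, where I expect to also need that the eigenvector condition numbers stay bounded as $\alpha\to\alpha_{\mathrm c}^-$. Hence
\begin{equation*}
\frac{\mathrm d\bx^*}{\mathrm d\alpha}=\frac{\bw^\top\partial_\alpha F}{\lambda_1}\,\bv+O(1).
\end{equation*}
Inserting $\lambda_1=\kappa\sqrt{\varepsilon}+O(\varepsilon)$ and $\mathrm d\varepsilon=-\mathrm d\alpha$ gives
\begin{equation*}
\frac{\mathrm d\bx^*}{\mathrm d\varepsilon}=-\frac{\bw_{\mathrm c}^\top\partial_\alpha F(\bx_{\mathrm c}^*;\alpha_{\mathrm c})}{\kappa\sqrt{\varepsilon}}\,\bv_{\mathrm c}+O(1).
\end{equation*}
Integrating from $0$ to $\varepsilon$ then yields $c=2\,\bw_{\mathrm c}^\top\partial_\alpha F/\kappa$. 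Substituting this into the expansion of $\|P\bx^*\|^2$ gives exactly $-2\kappa_S\sqrt{\varepsilon}$ with $\kappa_S$ as stated.

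The step I expect to be the main obstacle is justifying that the replacement of $\bw(\alpha)^\top\partial_\alpha F(\bx^*(\alpha);\alpha)\,\bv(\alpha)$ by its limit incurs only an $O(\sqrt{\varepsilon})$ error. After division by $\lambda_1\sim\sqrt{\varepsilon}$, such an error becomes $O(1)$, which integrates to the claimed $O(\varepsilon)$ remainder. The argument I have in mind is a bootstrap. First, the preliminary bound $\|\mathrm d\bx^*/\mathrm d\alpha\|=O(\varepsilon^{-1/2})$ integrates to $\|\bx^*-\bx_{\mathrm c}^*\|=O(\sqrt{\varepsilon})$. Because $F$ is $C^2$, $M(\alpha)$ and $\partial_\alpha F$ are then $O(\sqrt{\varepsilon})$-close to their critical values. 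Finally, since $\lambda_1$ stays simple and separated from the rest of the spectrum, the spectral projector $\bv\bw^\top$ is a Lipschitz function of $M$, so $\bv,\bw$ are also within $O(\sqrt{\varepsilon})$ of $\bv_{\mathrm c},\bw_{\mathrm c}$. If this becomes delicate, I would instead fall back on the standard center-manifold normal form of the generic saddle-node, where the $\sqrt{\varepsilon}$ expansion of the branch along $\bv_{\mathrm c}$ is classical, and only identify the coefficient as above.

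The last assertion of the statement follows by differentiating the expansion: $\mathrm dS/\mathrm d\alpha=-\mathrm dS/\mathrm d\varepsilon=\kappa_S/\sqrt{\varepsilon}+O(1)$. This diverges whenever $\kappa_S\neq0$, and its sign is that of $\kappa_S$, which depends on $\bw_{\mathrm c}^\top\partial_\alpha F$ and on the overlap $(P\bx_{\mathrm c}^*)^\top P\bv_{\mathrm c}$. Neither of these has a definite sign. Here one must read the $O(\varepsilon)$ remainder as differentiable with $O(1)$ derivative, which the derivative formula for $\mathrm d\bx^*/\mathrm d\alpha$ established above supplies directly.
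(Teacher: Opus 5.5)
Your proposal is correct and follows the paper's proof essentially step for step. The shared steps are implicit differentiation $\mathrm d\bx^*/\mathrm d\alpha=M^{-1}\partial_\alpha F$, the rank-one expansion $M^{-1}=\lambda_1^{-1}\bv\bw^\top+O(1)$, replacing $\bv\bw^\top\partial_\alpha F$ by its critical value with an $O(\sqrt{\varepsilon})$ error, integrating to $\bx^*=\bx_{\mathrm c}^*-(2\bw_{\mathrm c}^\top\partial_\alpha F/\kappa)\sqrt{\varepsilon}\,\bv_{\mathrm c}+O(\varepsilon)$, and substituting into $\|P\bx^*\|^2/(N-1)$. The only differences are that you justify the $O(\sqrt{\varepsilon})$ replacement by a bootstrap (crude bound, then $C^2$ regularity and Lipschitz dependence of the spectral projector on $M$) rather than the paper's smooth local parameter $\xi=O(\sqrt{\varepsilon})$ of the saddle-node branch, and that you make the derivative claim $\mathrm dS/\mathrm d\alpha=\kappa_S/\sqrt{\varepsilon}+O(1)$ explicit.
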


\begin{proof}
Continuity of the equilibrium branch gives boundedness. Differentiating the equilibrium identity with respect to $\alpha$ gives
\begin{equation}
\boldsymbol0
=\frac{\mathrm d}{\mathrm d\alpha}F(\bx^*(\alpha);\alpha)
=\left.\frac{\partial F}{\partial\bx}\right|_{\bx=\bx^*(\alpha)}\frac{\mathrm d\bx^*}{\mathrm d\alpha}+\partial_\alpha F
=-M\frac{\mathrm d\bx^*}{\mathrm d\alpha}+\partial_\alpha F.
\end{equation}
Therefore,
\begin{equation}
\frac{\mathrm d\bx^*}{\mathrm d\alpha}=M^{-1}\partial_\alpha F.
\end{equation}
Using the simple-eigenvalue expansion
$M^{-1}
=
\lambda_1^{-1}\bv\bw^\top+O(1)$,
we obtain
\begin{equation}
\frac{\mathrm d\bx^*}{\mathrm d\alpha}
=
\frac{1}{\lambda_1}
\bv\bw^\top\partial_\alpha F
+
O(1).
\label{eq:proof-smooth-0}
\end{equation}
For a generic saddle-node bifurcation, the equilibrium branch has a smooth local parameter $\xi$ such that $\xi=O(\sqrt{\varepsilon})$, where $\varepsilon=\alpha_{\mathrm c}-\alpha$. Because the zero eigenvalue at the bifurcation is simple, its left and right eigenvectors, as well as $\partial_\alpha F$ evaluated along the equilibrium branch, can be chosen to depend smoothly on $\xi$. Therefore,
\begin{equation}
\bv\bw^\top\partial_\alpha F
=
\bv_{\mathrm c}
\bw_{\mathrm c}^\top
\partial_\alpha F(\bx_{\mathrm c}^*;\alpha_{\mathrm c})
+
O(\sqrt{\varepsilon}).
\label{eq:proof-smooth-1}
\end{equation}
Since $\lambda_1=\kappa\sqrt{\varepsilon}+O(\varepsilon)$, division of the $O(\sqrt{\varepsilon})$ term on the right-hand side of \eqref{eq:proof-smooth-1} by $\lambda_1$ contributes only $O(1)$. Therefore, \eqref{eq:proof-smooth-0} leads to
\begin{equation}
\frac{\mathrm d\bx^*}{\mathrm d\alpha}
=
\frac{
\bw_{\mathrm c}^\top
\partial_\alpha F(\bx_{\mathrm c}^*;\alpha_{\mathrm c})
}{\lambda_1}
\bv_{\mathrm c}
+
O(1).
\label{eq:proof-smooth-2}
\end{equation}
Integrating \eqref{eq:proof-smooth-2} from $\alpha$ to $\alpha_{\mathrm c}$ and using
$\lambda_1=\kappa\sqrt{\varepsilon}+O(\varepsilon)$ gives
\begin{equation}
\bx^*(\alpha)
=
\bx_{\mathrm c}^*
-
\frac{
2\bw_{\mathrm c}^\top
\partial_\alpha F(\bx_{\mathrm c}^*;\alpha_{\mathrm c})
}{\kappa}
\sqrt{\varepsilon}\,\bv_{\mathrm c}
+
O(\varepsilon).
\label{eq:integrated-alpha}
\end{equation}
Substitution of \eqref{eq:integrated-alpha} into $S=\|P\bx^*\|^2/(N-1)$ gives~\eqref{eq:Sasym}.
\end{proof}

Two consequences organize the rest of the paper. First, because $S(\alpha)$ remains bounded, it does not by itself create an unbounded warning signal. Any divergence of the expected spatial variance must come from the fluctuation term $\Phi(\alpha)$. Second, the sign of $\kappa_S$, which determines the leading local trend of the structural term $S(\alpha)$ near the bifurcation, can be positive or negative, depending on the dynamical system, the network, the control parameter, and the direction in which the control parameter is moved. The structural contribution can therefore reinforce, mask, or even reverse the trend of $V$ originating from the fluctuation term $\Phi(\alpha)$ as a bifurcation is gradually approached.

\section{Convergence and divergence of spatial early warning signals}\label{sec:main}

In this section, we investigate the behavior of the five classical spatial EWSs in~\eqref{eq:V}--\eqref{eq:moran} as a bifurcation is approached, with particular attention to whether they converge or diverge.

\subsection{Critical-coordinate representation of a snapshot}

Lemma~\ref{lem:rank1} describes the ensemble fluctuations. To analyze statistics computed from an individual snapshot of the network, $\bx$, we first derive an asymptotic representation of an individual snapshot near the bifurcation. We recall that
$\bx(\alpha) = \bx^*(\alpha) + \bz$ and that $\sigma_{\mathrm{eff}}^2$ is defined in~\eqref{eq:sigma_eff-def}.

\begin{lemma}[Critical-coordinate decomposition]\label{lem:coordinate}
Assume that $\sigma_{\mathrm{eff,c}}^2>0$.
Define
\begin{equation}\label{eq:coordinate}
\gamma=\bw^\top\bx=m+\zeta,
\end{equation}
where
\begin{align}
m(\alpha) &= \bw^\top \bx^*(\alpha),\\
\zeta &= \bw^\top\bz.
\end{align}
Then, $\zeta$ has distribution $\mathcal N(0,s^2)$, where
$s^2(\alpha) = \sigma_{\mathrm{eff}}^2 / (2\lambda_1)$.
Further define
\begin{equation}\label{eq:coordinate-r}
\br=(I-\bv\bw^\top)\bx.
\end{equation}
Because $\bw^\top\bv=1$, the exact decomposition
\begin{equation}\label{eq:coordinate2}
\bx=\gamma\bv+\br,
\qquad
\bw^\top\br=0
\end{equation}
holds. As $\alpha\to\alpha_{\mathrm c}^-$, $\br=O_{\mathbb P}(1)$ and $m=O(1)$, whereas $|\gamma|\to\infty$ in probability. Consequently,
\begin{align}
P\bx &= \gamma P\bv+O_{\mathbb P}(1),
\label{eq:centermean1}\\
\bar x &= \gamma\bar v+O_{\mathbb P}(1).
\label{eq:centermean2}
\end{align}
Here and below, $O_{\mathbb P}(1)$ means bounded in probability as $\alpha\to\alpha_{\mathrm c}^-$: for every $\delta_0>0$, the norm of the remainder is below a $\delta_0$-dependent fixed bound with probability at least $1-\delta_0$ for all $\alpha$ sufficiently close to $\alpha_{\mathrm c}$.
\end{lemma}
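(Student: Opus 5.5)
The plan is to treat the statement in three layers: the exact algebraic identities, the distribution of $\zeta$, and the order-of-magnitude claims. The algebra is immediate. Since $\bw^\top\bv=1$, the identity $\bx=\bv\bw^\top\bx+(I-\bv\bw^\top)\bx=\gamma\bv+\br$ holds. Moreover, $\bw^\top\br=\bw^\top\bx-(\bw^\top\bv)\bw^\top\bx=0$. For the distribution of $\zeta$, $\bz\sim\mathcal N(\boldsymbol0,C)$ gives $\zeta=\bw^\top\bz\sim\mathcal N(0,\bw^\top C\bw)$. I will compute $\bw^\top C\bw$ exactly from the spectral expansion~\eqref{eq:Cspectral}. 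Biorthogonality gives $\bw^\top\bv^{(k)}=\delta_{1k}$, so only the $(1,1)$ term survives, leaving $\beta_{11}/(2\lambda_1)=\sigma_{\mathrm{eff}}^2/(2\lambda_1)=s^2$. Alternatively, I can multiply the Lyapunov equation~\eqref{eq:lyapunov} by $\bw^\top$ on the left and $\bw$ on the right, and use $\bw^\top M=\lambda_1\bw^\top$. This gives $2\lambda_1\bw^\top C\bw=\bw^\top BB^\top\bw$ directly, without even needing diagonalizability.

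Next come the boundedness claims. Because $\bx^*(\alpha)\to\bx_{\mathrm c}^*$ and $\bw(\alpha)\to\bw_{\mathrm c}$ by Assumption~\ref{ass:spec}, $m=\bw^\top\bx^*$ converges, so $m=O(1)$. For $\br$, I split it as $(I-\bv\bw^\top)\bx^*+(I-\bv\bw^\top)\bz$. The first piece is deterministic and convergent, since $\bv,\bw,\bx^*$ all have limits. The second piece is Gaussian with mean zero and covariance $Q C Q^\top$, where $Q=I-\bv\bw^\top$ is the spectral projector onto the non-critical eigenvectors. Applied to~\eqref{eq:Cspectral}, $Q\bv^{(k)}=\bv^{(k)}$ for $k\ge2$ and $Q\bv=\boldsymbol0$. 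The $1/\lambda_1$ terms, including the mixed $(1,\ell)$ terms, all carry a factor $Q\bv$ on one side and vanish, leaving $\sum_{k,\ell\ge2}\beta_{k\ell}(\lambda_k+\lambda_\ell)^{-1}\bv^{(k)}\bv^{(\ell)\top}$.

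The main obstacle is to bound this remaining covariance uniformly in $\alpha$ near $\alpha_{\mathrm c}$. The denominators are harmless: $|\lambda_k+\lambda_\ell|\ge\operatorname{Re}(\lambda_k+\lambda_\ell)\ge2\delta$. The issue is that the non-critical eigenvectors $\bv^{(k)},\bw^{(k)}$ are not assumed to converge, and the eigenvector matrix could become ill-conditioned. I will handle this by reusing the $O(1)$ operator-norm bound from Lemma~\ref{lem:rank1}, which is exactly the statement that $C-\frac{\sigma_{\mathrm{eff}}^2}{2\lambda_1}\bv\bv^\top$ is bounded. Applying $Q$ on both sides, $QCQ^\top=Q\,O(1)\,Q^\top$ because $Q\bv=\boldsymbol0$. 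Here $\|Q\|\le1+\|\bv\|\|\bw\|$ is bounded since $\bw\to\bw_{\mathrm c}$. A Gaussian vector with bounded covariance is $O_{\mathbb P}(1)$ by Chebyshev's inequality, so $\br=O_{\mathbb P}(1)$.

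Finally, I show $|\gamma|\to\infty$ in probability and then derive~\eqref{eq:centermean1}--\eqref{eq:centermean2}. Since $\sigma_{\mathrm{eff}}^2\to\sigma_{\mathrm{eff,c}}^2>0$ and $\lambda_1\to0^+$, we have $s\to\infty$. For any $K$, $\mathbb P(|m+\zeta|\le K)=\mathbb P(|Z+m/s|\le K/s)$ with $Z$ standard normal. This is at most $2K/(s\sqrt{2\pi})\to0$ by the density bound, uniformly in the shift $m/s$. Applying $P$ to $\bx=\gamma\bv+\br$ gives $P\bx=\gamma P\bv+P\br$, where $P\br=O_{\mathbb P}(1)$ because $\|P\|=1$. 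Averaging the same identity gives $\bar x=\gamma\bar v+\bar r$, with $|\bar r|\le\|\br\|/\sqrt N$.
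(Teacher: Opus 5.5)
Your proposal is correct and follows the paper's proof in essentially the same way. It uses the exact algebra, bounds $\br$ by applying $I-\bv\bw^\top$ to the rank-one expansion of Lemma~\ref{lem:rank1}, and gets $|\gamma|\to\infty$ from $s\to\infty$. The only difference is how you obtain $\Var(\zeta)=s^2$: the paper writes the scalar Ornstein--Uhlenbeck equation for $\zeta$ and applies It\^o's formula at stationarity, whereas sandwiching the Lyapunov equation~\eqref{eq:lyapunov} with $\bw$ gives the same identity algebraically; your Chebyshev and density bounds just spell out the $O_{\mathbb P}(1)$ and divergence steps that the paper states briefly.
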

%

\begin{proof}
The critical coordinate $\zeta=\bw^\top\bz$ itself follows a scalar OU process. Indeed, left-multiplying~\eqref{eq:ou} by $\bw^\top$ and using $\bw^\top M=\lambda_1\bw^\top$ gives
\begin{equation}
\mathrm{d}\zeta
=
-\lambda_1\zeta\,\mathrm{d}t
+
\bw^\top B\,\mathrm{d}\bW.
\end{equation}
Applying It\^o's formula gives
\begin{align}
\mathrm d(\zeta^2)
&=
2\zeta\,\mathrm d\zeta
+
(\mathrm d\zeta)^2 \notag\\
&=
\left(
-2\lambda_1\zeta^2
+
\bw^\top BB^\top\bw
\right)\mathrm dt
+
2\zeta\bw^\top B\,\mathrm d\bW,
\end{align}
where we used
\begin{equation}
(\bw^\top B\,\mathrm d\bW)^2
=
\bw^\top B
(\mathrm d\bW\,\mathrm d\bW^\top)
B^\top\bw
=
\bw^\top BB^\top\bw\,\mathrm dt.
\end{equation}
After integrating over time and taking expectations, the It\^o integral
$\int_0^t 2\zeta(s)\bw^\top B\,\mathrm d\bW(s)$ has zero expectation. Because the stationary mean of $\zeta$ is zero, $\E[\zeta^2]=\Var(\zeta)$, and $\bw^\top BB^\top\bw=\sigma_{\mathrm{eff}}^2$ by~\eqref{eq:sigma_eff-def}, stationarity gives
\begin{equation}
0
=
-2\lambda_1\Var(\zeta)
+
\sigma_{\mathrm{eff}}^2.
\end{equation}
Therefore,
$\Var(\zeta)=\sigma_{\mathrm{eff}}^2/(2\lambda_1)=s^2$.

The projection $\bv\bw^\top$ selects the critical eigendirection, while $I-\bv\bw^\top$ selects the complementary hyperplane $\ker(\bw^\top)$.
Applying $I-\bv\bw^\top$ removes the divergent rank-one covariance term in Lemma~\ref{lem:rank1}; all remaining covariance terms stay bounded under Assumption~\ref{ass:spec}. The deterministic equilibrium, $\bx^*(\alpha)$, is also bounded at the bifurcation. Therefore, $\br=O_{\mathbb P}(1)$ and $m=O(1)$. Because $s\to\infty$ as $\alpha \to \alpha_{\mathrm c}^-$, the probability that $|\gamma|$ lies in any fixed bounded interval tends to zero. Applying $P$ to~\eqref{eq:coordinate2} gives~\eqref{eq:centermean1}. Spatial averaging of~\eqref{eq:coordinate2} gives~\eqref{eq:centermean2}.
\end{proof}

Lemma~\ref{lem:coordinate} is the working picture for the remainder of the paper. It is an eigenvector decomposition, not the decomposition into spatially centered and uniform parts defined by $P$. Close to the transition, $\bx$ consists of a random scalar $\gamma$ that diverges in probability and multiplies the critical right eigenvector $\bv$, plus a remainder that remains bounded in probability. Centered statistics see $P\bv$, whereas the spatial mean sees $\bar v$.

\subsection{Spatial variance}

We now turn to $\Phi$.
Before taking the critical limit, it is useful to keep the exact eigenvector expansion (Property~\ref{prop:modal}). Theorem~\ref{thm:var} identifies the ultimately dominant term.

\begin{proposition}[Exact eigenvector form of the fluctuation term]\label{prop:modal}
Under Assumption~\ref{ass:spec}, define
\begin{equation}\label{eq:Pikl}
\Pi_{k \ell}\equiv\big(\bv^{(\ell)}\big)^\top P\bv^{(k)}
=\big(\bv^{(\ell)}\big)^\top\bv^{(k)}-\frac{(\bone^\top\bv^{(\ell)})(\bone^\top\bv^{(k)})}{N}.
\end{equation}
Then,
\begin{equation}\label{eq:Phimodalgen}
\Phi(\alpha)=\frac{1}{N-1}\sum_{k=1}^{N}\sum_{\ell=1}^{N}\frac{\beta_{k\ell}\Pi_{k\ell}}{\lambda_k+\lambda_{\ell}}.
\end{equation}
\end{proposition}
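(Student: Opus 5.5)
The plan is to insert the spectral expansion~\eqref{eq:Cspectral} of the stationary covariance into the definition of the fluctuation term in Proposition~\ref{prop:decomp}, $\Phi(\alpha)=\Tr(PC)/(N-1)$, and then use linearity of the trace together with the cyclic property. Under Assumption~\ref{ass:spec}, $M(\alpha)$ is diagonalizable and every eigenvalue has positive real part. Hence $\lambda_k+\lambda_\ell\neq 0$ for all pairs, the coefficients $c_{k\ell}=\beta_{k\ell}/(\lambda_k+\lambda_\ell)$ are well defined, and~\eqref{eq:Cspectral} is the unique solution of the Lyapunov equation~\eqref{eq:lyapunov}.

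The key steps run as follows. First, write $PC=\sum_{k,\ell}c_{k\ell}\,P\bv^{(k)}(\bv^{(\ell)})^\top$. Next, take the trace term by term and apply $\Tr(\boldsymbol a\boldsymbol b^\top)=\boldsymbol b^\top\boldsymbol a$, which gives $\Tr\bigl(P\bv^{(k)}(\bv^{(\ell)})^\top\bigr)=(\bv^{(\ell)})^\top P\bv^{(k)}=\Pi_{k\ell}$. Finally, expand $P=I-\tfrac1N\bone\bone^\top$ to get the second expression in~\eqref{eq:Pikl}. Dividing by $N-1$ then yields~\eqref{eq:Phimodalgen}.

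The only point needing care is that eigenvectors may be complex when $M$ is nonsymmetric. To keep the identity $C=\sum_{k,\ell}c_{k\ell}\bv^{(k)}(\bv^{(\ell)})^\top$ consistent with the real solution $C$, the transpose, not the conjugate transpose, must be used throughout. Likewise, $\beta_{k\ell}=\bw^{(k)\top}BB^\top\bw^{(\ell)}$ must be defined with plain transposes, exactly as in the paper.

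With these conventions, the derivation of~\eqref{eq:Cspectral} goes through as follows. Left-multiply the Lyapunov equation by $\bw^{(k)\top}$ and right-multiply by $\bw^{(\ell)}$. Biorthogonality then gives $(\lambda_k+\lambda_\ell)\bw^{(k)\top}C\bw^{(\ell)}=\beta_{k\ell}$. The trace identity is purely algebraic and holds over $\mathbb C$. The sum in~\eqref{eq:Phimodalgen} is therefore real, since it equals $\Tr(PC)/(N-1)$ with $C$ real; the complex-conjugate pairs of terms combine. I expect no genuine obstacle; the main thing to check is this bookkeeping of transposes versus conjugates.
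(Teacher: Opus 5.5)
Your proposal is correct and follows essentially the same route as the paper, which substitutes the spectral expansion~\eqref{eq:Cspectral} into $\Phi(\alpha)=\Tr(PC)/(N-1)$ and applies $\Tr\bigl(P\bv^{(k)}\bv^{(\ell)\top}\bigr)=\bv^{(\ell)\top}P\bv^{(k)}$. Your additional remarks are accurate and make explicit what the paper leaves implicit: plain transposes must be used with complex eigenvectors, the Lyapunov solution is unique because $\lambda_k+\lambda_\ell\neq0$, and the coefficients can be recovered by sandwiching with left eigenvectors.
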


\begin{proof}
Equation~\eqref{eq:Phimodalgen} follows from~\eqref{eq:Cspectral} and $\Tr(P\bv^{(k)}\bv^{(\ell)\top})=\bv^{(\ell)\top}P\bv^{(k)}$.
\end{proof}

Equation~\eqref{eq:Phimodalgen} has a direct interpretation: every pair of right eigenvectors is weighted by how the noise enters through the corresponding left eigenvectors and by how much of the right-eigenvector pair survives spatial centering. As $\lambda_1\to0^+$, the $(k,\ell)=(1,1)$ term is the only term with a vanishing denominator. Its spatial factor is
\begin{equation}
\Pi_{11}
=
\bv^\top P\bv
=
\|P\bv\|^2.
\end{equation}
Thus, whether the growing critical fluctuation contributes to the spatial variance depends on how much of the critical right eigenvector remains after subtraction of its spatial mean. We introduce the following quantity to express this dependence.

\begin{definition}[Uniformity of the critical right eigenvector]\label{def:rho}
Let $\bv$ be the real critical right eigenvector, normalized by $\|\bv\|=1$. Define
\begin{equation}\label{eq:rho}
\rho
\equiv
\frac{(\bone^\top\bv)^2}{N}
=
N\bar v^{\,2}
\in[0,1].
\end{equation}
\end{definition}

We also define
\begin{equation}\label{eq:rho_c}
\rho_{\mathrm c}
=
\frac{(\bone^\top\bv_{\mathrm c})^2}{N}
=
\lim_{\alpha\to\alpha_{\mathrm c}^-}\rho(\alpha).
\end{equation}

Because $P=I-\bone\bone^\top/N$ and $\|\bv\|=1$, we obtain
\begin{equation}
\|P\bv\|^2
=
1-\rho.
\end{equation}
Equivalently,
\begin{equation}
\Var(\bv)
=
\frac{1-\rho}{N-1}.
\end{equation}
Thus, $1-\rho$ measures the spatial non-uniformity of the critical right eigenvector. We have $\rho=1$ if and only if $\bv=\pm\bone/\sqrt N$, in which case spatial centering removes the critical right eigenvector completely, and $\rho=0$ if and only if the critical right eigenvector has zero spatial mean.

\begin{theorem}[Critical asymptotics of the spatial variance]\label{thm:var}
Under Assumption~\ref{ass:spec},
\begin{equation}\label{eq:Phi(alpha)-final}
\Phi(\alpha) = \frac{\sigma_{\mathrm{eff}}^2}{2\lambda_1}\frac{1-\rho}{N-1}+O(1).
\end{equation}
%
%
\end{theorem}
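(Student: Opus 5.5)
The plan is to start from the exact eigenvector form in Proposition~\ref{prop:modal} and isolate the single term that can diverge. In~\eqref{eq:Phimodalgen}, I will split the double sum into the $(k,\ell)=(1,1)$ term and the remaining terms. The $(1,1)$ term equals
\begin{equation*}
\frac{1}{N-1}\,\frac{\beta_{11}\Pi_{11}}{2\lambda_1},
\end{equation*}
with $\beta_{11}=\bw^\top BB^\top\bw=\sigma_{\mathrm{eff}}^2$ by~\eqref{eq:sigma_eff-def}. Its spatial factor is $\Pi_{11}=\|P\bv\|^2=1-\rho$, using Definition~\ref{def:rho} and $\|\bv\|=1$. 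Together these give exactly the leading term in~\eqref{eq:Phi(alpha)-final}. What then remains is to show that every other term is $O(1)$ as $\alpha\to\alpha_{\mathrm c}^-$.

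An alternative route avoids the eigen-sum bookkeeping. Lemma~\ref{lem:rank1} gives $C=\frac{\sigma_{\mathrm{eff}}^2}{2\lambda_1}\bv\bv^\top+R$ with $R$ bounded in operator norm. I then take $\Phi=\frac{1}{N-1}\Tr(PC)$ from~\eqref{eq:decomp}. The leading piece contributes $\frac{\sigma_{\mathrm{eff}}^2}{2\lambda_1}\frac{\bv^\top P\bv}{N-1}=\frac{\sigma_{\mathrm{eff}}^2}{2\lambda_1}\frac{1-\rho}{N-1}$. The remainder satisfies $|\Tr(PR)|\le N\|P\|\,\|R\|\le N\|R\|$, which is bounded because $N$ is fixed and $\|P\|=1$. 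I would present this as the main argument, since it reduces the theorem to Lemma~\ref{lem:rank1} plus linearity of the trace. I would then remark that it agrees with the $(1,1)$ term of~\eqref{eq:Phimodalgen}.

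The main obstacle is the uniform boundedness of the non-critical terms, which is really the content of Lemma~\ref{lem:rank1}. If I argue directly from~\eqref{eq:Phimodalgen}, I need three facts for the terms with $(k,\ell)\neq(1,1)$:
\begin{itemize}[label={}]
\item[(i)] The denominators are bounded away from zero: $|\lambda_k+\lambda_\ell|\ge\operatorname{Re}(\lambda_k+\lambda_\ell)\ge\delta$, using~\eqref{eq:specgapass} together with $\lambda_1>0$.
\item[(ii)] The factors $\beta_{k\ell}$ and $\Pi_{k\ell}$ stay bounded. This needs the eigenvector families to stay bounded near $\alpha_{\mathrm c}$. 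Since the $\bv^{(k)}$ are unit vectors, $|\Pi_{k\ell}|\le 1$.
\item[(iii)] The left vectors $\bw^{(k)}$ stay bounded. This can fail if the non-critical eigenvectors become nearly parallel, that is, if diagonalizability degenerates among the non-critical modes.
\end{itemize}

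To sidestep (iii), I would rely on the lemma's operator-norm statement. Equivalently, I would write the non-critical part using the complementary spectral projector $I-\bv\bw^\top$. That part solves a Lyapunov equation restricted to the invariant subspace on which the spectrum of $M$ has real part at least $\delta$. Such a solution has an integral representation $\int_0^\infty e^{-tM}Q\,BB^\top Q^\top e^{-tM^\top}\,\mathrm{d}t$, with $Q=I-\bv\bw^\top$. The cross terms between the critical and non-critical modes have denominators $\lambda_1+\lambda_k$, also bounded below by $\delta$, so they are bounded as well. Finally, $\sigma_{\mathrm{eff}}^2\to\sigma_{\mathrm{eff,c}}^2$ and $\rho\to\rho_{\mathrm c}$ by~\eqref{eq:criticalvectors}. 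This lets the coefficient in front of $1/\lambda_1$ be replaced by its limit up to $o(1/\lambda_1)$ if needed, but the stated form keeps the $\alpha$-dependent coefficient, so no replacement is required.
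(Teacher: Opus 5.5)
Your main argument is correct and is the same as the paper's proof: substitute the rank-one expansion of Lemma~\ref{lem:rank1} into $\Phi=\Tr(PC)/(N-1)$, use $\bv^\top P\bv=1-\rho$, and note that the trace of the operator-norm-bounded remainder stays bounded for fixed $N$. Your caveat about the modal route, that the non-critical left eigenvectors might blow up, is well taken, and it is why sending the remainder through the integral representation behind Lemma~\ref{lem:rank1} is the cleaner choice.
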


\begin{proof}
By substituting~\eqref{eq:rank1} in $\Phi(\alpha)=\Tr(PC)/(N-1)$ and using $\Tr(P\bv\bv^\top)=\bv^\top P\bv=1-\rho$, we obtain the leading term in~\eqref{eq:Phi(alpha)-final}. The bounded remainder follows directly from~\eqref{eq:rank1}.
\end{proof}

\begin{remark}\label{rem:no-divergent}
Equation~\eqref{eq:Phi(alpha)-final} shows that $\Phi(\alpha)$ remains bounded when
$\sigma_{\mathrm{eff}}^2(1-\rho)=O(\lambda_1)$. Two simple sufficient conditions are that $B^\top\bw=\boldsymbol0$ for all $\alpha$ sufficiently close to $\alpha_{\mathrm c}$, or that $\bv=\pm\bone/\sqrt N$ for all such $\alpha$.
%
%
\end{remark}

\begin{remark}[Cooperative irreducible systems]\label{rem:pf}
Suppose that the dynamics is cooperative at $\bx^*$, i.e.,
\begin{equation}\label{eq:coop}
\frac{\partial F_i}{\partial x_j}\Big|_{\bx^*}\ge0
\qquad\text{for all }i\ne j,
\end{equation}
and that the associated interaction network is strongly connected. Then $-M$ is an irreducible Metzler matrix. Perron--Frobenius theory~\cite{horn2013matrix} implies, for each $\alpha<\alpha_{\mathrm c}$ sufficiently close to the bifurcation, that the eigenvalue of $M$ with smallest real part is real and algebraically simple and that its right and left eigenvectors can be chosen entrywise positive. This fact explains why the simple-real-eigenvalue setting is natural for cooperative network dynamics. The separation condition $\operatorname{Re}\lambda_k\ge\delta$ for $k\ge2$ in~\eqref{eq:specgapass} and diagonalizability of the non-critical spectrum remain separate assumptions though.

Because the normalized limiting right eigenvector $\bv_{\mathrm c}$ is nonnegative and nonzero, we obtain $\bone^\top\bv_{\mathrm c}>0$ and hence $\rho_{\mathrm c}>0$. If the noise is independent across nodes, i.e., $B=\operatorname{diag}(\sigma_1,\ldots,\sigma_N)$, then $\sigma_{\mathrm{eff,c}}^2>0$ whenever at least one node satisfies $\sigma_i>0$ and $w_{\mathrm c,i}\ne0$. In particular, $\sigma_{\mathrm{eff,c}}^2>0$ holds when every node receives noise of positive strength. Thus, in this common setting, the noise-excitation cancellation (i.e.,  $\sigma_{\mathrm{eff,c}}^2 = 0$) is excluded, while the uniform-eigenvector cancellation (i.e., $\rho_{\mathrm c}=1$) may still occur. Examples of cooperative irreducible systems include the coupled double-well dynamics in~\eqref{eq:doublewell} and the mutualistic, susceptible-infectious-susceptible, and gene-regulatory dynamics considered in the EWS literature~\cite{masuda2024anticipating, maclaren2025applicability, Bandara2026arxiv} in their cooperative parameter regimes.
\end{remark}

One should read Theorem~\ref{thm:var} together with \eqref{eq:decomp} and Proposition~\ref{prop:S}. The structural term, $S(\alpha)$, remains finite and reflects deformation of the equilibrium profile. The fluctuation term, $\Phi(\alpha)$, is the part generated by critical slowing down and is unbounded when
the limiting critical eigendirection is noise-excited (i.e., $\sigma_{\mathrm{eff,c}}^2>0$) and $\bv_{\mathrm c}$ is non-uniform (i.e., $\rho_{\mathrm c}<1$). An observed rise in spatial variance may therefore have either a deterministic or a stochastic origin. Conversely, a decreasing structural term can mask an increasing fluctuation term and make the raw signal move in the wrong direction until very near the transition, as observed in previous studies~\cite{maclaren2025applicability}.

The factor $1-\rho=\|P\bv\|^2$ has a direct geometric meaning. If the critical right eigenvector $\bv_{\mathrm c}$ moves every node equally, centering removes its contribution completely. If different nodes move by different amounts, the non-uniform part remains and contributes to the spatial variance. For comparison, the temporal variance at the $i$th node in a multidimensional OU system satisfies
$C_{ii}= \sigma_{\mathrm{eff}}^2v_i^2 / (2\lambda_1) + O(1)$~\cite{gardiner2009stochastic,Patterson2021AmNat,masuda2024anticipating}.
The leading spatial and temporal variance contributions therefore contain the same inverse-$\lambda_1$ factor, but their amplitudes depend on different summaries of the critical right eigenvector: $(1-\rho)/(N-1)$ for the spatial variance and $v_i^2$ for the temporal variance at node $i$. We call the normalized eigenvector $\bv$ delocalized when its squared norm is spread over $O(N)$ nodes, so a typical entry satisfies $v_i^2=O(N^{-1})$ and no fixed small set of nodes carries an $O(1)$ fraction of the norm. If, in addition, $1-\rho=O(1)$, then both amplitudes of the spatial and temporal variance can be of order $N^{-1}$. For a localized $\bv$, some entries can instead have $v_i^2=O(1)$, and the comparison then depends strongly on which node is observed because the amplitude of the temporal variance can be $O(1)$, $O(1/N)$ or $o(1/N)$ depending on the degree of the eigenvector localization and choice of the node for which the temporal measure is computed.
The same distinction arises in continuous media when the linearization of stochastic partial differential equations is spatially heterogeneous~\cite{Bernuzzi2025JDDE}.

Within the one-dimensional critical setting of Assumption~\ref{ass:spec}, the rate at which $\lambda_1$ vanishes is determined by the bifurcation type~\cite{kuznetsov2004elements}.

\begin{corollary}[Variance exponents]\label{cor:exponents}
Suppose $\sigma_{\mathrm{eff}}^2(1-\rho)$ tends to a positive limit and write $\varepsilon=\alpha_{\mathrm c}-\alpha$.
\begin{enumerate}[label=(\roman*), leftmargin=*, itemsep=1pt]
\item At a generic saddle-node bifurcation, $\lambda_1=\kappa\sqrt{\varepsilon}(1+o(1))$, so $\E[V]$ grows in proportion to $\varepsilon^{-1/2}$.
\item At a nondegenerate transcritical or pitchfork bifurcation, $\lambda_1=\kappa\varepsilon(1+o(1))$, so $\E[V]$ grows in proportion to $\varepsilon^{-1}$.
\end{enumerate}
\end{corollary}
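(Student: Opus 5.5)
The plan is to combine the exact decomposition of Proposition~\ref{prop:decomp} with the critical asymptotics of Theorem~\ref{thm:var} and Proposition~\ref{prop:S}, and then insert the bifurcation-specific rate of $\lambda_1$. By Proposition~\ref{prop:decomp}, $\E[V]=S(\alpha)+\Phi(\alpha)$. By Theorem~\ref{thm:var},
\begin{equation*}
\Phi(\alpha)=\frac{\sigma_{\mathrm{eff}}^2(1-\rho)}{2(N-1)\lambda_1}+O(1).
\end{equation*}
The structural term $S(\alpha)$ stays bounded as $\alpha\to\alpha_{\mathrm c}^-$, because the branch $\bx^*(\alpha)$ has the finite limit $\bx_{\mathrm c}^*$ assumed in Assumption~\ref{ass:spec}. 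In the saddle-node case this is also the explicit content of Proposition~\ref{prop:S}. Writing $K\equiv\lim\sigma_{\mathrm{eff}}^2(1-\rho)>0$, we get $\sigma_{\mathrm{eff}}^2(1-\rho)=K(1+o(1))$, and therefore
\begin{equation*}
\E[V]=\frac{K}{2(N-1)\lambda_1}\,(1+o(1))+O(1).
\end{equation*}
Since $\lambda_1\to0^+$, the bounded terms are $o(\lambda_1^{-1})$. Hence $\E[V]\sim K/[2(N-1)\lambda_1]$.

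The remaining step is to supply the rate of $\lambda_1$ in each case. For (i) I will use the normal-form statement already quoted before Proposition~\ref{prop:S}: at a generic saddle-node, $\lambda_1=\kappa\sqrt{\varepsilon}(1+o(1))$ with $\kappa>0$. Substituting gives
\begin{equation*}
\E[V]\sim\frac{K}{2(N-1)\kappa}\,\varepsilon^{-1/2}.
\end{equation*}

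For (ii) I will reduce to the one-dimensional center manifold. On the center manifold the normal form is $\dot y=\mu y-y^2$ (transcritical) or $\dot y=\mu y-y^3$ (pitchfork). Here $\mu$ depends smoothly on $\alpha$, and nondegeneracy (transversality) means $\mathrm d\mu/\mathrm d\alpha\neq0$ at $\alpha_{\mathrm c}$. The equilibrium being followed is the branch $y=0$ on the stable side, whose linearization is $-\mu$. So the critical eigenvalue of $M$ is $\lambda_1=|\mu|$ up to smooth reparametrization. Because the eigenvalue is simple, $\lambda_1(\alpha)$ is a smooth function of $\alpha$ that vanishes at $\alpha_{\mathrm c}$. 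Transversality gives $\lambda_1'(\alpha_{\mathrm c})\neq0$, so $\lambda_1=\kappa\varepsilon(1+o(1))$ with $\kappa=-\lambda_1'(\alpha_{\mathrm c})>0$. Substituting yields $\E[V]\sim K\varepsilon^{-1}/[2(N-1)\kappa]$.

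The main obstacle is justifying the $\lambda_1$ rates rather than the variance algebra, which is immediate. In case (ii) the smoothness of $\lambda_1$ in $\alpha$ requires a smooth equilibrium branch. That branch is automatic for the trivial branch persisting through the bifurcation, and this is exactly what the nondegeneracy conditions of the transcritical and pitchfork normal forms guarantee. I will cite standard bifurcation theory, e.g.\ \cite{kuznetsov2004elements}, for this rather than re-derive it. In case (i) the square-root rate was already asserted earlier in the paper, so I will simply invoke it.
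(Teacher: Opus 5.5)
Your proposal is correct and follows essentially the route implicit in the paper. The paper gives no separate proof; it combines $\E[V]=S(\alpha)+\Phi(\alpha)$, the boundedness of $S$ from the finite limit of the branch, the leading term of Theorem~\ref{thm:var}, and the standard normal-form rates of $\lambda_1$ cited from bifurcation theory, and your center-manifold sketch for case (ii) just spells out that citation. One wording point: the persisting smooth branch through a transcritical or pitchfork point is an extra structural hypothesis, as the paper notes when introducing Assumption~\ref{ass:spec}, and is not something the nondegeneracy conditions themselves guarantee.
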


\subsection{Coefficient of variation}

It is natural to reason that $\mathrm{CV}=\sqrt V/|\bar x|$ should inherit half of the variance exponent because the equilibrium mean remains finite at a steady-state bifurcation. That reasoning is valid only when the random term $\gamma\bar v$ in the spatial mean is small compared with the remaining terms. Equations~\eqref{eq:centermean1} and~\eqref{eq:centermean2} show that, sufficiently near the transition, both the centered snapshot $P\bx$ and the spatial mean $\bar x$ are governed by the same random scalar $\gamma$. Because $V=\|P\bx\|^2/(N-1)$, both $\sqrt V$ and $|\bar x|$ usually contain the factor $|\gamma|$, which cancels in their ratio. The theorem below makes this statement precise.

\begin{theorem}[Asymptotics of the spatial coefficient of variation]\label{thm:cv}
Assume $\sigma_{\mathrm{eff,c}}^2>0$ and that $\Pr(\bar x=0)=0$ for $\alpha<\alpha_{\mathrm c}$ sufficiently close to $\alpha_{\mathrm c}$. For the CV defined in~\eqref{eq:CV}, the following convergence statements hold in probability within the linearized model:
\begin{enumerate}[label=(\roman*), leftmargin=*, itemsep=1pt]
\item If $0<\rho_{\mathrm c}<1$, then
\begin{equation}\label{eq:cvlimit}
\mathrm{CV}\longrightarrow\sqrt{\frac{N(1-\rho_{\mathrm c})}{(N-1)\rho_{\mathrm c}}}.
\end{equation}
\item If $\rho_{\mathrm c}=1$, then $\mathrm{CV}\to0$.
\item If $\rho_{\mathrm c}=0$, then $\mathrm{CV}\to\infty$ in probability.
\end{enumerate}
\end{theorem}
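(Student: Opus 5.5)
The plan is to feed the critical-coordinate decomposition of Lemma~\ref{lem:coordinate} directly into the definition of the CV and divide numerator and denominator by the diverging scalar $|\gamma|$. By~\eqref{eq:centermean1} and~\eqref{eq:centermean2}, $P\bx=\gamma P\bv+\br_1$ and $\bar x=\gamma\bar v+r_2$, where $\br_1=P\br$ and $r_2=\bar r$ are $O_{\mathbb P}(1)$. Since $|\gamma|\to\infty$ in probability, $\br_1/\gamma\to\boldsymbol 0$ and $r_2/\gamma\to0$ in probability. On the event $\{\gamma\neq0\}$, which has probability one because $\gamma$ is Gaussian with variance $s^2>0$, we write
\begin{equation*}
\mathrm{CV}=\frac{\|P\bx\|/\sqrt{N-1}}{|\bar x|}
=\frac{\|P\bv+\br_1/\gamma\|/\sqrt{N-1}}{|\bar v+r_2/\gamma|}.
\end{equation*}
The assumption $\Pr(\bar x=0)=0$ guarantees that the ratio is almost surely well defined.

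For case (i), the numerator converges in probability to $\|P\bv_{\mathrm c}\|/\sqrt{N-1}=\sqrt{(1-\rho_{\mathrm c})/(N-1)}$, using $\bv\to\bv_{\mathrm c}$ and $\|P\bv\|^2=1-\rho$. The denominator converges to $|\bar v_{\mathrm c}|=\sqrt{\rho_{\mathrm c}/N}$, which is strictly positive. The continuous mapping theorem, applied to $(a,b)\mapsto a/b$ at $b\neq0$, then gives~\eqref{eq:cvlimit}.

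For case (ii), $\rho_{\mathrm c}=1$, the numerator tends to $0$ in probability while the denominator tends to $1/\sqrt N>0$, so $\mathrm{CV}\to0$.

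Case (iii) is the main obstacle, because there the denominator $\bar v+r_2/\gamma$ tends to zero and the cancellation of $|\gamma|$ alone gives only an indeterminate form. The numerator still tends to $\sqrt{1/(N-1)}>0$, so it suffices to show that $|\bar v+r_2/\gamma|\to0$ in probability. This is immediate, since $\bar v\to\bar v_{\mathrm c}=0$ and $r_2/\gamma\to0$ in probability. Then, for any $K>0$, the event $\{\mathrm{CV}>K\}$ contains the event $\{\text{numerator}>c/2,\ \text{denominator}<c/(2K)\}$, where $c=\sqrt{1/(N-1)}$, and the latter event has probability tending to one. Hence $\mathrm{CV}\to\infty$ in probability.

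The one point needing care is the convergence of the remainder ratios. Since $r$ is $O_{\mathbb P}(1)$ and $\Pr(|\gamma|\le L)\to0$ for every fixed $L$, one obtains $\Pr(|r_2/\gamma|>\eta)\le\Pr(|r_2|>M)+\Pr(|\gamma|<M/\eta)$. Choosing $M$ first so that the first term is small and then letting $\alpha\to\alpha_{\mathrm c}^-$ makes both terms small. The same argument applies to $\br_1/\gamma$ componentwise. Throughout, all statements are within the linearized Gaussian model, as specified in the theorem.
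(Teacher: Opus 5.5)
Your proposal is correct and follows the paper's proof: substitute the critical-coordinate decomposition of Lemma~\ref{lem:coordinate} into $\mathrm{CV}=\|P\bx\|/(\sqrt{N-1}\,|\bar x|)$ and cancel the diverging factor $|\gamma|$ to get $(\|P\bv\|/\sqrt{N-1}+o_{\mathbb P}(1))/|\bar v+o_{\mathbb P}(1)|$. The three cases then follow from $\|P\bv_{\mathrm c}\|^2=1-\rho_{\mathrm c}$ and $|\bar v_{\mathrm c}|=\sqrt{\rho_{\mathrm c}/N}$, and you also supply details the paper leaves implicit, namely why the remainder divided by $\gamma$ tends to zero in probability and the event inclusion giving $\mathrm{CV}\to\infty$ in case (iii).
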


\begin{proof}
Equations~\eqref{eq:centermean1} and~\eqref{eq:centermean2}, together with $V = \|P\bx\|^2/(N-1)$ and $|\gamma|\to\infty$ in probability, give
\begin{equation}\label{eq:CV-law}
\mathrm{CV}
=
\frac{
\|P\bv\|/\sqrt{N-1}+o_{\mathbb P}(1)
}{
|\bar v+o_{\mathbb P}(1)|
}.
\end{equation}
If $0<\rho_{\mathrm c}<1$, then $\|P\bv\|\to\sqrt{1-\rho_{\mathrm c}}$ and $|\bar v|\to\sqrt{\rho_{\mathrm c}/N}>0$, so~\eqref{eq:cvlimit} follows from~\eqref{eq:CV-law}. If $\rho_{\mathrm c}=1$, then the numerator in~\eqref{eq:CV-law} converges to zero while the denominator converges to $1/\sqrt N$, and hence $\mathrm{CV}\to0$. If $\rho_{\mathrm c}=0$, then the numerator converges to $1/\sqrt{N-1}$ while the denominator converges to zero in probability, and therefore $\mathrm{CV}\to\infty$ in probability.
\end{proof}

\begin{remark}
Under an unconstrained Gaussian approximation, $\bar{x}$, which is the denominator of $\mathrm{CV}$, has a continuous density near zero, so moments such as $\E[\mathrm{CV}]$ need not exist. We therefore state the ultimate result in probability rather than through potentially unstable moments.
\end{remark}
 
For the cooperative irreducible dynamical systems described in Remark~\ref{rem:pf}, we obtain $\rho_{\mathrm c}>0$. Therefore, the divergent case $\rho_{\mathrm c}=0$ in Theorem~\ref{thm:cv} is excluded. The limiting $\mathrm{CV}$ value is small when the critical right eigenvector is nearly uniform and large when its spatial mean is small relative to its centered variation.

This conclusion does not rule out a practically important intermediate regime of $\alpha$. If the stochastic term $\gamma P\bv$ already dominates the centered snapshot $P\bx$ and hence $V$, but the corresponding term $\gamma\bar v$ in $\bar x$ is still small compared with the deterministic equilibrium mean, then the denominator is approximately fixed and the CV grows roughly like $\sqrt V$. As the bifurcation is approached further, the term $\gamma\bar v$ becomes important in $\bar x$, and the growth of the CV crosses over to the saturation described by Theorem~\ref{thm:cv}.

\subsection{Skewness, kurtosis, and Moran's $I$ do not diverge}\label{sub:skewness-no-diverge}

Skewness, kurtosis, and Moran's $I$ differ from both the variance and the CV in a decisive way: they depend only on the relative pattern of the deviations from the spatial mean, not on their overall magnitude or on the spatial mean itself. Indeed, multiplying all centered deviations $x_i-\bar x$ by the same nonzero scalar, $\tilde{c}$, leaves $g_2$ and $I_{\mathrm M}$ unchanged, while $g_1$ changes only by the sign of $\tilde{c}$.
By contrast, $V$ and $\mathrm{CV}$ are multiplied by $\tilde{c}^2$ and $|\tilde{c}|$, respectively. Critical slowing down primarily increases the magnitude of the random coefficient $\gamma$ multiplying the critical right eigenvector; a statistic that divides out this common amplitude does not inherit the variance divergence.

\begin{theorem}[Saturation of scale-invariant spatial statistics]\label{thm:ratios}
Assume $\sigma_{\mathrm{eff,c}}^2>0$ and $\rho_{\mathrm c}<1$. With $\gamma$ from Lemma~\ref{lem:coordinate},
\begin{equation}\label{eq:g1lim}
g_1
=
\operatorname{sgn}(\gamma)
\frac{\mu_3(\bv_{\mathrm c})}{\mu_2(\bv_{\mathrm c})^{3/2}}
+
o_{\mathbb P}(1),
\end{equation}
\begin{equation}\label{eq:g2lim}
g_2
\longrightarrow
\frac{\mu_4(\bv_{\mathrm c})}{\mu_2(\bv_{\mathrm c})^2}
\end{equation}
in probability, and
\begin{equation}\label{eq:moranlim}
I_{\mathrm M}
\longrightarrow
\frac{N}{W}
\frac{\bv_{\mathrm c}^{\top}PAP\bv_{\mathrm c}}{\bv_{\mathrm c}^{\top}P\bv_{\mathrm c}}
\end{equation}
in probability.
\end{theorem}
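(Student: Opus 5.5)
The plan is to use the critical-coordinate decomposition of Lemma~\ref{lem:coordinate} together with the scale invariance of the three statistics. By \eqref{eq:centermean1} we write the centered snapshot as $P\bx=\gamma P\bv+P\br$ with $P\br=O_{\mathbb P}(1)$ and $|\gamma|\to\infty$ in probability. Each of $g_1$, $g_2$, and $I_{\mathrm M}$ is a function of $P\bx$ alone: $\mu_k(\bx)=N^{-1}\sum_i (P\bx)_i^k$, and the numerator and denominator of $I_{\mathrm M}$ are $(P\bx)^\top A(P\bx)$ and $\|P\bx\|^2$. Moreover, each statistic is homogeneous of degree zero in $P\bx$, with $g_1$ odd and $g_2$, $I_{\mathrm M}$ even. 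We will therefore divide through by $\gamma$, defining $\bu\equiv P\bx/\gamma=P\bv+P\br/\gamma$ on the event $\gamma\neq0$, which has probability one since $\gamma$ has a nondegenerate Gaussian law. This gives the exact identities $g_2(\bx)=\mu_4(\bu)/\mu_2(\bu)^2$, $I_{\mathrm M}=(N/W)\,\bu^\top A\bu/\|\bu\|^2$, and $g_1(\bx)=\operatorname{sgn}(\gamma)\,\mu_3(\bu)/\mu_2(\bu)^{3/2}$. Here $\mu_k$ of a centered vector is just $N^{-1}\sum_i u_i^k$.

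The second step is to show $\bu\to P\bv_{\mathrm c}$ in probability. Since $P\br=O_{\mathbb P}(1)$ and $1/|\gamma|\to0$ in probability, the product $P\br/\gamma$ is $o_{\mathbb P}(1)$; this is the standard fact that a tight sequence times one tending to zero in probability tends to zero. Also $P\bv\to P\bv_{\mathrm c}$ deterministically by \eqref{eq:criticalvectors}. Because $\bv_{\mathrm c}$ is already uncentered data, $\mu_k(\bv_{\mathrm c})=N^{-1}\sum_i (P\bv_{\mathrm c})_i^k$, matching the limits in the statement.

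The third step is to apply the continuous mapping theorem. The maps $\bu\mapsto \mu_3(\bu)/\mu_2(\bu)^{3/2}$, $\bu\mapsto\mu_4(\bu)/\mu_2(\bu)^2$, and $\bu\mapsto \bu^\top A\bu/\|\bu\|^2$ are continuous wherever $\bu\neq\boldsymbol0$. The hypothesis $\rho_{\mathrm c}<1$ gives $\|P\bv_{\mathrm c}\|^2=1-\rho_{\mathrm c}>0$, so the limit point lies in the domain of continuity, and also $\mu_2(\bv_{\mathrm c})=(1-\rho_{\mathrm c})/N>0$. This yields \eqref{eq:g2lim} and \eqref{eq:moranlim} directly, with $\bv_{\mathrm c}^\top PAP\bv_{\mathrm c}=(P\bv_{\mathrm c})^\top A(P\bv_{\mathrm c})$ and $\bv_{\mathrm c}^\top P\bv_{\mathrm c}=\|P\bv_{\mathrm c}\|^2$. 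For skewness, the factor $\operatorname{sgn}(\gamma)$ is bounded by one, so $g_1-\operatorname{sgn}(\gamma)\,\mu_3(\bv_{\mathrm c})/\mu_2(\bv_{\mathrm c})^{3/2}=\operatorname{sgn}(\gamma)\,[h(\bu)-h(P\bv_{\mathrm c})]=o_{\mathbb P}(1)$, where $h$ is the skewness map; this gives \eqref{eq:g1lim}. We will remark that $g_1$ itself need not converge, because $\operatorname{sgn}(\gamma)$ stays random, and it is asymptotically symmetric when $m/s\to0$.

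The main point requiring care is the $o_{\mathbb P}$ bookkeeping in the second step, namely that $P\br/\gamma\to0$ in probability. The justification is as follows. Given $\delta_0$, choose $K$ so that $\Pr(\|P\br\|>K)<\delta_0$ uniformly for $\alpha$ near $\alpha_{\mathrm c}$. Then use $\Pr(|\gamma|<K/\eta)\to0$, which holds because $\gamma=m+\zeta$ with $m$ bounded and $\zeta\sim\mathcal N(0,s^2)$, $s\to\infty$, so the Gaussian density is bounded by $1/(s\sqrt{2\pi})$. The continuity of the three maps near the nonzero limit $P\bv_{\mathrm c}$ is otherwise routine.
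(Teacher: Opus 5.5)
Your proof is correct and follows essentially the same route as the paper. Both arguments write $P\bx=\gamma P\bv+O_{\mathbb P}(1)$ via Lemma~\ref{lem:coordinate}, use the scale invariance of $g_1$, $g_2$, and $I_{\mathrm M}$ to cancel the diverging amplitude $\gamma$, and use $\rho_{\mathrm c}<1$ to guarantee $\mu_2(\bv_{\mathrm c})>0$; the difference is packaging, since the paper expands $\mu_k(\bx)=\gamma^k\mu_k(\bv)+O_{\mathbb P}(|\gamma|^{k-1})$ binomially and cancels powers of $\gamma$, whereas you normalize by $\gamma$ first and apply the continuous mapping theorem, which makes that cancellation step slightly more explicit.
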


\begin{remark}
The limiting magnitudes are finite and independent of $\lambda_1$ and the overall noise scale.
\end{remark}

\begin{proof}
Equation~\eqref{eq:centermean1} reads
$P\bx=\gamma P\bv+\boldsymbol{\eta}$ with
$\boldsymbol{\eta}=O_{\mathbb P}(1)$.
Using this equation, we expand each term of
\begin{equation}
\mu_k(\bx)
=
\frac{1}{N}\sum_{i=1}^N (P\bx)_i^k
\end{equation}
as
\begin{equation}
(P\bx)_i^k
=
\big[\gamma(P\bv)_i+\eta_i\big]^k
=
\gamma^k(P\bv)_i^k
+
\sum_{j=1}^k
\binom{k}{j}
\gamma^{k-j}(P\bv)_i^{k-j}\eta_i^j.
\end{equation}
For fixed $N$ and $k$, the components of $P\bv$ are bounded and $\eta_i=O_{\mathbb P}(1)$. Because Lemma~\ref{lem:coordinate} gives $|\gamma|\to\infty$ in probability, every term in the sum is $O_{\mathbb P}(|\gamma|^{k-1})$. Therefore, we obtain
\begin{equation}
\mu_k(\bx)
=
\gamma^k\frac{1}{N}\sum_{i=1}^N(P\bv)_i^k
+
O_{\mathbb P}(|\gamma|^{k-1})
=
\gamma^k\mu_k(\bv)
+
O_{\mathbb P}(|\gamma|^{k-1}).
\end{equation}
Because $\bv\to\bv_{\mathrm c}$ and $\mu_2(\bv_{\mathrm c})>0$ (the latter holds because $\rho_{\mathrm c} \neq 1$, which implies $\bv_{\mathrm c} \neq \pm \bone/\sqrt{N}$), the powers of $\gamma$ cancel in the normalized moment ratios, giving~\eqref{eq:g1lim} and~\eqref{eq:g2lim}. Similarly, substituting $P\bx=\gamma P\bv+O_{\mathbb P}(1)$ into the numerator and denominator of Moran's $I$ shows that both quadratic forms have leading order $\gamma^2$; their common factor cancels, and $\bv\to\bv_{\mathrm c}$ gives~\eqref{eq:moranlim}.
\end{proof}

None of these spatial EWSs diverges as $\alpha \to \alpha_{\mathrm c}^-$.
We note that convergence by itself does not make a statistic useless. Temporal lagged autocorrelation is a standard EWS although it approaches $1$ as $\alpha \to \alpha_{\mathrm c}^-$ \cite{Scheffer2009Nature}. What makes a converging statistic interpretable is a known direction of approach and a sufficiently stable estimator. The three spatial EWSs here do not provide a universal direction: two networks can have critical right eigenvectors with opposite skewness, different kurtosis, or different Moran's $I$. The problem is therefore not merely the absence of divergence, but the absence of a network-independent and ideally dynamics-independent rule saying whether an increase or a decrease is the warning sign.

\begin{corollary}[Sign instability of spatial skewness]\label{cor:signflip}
Under the hypotheses of Theorem~\ref{thm:ratios}, suppose that
$\mu_2(\bv_{\mathrm c})>0$, and define
\begin{equation}
K_{1,\mathrm c}
=
\frac{\mu_3(\bv_{\mathrm c})}
{\mu_2(\bv_{\mathrm c})^{3/2}}.
\end{equation}
Let $F_{\mathcal{N}}$ denote the cumulative distribution function of a standard normal random variable. Then
\begin{equation}
\Pr(\gamma>0)
=
F_{\mathcal{N}}\!\left(\frac{m(\alpha)}{s(\alpha)}\right)
\longrightarrow
\frac{1}{2}.
\end{equation}
(Recall that $m(\alpha)$ and $s(\alpha)$ are defined in Lemma~\ref{lem:coordinate}.)
Consequently, $g_1$ converges in distribution to a random variable taking the values $K_{1,\mathrm c}$ and $-K_{1,\mathrm c}$ with probability $1/2$ each. In particular, $\E[g_1]\longrightarrow0$, and $|g_1|\longrightarrow|K_{1,\mathrm c}|$
in probability.
\end{corollary}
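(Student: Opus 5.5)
The plan is to combine the exact Gaussian law of $\gamma$ with the asymptotic formula \eqref{eq:g1lim} from Theorem~\ref{thm:ratios}. First I compute $\Pr(\gamma>0)$. By Lemma~\ref{lem:coordinate}, $\gamma=m+\zeta$ with $\zeta\sim\mathcal N(0,s^2)$ exactly, since $\bz$ is Gaussian and $\zeta$ is a linear functional of it. Hence $\gamma\sim\mathcal N(m,s^2)$ and
\begin{equation*}
\Pr(\gamma>0)=\Pr\!\left(\frac{\zeta}{s}>-\frac{m}{s}\right)=F_{\mathcal N}\!\left(\frac{m}{s}\right).
\end{equation*}
The same lemma gives $m=O(1)$. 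Also $s^2=\sigma_{\mathrm{eff}}^2/(2\lambda_1)\to\infty$, because $\sigma_{\mathrm{eff}}^2\to\sigma_{\mathrm{eff,c}}^2>0$ and $\lambda_1\to0^+$. Therefore $m/s\to0$, and continuity of $F_{\mathcal N}$ gives the limit $1/2$.

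Next I pass to the distributional limit of $g_1$. Theorem~\ref{thm:ratios} gives $g_1=\operatorname{sgn}(\gamma)K_{1,\mathrm c}+o_{\mathbb P}(1)$, and $\operatorname{sgn}(\gamma)$ is well defined almost surely. Because $\operatorname{sgn}(\gamma)$ is a $\pm1$ variable with $\Pr(\operatorname{sgn}\gamma=1)\to1/2$, it converges in distribution to a symmetric Rademacher variable $R$. Then $\operatorname{sgn}(\gamma)K_{1,\mathrm c}\Rightarrow RK_{1,\mathrm c}$, and Slutsky's lemma absorbs the $o_{\mathbb P}(1)$ remainder. For the absolute value, $\big||g_1|-|K_{1,\mathrm c}|\big|\le|g_1-\operatorname{sgn}(\gamma)K_{1,\mathrm c}|=o_{\mathbb P}(1)$, which gives $|g_1|\to|K_{1,\mathrm c}|$ in probability.

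The main obstacle is $\E[g_1]\to0$, because convergence in distribution alone does not yield convergence of means. I plan to use uniform integrability. The key observation is that spatial skewness is uniformly bounded for fixed $N$. Writing $y_i=x_i-\bar x$, we have $\max_i|y_i|\le\sqrt{N\mu_2}$, so $|\mu_3|\le\max_i|y_i|\,\mu_2\le\sqrt N\mu_2^{3/2}$. Hence $|g_1|\le\sqrt N$ whenever $\mu_2>0$; a sharper constant of order $(N-2)/\sqrt{N-1}$ exists but is not needed. The event $\mu_2=0$ means $P\bx=\boldsymbol 0$, and it has probability zero, or at least probability tending to zero since $\|P\bx\|\ge|\gamma|\|P\bv\|-O_{\mathbb P}(1)\to\infty$ in probability because $\rho_{\mathrm c}<1$. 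There I set $g_1=0$ by convention.

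With this uniform bound, bounded convergence applies to the distributional limit, so $\E[g_1]\to\E[RK_{1,\mathrm c}]=0$. The remaining routine check is that the hypothesis $\mu_2(\bv_{\mathrm c})>0$ matches $\rho_{\mathrm c}<1$; this holds because $\mu_2(\bv_{\mathrm c})=(1-\rho_{\mathrm c})/N$, so $K_{1,\mathrm c}$ is well defined.
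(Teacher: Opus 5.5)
Your proposal is correct and follows the same route as the paper's proof. You use the Gaussian law of $\gamma$ with $m=O(1)$ and $s\to\infty$ to get $\Pr(\gamma>0)\to 1/2$, combine this with \eqref{eq:g1lim} to obtain the two-point limit, use boundedness of sample skewness at fixed $N$ to get $\E[g_1]\to 0$, and deduce $|g_1|\to|K_{1,\mathrm c}|$; for this last step you use the reverse triangle inequality where the paper invokes the continuous mapping theorem. You also supply details the paper only asserts, namely the explicit bound $|g_1|\le\sqrt N$ and the treatment of the null event $\mu_2=0$.
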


\begin{proof}
The convergence $\Pr(\gamma>0)\to1/2$ holds because $m=O(1)$ and $s\to\infty$. 
This convergence and \eqref{eq:g1lim} give the signed two-point distributional limit. The sample skewness is bounded for fixed $N$, so the convergence also gives $\E[g_1]\to0$. The convergence of $|g_1|$ follows from the continuous mapping theorem.
\end{proof}

The corollary does not imply that the snapshot itself takes only two possible forms. From Lemma~\ref{lem:coordinate}, we recall
\begin{equation}
\bx
=
\gamma\bv+\br,
\end{equation}
where
\begin{equation}
\br
=
(I-\bv\bw^\top)\bx^*
+
(I-\bv\bw^\top)\bz.
\label{eq:br-projected-finite}
\end{equation}
The first term on the right-hand side of~\eqref{eq:br-projected-finite} is a bounded contribution from the deterministic equilibrium, whereas the second term is the bounded stochastic contribution associated with the non-critical eigendirections. At a finite distance from the bifurcation, these terms perturb the centered snapshot and can produce a continuous range of skewness values. As the transition is approached, however, $|\gamma|$ grows while $\br$ remains bounded in probability, so $P\bx$ is increasingly well approximated by $\gamma P\bv$. Therefore, snapshots with $\gamma>0$ have skewness close to $K_{1,\mathrm c}$ and those with $\gamma<0$ have skewness close to $-K_{1,\mathrm c}$, although $\bx$ itself remains continuously distributed.
This signed two-point limit is specific to skewness because skewness changes sign when all centered deviations are multiplied by $-1$. Kurtosis and Moran's $I$ are unchanged by this sign reversal, so they instead converge to single network-dependent constants.

A caveat applies to the asymptotic limits in this subsection and to~\eqref{eq:cvlimit}. They describe the regime in which the critical contribution $\gamma P\bv$ dominates the bounded remainder in~\eqref{eq:centermean1}. At fixed nonzero noise, the Ornstein--Uhlenbeck approximation may cease to be accurate before this ultimate regime is reached, because large excursions can leave the local basin of the equilibrium. Whether the regime is observable therefore depends on the noise strength, basin geometry, and rate of parameter variation. At some distances from the bifurcation where tipping does not yet occur in the presence of dynamical noise, the equilibrium profile $\bx^*(\alpha)$ and non-critical fluctuations may still have substantial effects, so the observed CV, skewness, kurtosis, and Moran's $I$ need not be close to the limits in~\eqref{eq:cvlimit}, \eqref{eq:g1lim}, \eqref{eq:g2lim}, and \eqref{eq:moranlim}. Likewise, the two-point skewness distribution in Corollary~\ref{cor:signflip} becomes visible only when $s(\alpha)$ is large relative to $|m(\alpha)|$; when $|m(\alpha)|$ dominates, the sign of $\gamma=m+\zeta$ is nearly deterministic and only one of the two skewness signs is typically observed. 
Like the asymptotic laws~\eqref{eq:cvlimit}, \eqref{eq:g1lim}, \eqref{eq:g2lim}, and \eqref{eq:moranlim} do not,
these pre-asymptotic properties do not create a network-independent warning direction for these four spatial statistics.

\section{Homogeneous networks}\label{sec:homog}

Theorem~\ref{thm:var} shows that the potentially divergent contribution of the critical right eigenvector to the spatial variance is proportional to $1-\rho=\|P\bv\|^2$. Exact cancellation occurs when the critical right eigenvector is uniform throughout a neighborhood of the bifurcation, $\bv=\pm\bone/\sqrt N$, so that $\rho=1$ and spatial centering removes this eigendirection completely. This situation arises naturally on a uniform equilibrium branch of a homogeneous network, which is the setting used in much of the classical spatial EWS literature on regular lattices~\cite{dakos2011slowing, sankaran2018implications} or their continuous counterparts~\cite{Gowda2015CNSNS, Clarke2026JPhysComplexity, tirabassi2024linear}.
We now identify a broad class of systems for which this situation arises naturally.

In the next theorem and throughout, for a function $G(u_1,u_2)$ of two arguments, we write $\partial_1 G$ and $\partial_2 G$ for its partial derivatives with respect to the first and the second argument, respectively.

\begin{theorem}[Uniform critical-eigenvector cancellation]\label{thm:lattice}
Let $A$ be the adjacency matrix of a connected undirected regular network with degree $d$, and consider the class of dynamics given by
\begin{equation}
F_i(\bx;\alpha)
=
f(x_i;\alpha)
+
D\sum_{j=1}^N A_{ij}G(x_i,x_j),
\label{eq:canonical-dyn}
\end{equation}
with homogeneous noise $B=\sigma I$. Let $x^*(\alpha)$ be a branch of solutions of
\begin{equation}
f(x^*(\alpha);\alpha)
+
Dd G(x^*(\alpha),x^*(\alpha))
=
0.
\end{equation}
Then, $\bx^*(\alpha)=x^*(\alpha)\bone$ is a uniform equilibrium branch.

Suppose that this branch is asymptotically stable for $\alpha<\alpha_{\mathrm c}$ and loses stability as $\alpha\to\alpha_{\mathrm c}^-$, and that

\begin{equation}
D\,\partial_2G(x^*(\alpha),x^*(\alpha))>0
\end{equation}
near $\alpha_{\mathrm c}$, with
\begin{equation}
D\,\partial_2G(x^*(\alpha),x^*(\alpha))
\longrightarrow
\omega_{\mathrm c}>0.
\end{equation}
Then, $S(\alpha)=0$, the critical eigenspace is $\operatorname{span}\{\bone\}$, and the normalized critical right eigenvector may be chosen as $\bv=\bone/\sqrt N$ throughout this branch. Consequently, $\rho=1$ and $\bv_{\mathrm c}=\bone/\sqrt N$ under this orientation.

Let $\mu_1=d>\mu_2\ge\cdots\ge\mu_N$ be the eigenvalues of $A$, and let $\lambda_k$ denote the corresponding eigenvalues of the stability matrix $M$. Then
\begin{equation}\label{eq:latticeV}
\E[V]
=
\frac{\sigma^2}{2(N-1)}
\sum_{k=2}^{N}\frac{1}{\lambda_k}.
\end{equation}
Moreover,
\begin{equation}\label{eq:latticelimit}
\E[V]
\longrightarrow
\frac{\sigma^2}{2\omega_{\mathrm c}(N-1)}
\sum_{k=2}^{N}\frac{1}{\mu_1-\mu_k}
<\infty.
\end{equation}
\end{theorem}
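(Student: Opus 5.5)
The plan is to compute the Jacobian on the uniform branch explicitly and then read off every claim from the spectral decomposition of $A$. First, substituting $\bx=x^*\bone$ into \eqref{eq:canonical-dyn} gives $F_i=f(x^*;\alpha)+D\,G(x^*,x^*)\sum_j A_{ij}$. This equals $f(x^*;\alpha)+DdG(x^*,x^*)=0$ because the network is $d$-regular. Hence $\bx^*=x^*\bone$ is an equilibrium branch and $S(\alpha)=\Var(x^*\bone)=0$.

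Next, differentiate $F_i$ at $x^*\bone$. Writing $a=\partial_1 G(x^*,x^*)$ and $b=\partial_2 G(x^*,x^*)$, one gets $\partial F_i/\partial x_i=f'(x^*)+Dda$ and $\partial F_i/\partial x_j=DbA_{ij}$ for $j\ne i$. Diagonal entries of $A$ vanish; if self-loops were allowed the same formula would still hold after absorbing them. Therefore
\begin{equation*}
M=c(\alpha)I-Db\,A,\qquad c(\alpha)=-f'(x^*;\alpha)-Dda.
\end{equation*}
Since $A$ is symmetric, $M$ is symmetric and shares an orthonormal eigenbasis with $A$, namely $\bone/\sqrt N$ for $\mu_1=d$ plus eigenvectors $\bu^{(k)}\perp\bone$. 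The eigenvalues are $\lambda_k=c-Db\mu_k$. Connectedness makes $\mu_1=d$ simple (Perron--Frobenius), so $\mu_1>\mu_2$. Because $Db>0$, the ordering $\lambda_1=c-Dbd<\lambda_k$ holds for every $k\ge2$. Hence the eigenvalue that reaches zero is $\lambda_1$, with eigenspace $\operatorname{span}\{\bone\}$, so $\bv=\bone/\sqrt N$, $\rho=1$ and $\bv_{\mathrm c}=\bone/\sqrt N$.

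For \eqref{eq:latticeV}, I would use Proposition~\ref{prop:decomp} with $S=0$ and the symmetric formula $C=\tfrac{\sigma^2}{2}M^{-1}$. Then $PC=\tfrac{\sigma^2}{2}\sum_{k\ge2}\lambda_k^{-1}\bu^{(k)}\bu^{(k)\top}$, because $P$ annihilates $\bone$ and fixes each $\bu^{(k)}$. Taking the trace gives $\Phi=\tfrac{\sigma^2}{2(N-1)}\sum_{k\ge2}\lambda_k^{-1}$. This identity is exact for the linearized model and needs no asymptotics; alternatively, Proposition~\ref{prop:modal} gives the same result, with $\Pi_{1\ell}=0$ and $\beta_{k\ell}=\sigma^2\delta_{k\ell}$.

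For the limit \eqref{eq:latticelimit}, write $\lambda_k=\lambda_1+Db(\mu_1-\mu_k)$. As $\alpha\to\alpha_{\mathrm c}^-$, loss of stability means $\lambda_1\to0$; this needs a brief argument. Every $\lambda_k$ with $k\ge2$ exceeds $\lambda_1$ by at least $Db(\mu_1-\mu_2)$, which stays bounded away from zero since $Db\to\omega_{\mathrm c}>0$. So the only eigenvalue that can reach zero is $\lambda_1$, and the branch loses stability through it. Then $\lambda_k\to\omega_{\mathrm c}(\mu_1-\mu_k)>0$, and each term of the finite sum converges, giving \eqref{eq:latticelimit}.

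The main obstacle is justifying that loss of stability forces $\lambda_1\to0$ rather than some other degeneracy. I would handle it with the spectral-gap argument just described. A secondary point is the implicit assumption that $c(\alpha)$ and $Db$ behave continuously along the branch, which is ensured by $C^2$ regularity of $f$ and $G$ together with the finite limit $x^*_{\mathrm c}$.
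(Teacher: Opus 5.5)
Your proposal is correct and follows the paper's proof essentially step by step. It uses the same stability matrix $M=c\,I-\omega A$ with $\omega=D\,\partial_2G(x^*,x^*)$, the shared orthonormal eigenbasis with $A$, $C=\tfrac{\sigma^2}{2}M^{-1}$ with $P$ annihilating $\bone/\sqrt N$, and $\lambda_k=\lambda_1+\omega(\mu_1-\mu_k)$ for the limit; your spectral-gap justification that $\lambda_1\to0$ is a small explicit addition to a step the paper takes for granted.
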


\begin{proof}
Because the network is regular,
$\sum_{j=1}^N A_{ij}=d$
for each $i$. Therefore, substitution of $\bx^*=x^*\bone$ into the deterministic dynamics gives
$F_i(x^*\bone;\alpha)
=
f(x^*;\alpha)
+
Dd G(x^*,x^*)$,
which vanishes by assumption. Therefore, $\bx^*=x^*\bone$ is indeed an equilibrium branch, and its uniformity gives $S(\alpha)=0$.

Along this branch, define
\begin{equation}
a(\alpha)
=
-\partial_x f(x^*(\alpha);\alpha)
-
Dd\,\partial_1G(x^*(\alpha),x^*(\alpha))
\end{equation}
and
\begin{equation}
\omega(\alpha)
=
D\,\partial_2G(x^*(\alpha),x^*(\alpha)).
\end{equation}
Differentiating the network dynamics at the uniform equilibrium gives
$M=aI-\omega A$.
Hence $M$ and $A$ have the same eigenvectors, and their eigenvalues are related by
$\lambda_k=a-\omega\mu_k$.

For a connected undirected regular network, $\mu_1=d$ is simple and its normalized eigenvector is $\bone/\sqrt N$. Because $\omega>0$, eigenvalue $\lambda_1=a-\omega\mu_1$ is the smallest eigenvalue of $M$. Therefore, as the stable uniform branch loses stability, we can choose the critical right eigenvector to tend to $\bone/\sqrt N$, so $\rho=1$.

Because $A$ and hence $M$ are symmetric and $B=\sigma I$, we obtain
$C=\frac{\sigma^2}{2}M^{-1}$.
Because $M$ is symmetric, the right eigenvectors
$\bv^{(1)},\ldots,\bv^{(N)}$
introduced above can be chosen as an orthonormal basis, and the corresponding left eigenvectors coincide with them. We choose
$\bv^{(1)}=\bone/\sqrt N$. Then, we obtain
\begin{equation}
C
=
\frac{\sigma^2}{2}
\sum_{k=1}^N
\frac{1}{\lambda_k}
\bv^{(k)}
\bv^{(k)\top}.
\end{equation}
Because
$P\bv^{(1)}=\boldsymbol0$
and
$P\bv^{(k)}=\bv^{(k)}$
for $k\ge2$, we obtain
\begin{equation}
\Tr(PC)
=
\frac{\sigma^2}{2}
\sum_{k=2}^N
\frac{1}{\lambda_k}.
\end{equation}
Because $S(\alpha)=0$, the decomposition~\eqref{eq:decomp} gives~\eqref{eq:latticeV}.
Finally, $\lambda_k=a-\omega\mu_k$ implies that
$\lambda_k
=
\lambda_1+\omega(\mu_1-\mu_k)$, $k \in \{2, \ldots, N \}$.
Because $\lambda_1\to0$ and $\omega\to\omega_{\mathrm c}$ as $\alpha \to \alpha_{\mathrm c}^-$, we obtain
$\lambda_k
\longrightarrow
\omega_{\mathrm c}(\mu_1-\mu_k)>0$,
$k \in \{2,\ldots,N \}$,
which gives~\eqref{eq:latticelimit}.

\end{proof}

The cancellation in Theorem~\ref{thm:lattice} is not restricted to dynamics of the form given by \eqref{eq:canonical-dyn}.
Another common convention is that each node responds to its aggregate network input~\cite{Thibeault2024NatPhys}. The next corollary is a variant of
Theorem~\ref{thm:lattice} for the latter class of dynamics on networks.

\begin{corollary}[Uniform cancellation for aggregate-input dynamics]\label{cor:inputcoupling}
Let $A$ be the adjacency matrix of a connected undirected regular graph with degree $d$, and let $B=\sigma I$. Consider dynamics on networks given by
\begin{equation}\label{eq:inputcoupling}
F_i(\bx;\alpha)
=
H\!\left(x_i,\sum_{j=1}^N A_{ij}x_j;\alpha\right),
\qquad
i \in \{ 1,\ldots,N \},
\end{equation}
where $H$ is continuously differentiable in its first two arguments. Suppose that $x^*(\alpha)$ is a branch of solutions of
\begin{equation}\label{eq:input-uniform-equilibrium}
H\!\left(x^*(\alpha),dx^*(\alpha);\alpha\right)=0.
\end{equation}
Then, $\bx^*(\alpha)=x^*(\alpha)\bone$ is a uniform equilibrium branch.

Suppose that this branch is asymptotically stable for $\alpha<\alpha_{\mathrm c}$ and loses stability as $\alpha\to\alpha_{\mathrm c}^-$, and suppose
$\partial_2H\!\left(x^*(\alpha),dx^*(\alpha);\alpha\right)>0$
near $\alpha_{\mathrm c}$, with
$\partial_2H\!\left(x^*(\alpha),dx^*(\alpha);\alpha\right)
\longrightarrow
\omega_{\mathrm c}>0$.
Then, the critical eigenspace is $\operatorname{span}\{\bone\}$, and the normalized critical right eigenvector may be chosen as $\bv=\bone/\sqrt N$ throughout this branch. Consequently, $\rho=1$ and $\bv_{\mathrm c}=\bone/\sqrt N$ under this orientation.
Moreover, the conclusions~\eqref{eq:latticeV} and~\eqref{eq:latticelimit} of Theorem~\ref{thm:lattice} hold.
\end{corollary}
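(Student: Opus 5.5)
The plan is to reduce Corollary~\ref{cor:inputcoupling} to the same linear-algebraic structure used in the proof of Theorem~\ref{thm:lattice}: show that the stability matrix has the form $M=aI-\omega A$, and then repeat that argument verbatim.

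First, I will confirm the equilibrium. Regularity gives $\sum_j A_{ij}x^*=dx^*$ for every $i$. Hence $F_i(x^*\bone;\alpha)=H(x^*,dx^*;\alpha)$, which vanishes by~\eqref{eq:input-uniform-equilibrium}. So $\bx^*=x^*\bone$ is a uniform equilibrium branch, and $S(\alpha)=0$.

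Next, I will differentiate $F_i$ at this equilibrium. The chain rule gives
\[
\partial F_i/\partial x_j=\partial_1H\,\delta_{ij}+\partial_2H\,A_{ij},
\]
with both partials evaluated at $(x^*,dx^*;\alpha)$. Because every node has the same arguments, these partials are the same for all $i$. Setting $a(\alpha)=-\partial_1H(x^*,dx^*;\alpha)$ and $\omega(\alpha)=\partial_2H(x^*,dx^*;\alpha)$, we obtain $M=aI-\omega A$ with $\omega>0$ near $\alpha_{\mathrm c}$ and $\omega\to\omega_{\mathrm c}>0$.

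This is exactly the matrix structure obtained in the proof of Theorem~\ref{thm:lattice}, and every subsequent step there uses only this structure, the symmetry of $A$, and $B=\sigma I$. Concretely, the eigenvalues are $\lambda_k=a-\omega\mu_k$. Perron--Frobenius for a connected regular graph gives that $\mu_1=d$ is simple with eigenvector $\bone/\sqrt N$. Since $\omega>0$, $\lambda_1=a-\omega d$ is the smallest eigenvalue and is simple. Because the branch loses stability, $\lambda_1\to0^+$, so the critical eigenspace is $\operatorname{span}\{\bone\}$ and $\rho=1$.

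For the variance formula, symmetry of $M$ with $B=\sigma I$ gives $C=\tfrac{\sigma^2}{2}M^{-1}$. Since $P\bone=\boldsymbol 0$, we get $\Tr(PC)=\tfrac{\sigma^2}{2}\sum_{k\ge2}\lambda_k^{-1}$, and combining this with $S=0$ in~\eqref{eq:decomp} yields~\eqref{eq:latticeV}. Finally, $\lambda_k=\lambda_1+\omega(\mu_1-\mu_k)\to\omega_{\mathrm c}(\mu_1-\mu_k)>0$ for $k\ge2$, which gives~\eqref{eq:latticelimit}.

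There is essentially no obstacle. The only points needing care are that continuous differentiability of $H$ justifies the chain rule, and that the positivity of $\omega$ is what places $\bone$ at the bottom of the spectrum of $M$ rather than at the top. One could alternatively observe that~\eqref{eq:inputcoupling} is not literally of the form~\eqref{eq:canonical-dyn}, so Theorem~\ref{thm:lattice} cannot be invoked directly. Rewriting its proof through $M=aI-\omega A$ avoids this issue.
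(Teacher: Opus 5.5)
Your proposal is correct and follows the paper's route: verify the uniform equilibrium via $A\bone=d\bone$, then differentiate to get $M=aI-\omega A$ with $a=-\partial_1H$ and $\omega=\partial_2H$ evaluated at $(x^*,dx^*;\alpha)$, and rerun the spectral argument from the proof of Theorem~\ref{thm:lattice}. The only difference is that you write out the steps the paper compresses into ``the remainder of the argument is the same.''
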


\begin{proof}
Because the network is regular, we obtain $A\bone=d\bone$. Therefore, substituting $\bx^*=x^*\bone$ into~\eqref{eq:inputcoupling} gives
$F_i(x^*\bone;\alpha)
=
H(x^*,dx^*;\alpha)$,
which vanishes by~\eqref{eq:input-uniform-equilibrium}. Thus, $\bx^*=x^*\bone$ is an equilibrium branch, and its uniformity gives $S(\alpha)=0$.

Along this branch, define
$a(\alpha)
=
-\partial_1H\!\left(x^*(\alpha),dx^*(\alpha);\alpha\right)$
and
$\omega(\alpha)
=
\partial_2H\!\left(x^*(\alpha),dx^*(\alpha);\alpha\right)$.
Differentiating~\eqref{eq:inputcoupling} at the uniform equilibrium gives
$M=aI-\omega A$.
The remainder of the argument is the same as in the proof of Theorem~\ref{thm:lattice}.
\end{proof}

On the uniform branch, we obtain $S(\alpha) \equiv 0$, so the spatial variance is a pure fluctuation statistic. In contrast, on a heterogeneous equilibrium branch, $S(\alpha)$ is generally nonzero even far from the transition and can increase or decrease as the control parameter changes. A statistic developed on a lattice therefore omits the deterministic background that can dominate when the same method is transferred to a heterogeneous network.

\section{Uncertainty of a spatial early warning signal}\label{sec:uncertainty}

A large expected signal is useful only relative to its snapshot-to-snapshot variability~\cite{masuda2024anticipating, yu2026covariance}. Spatial sampling offers $N$ observations at once, suggesting that a statistic should become precise as the network grows. However, near a bifurcation, the divergent part of all node fluctuations is generated by one common random coordinate. The $N$ nodes are then not independent, which causes the uncertainty of the spatial variance to remain persistent.

To study the snapshot-to-snapshot uncertainty of the spatial variance, we return to the decomposition used in the proof of Proposition~\ref{prop:decomp}, but now before taking the expectation. For an individual snapshot, we obtain
\begin{equation}\label{eq:V-realization-decomp}
V
=
S(\alpha)
+
\frac{2(P\bx^*)^\top P\bz}{N-1}
+
\frac{\bz^\top P\bz}{N-1}.
\end{equation}
The structural term $S(\alpha)$ is deterministic, whereas the second and third terms vary across stochastic realizations due to the stochastic fluctuation $\bz$. The cross term has zero expectation, while the last term is the spatial variance generated by $\bz$. We denote this last term by
\begin{equation}\label{eq:Vfluc}
V_{\mathrm{fluc}}
=
\frac{\bz^\top P\bz}{N-1}.
\end{equation}
Note that Proposition~\ref{prop:decomp} implies that
\begin{equation}\label{eq:meanVfluc}
\E[V_{\mathrm{fluc}}]
= \Phi(\alpha) =
\frac{\Tr(PC)}{N-1}.
\end{equation}
In the following text, we first quantify the uncertainty of $V_{\mathrm{fluc}}$ and then return to the full spatial variance $V$.

\begin{theorem}[Persistent relative uncertainty of spatial variance]\label{thm:varuncertainty}
For the Gaussian fluctuation vector $\bz$, we obtain
\begin{equation}\label{eq:varVfluc}
\Var[V_{\mathrm{fluc}}]
=
\frac{2\Tr(PCPC)}{(N-1)^2}.
\end{equation}
If $\sigma_{\mathrm{eff,c}}^2>0$ and $\rho_{\mathrm c}<1$, then
\begin{equation}\label{eq:relativeVfluc}
\frac{\sd[V_{\mathrm{fluc}}]}
{\E[V_{\mathrm{fluc}}]}
\longrightarrow \sqrt{2}.
\end{equation}
The same limit holds for the full spatial variance $V$, i.e., $\sd[V] / \E[V] \longrightarrow \sqrt{2}$.
\end{theorem}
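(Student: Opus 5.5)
The plan has three steps: compute the variance of the Gaussian quadratic form exactly, insert the rank-one asymptotics of Lemma~\ref{lem:rank1}, and then show that the deterministic and cross terms of~\eqref{eq:V-realization-decomp} are negligible at the divergent scale.

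\textbf{Exact variance.} For $\bz\sim\mathcal N(\boldsymbol 0,C)$ and a symmetric matrix $Q$, the standard Gaussian identity gives $\Var[\bz^\top Q\bz]=2\Tr(QCQC)$. I would derive it either from Isserlis' theorem or by writing $\bz=C^{1/2}\boldsymbol{\xi}$ and diagonalizing $C^{1/2}QC^{1/2}$, so that the quadratic form becomes a weighted sum of independent $\chi^2_1$ variables. Taking $Q=P$ and dividing by $(N-1)^2$ yields~\eqref{eq:varVfluc}.

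\textbf{Ratio for $V_{\mathrm{fluc}}$.} Write $C=s^2\bv\bv^\top+R$ with $s^2=\sigma_{\mathrm{eff}}^2/(2\lambda_1)\to\infty$ and $R=O(1)$ in operator norm (Lemma~\ref{lem:rank1}). Then
\begin{equation*}
\Tr(PCPC)=s^4(\bv^\top P\bv)^2+O(s^2)=s^4(1-\rho)^2+O(s^2),
\end{equation*}
because the cross terms are of the form $s^2\,\bv^\top PRP\bv=O(s^2)$. Likewise $\Tr(PC)=s^2(1-\rho)+O(1)$, as in Theorem~\ref{thm:var}. The hypotheses $\sigma_{\mathrm{eff,c}}^2>0$ and $\rho_{\mathrm c}<1$ guarantee that $1-\rho$ and $\sigma_{\mathrm{eff}}^2$ are bounded away from zero near $\alpha_{\mathrm c}$. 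Therefore
\begin{equation*}
\frac{\sd[V_{\mathrm{fluc}}]}{\E[V_{\mathrm{fluc}}]}=\frac{\sqrt{2}\,s^2(1-\rho)\bigl(1+O(s^{-2})\bigr)}{s^2(1-\rho)+O(1)}\longrightarrow\sqrt2 .
\end{equation*}

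\textbf{Full $V$.} By~\eqref{eq:V-realization-decomp}, $V=S+L+V_{\mathrm{fluc}}$ with $L=2(P\bx^*)^\top P\bz/(N-1)$. The structural term $S$ is deterministic and bounded (Proposition~\ref{prop:S}, or simply continuity of the branch), so $\E[V]=S+\Phi=\Phi(1+o(1))$. For the variance, the term $S$ drops out, and $\Var[L]=4(P\bx^*)^\top PCP(P\bx^*)/(N-1)^2=O(s^2)$ since $\bx^*$ is bounded. By Cauchy--Schwarz,
\begin{equation*}
\Var[V]=\Var[V_{\mathrm{fluc}}]+O\bigl(\sd[L]\,\sd[V_{\mathrm{fluc}}]\bigr)+\Var[L]=\Var[V_{\mathrm{fluc}}]+O(s\cdot s^2)+O(s^2).
\end{equation*}
The correction is therefore $o(s^4)$ relative to $\Var[V_{\mathrm{fluc}}]\sim 2s^4(1-\rho)^2/(N-1)^2$, so $\sd[V]/\E[V]\to\sqrt2$ as well. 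Alternatively, the covariance of $L$ with $V_{\mathrm{fluc}}$ vanishes exactly, since it is a third moment of a centered Gaussian, which removes the cross term altogether.

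\textbf{Main obstacle.} The one point needing care is making sure the $O(1)$ remainder in Lemma~\ref{lem:rank1} is uniform in operator norm, so that the $O(s^2)$ error terms are justified. Lemma~\ref{lem:rank1} explicitly provides this uniformity, so the step goes through directly.
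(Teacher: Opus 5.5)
Your proposal is correct and follows essentially the same route as the paper. It uses the Gaussian quadratic-form (Isserlis) identity for $\Var[V_{\mathrm{fluc}}]$. It then applies Lemma~\ref{lem:rank1} to get $\Tr(PCPC)=s^4(1-\rho)^2+O(s^2)$ and $\Tr(PC)=s^2(1-\rho)+O(1)$, and uses the realization-level decomposition with $\Var[L]=O(s^2)$. The only minor difference is that your main argument bounds the cross term by Cauchy--Schwarz, whereas the paper uses your stated alternative: it notes that the cross covariance vanishes exactly because third moments of a centered Gaussian are zero.
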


We prove Theorem~\ref{thm:varuncertainty} in Appendix~\ref{app:varuncertainty}.

The limiting value $\sqrt2$ is the same for every fixed $N$. Thus, within the rank-one critical approximation, observing more nodes does not produce the usual $N^{-1/2}$ reduction in the relative uncertainty of $V$. This is because the same random critical amplitude is shared across the nodes. This result contrasts with temporal-variance estimation, which improves when many observations are collected over time and the spacing between consecutive observations is large enough to make their temporal correlation negligible.

The CV requires a separate treatment because its denominator, $\bar{x}$, is itself random.
The uncertainty of the CV is different from that of the spatial variance. By Theorem~\ref{thm:cv}, when $\rho_{\mathrm c}>0$, the CV converges in probability to the finite deterministic limit
$\sqrt{N(1-\rho_{\mathrm c})/[(N-1)\rho_{\mathrm c}]}$, so its snapshot-to-snapshot variation becomes negligible for typical realizations sufficiently near the transition. However, the CV's ordinary moments and quantities derived from them, such as its variance, are generally not useful. At any fixed $\alpha<\alpha_{\mathrm c}$ for which $\Var(\bar x)>0$, the spatial mean $\bar x$ has a Gaussian density and can therefore be arbitrarily close to zero. Therefore, $\E[\mathrm{CV}]=\infty$ and hence $\Var[\mathrm{CV}]$ is not defined. Thus, for the CV, convergence in probability rather than its variance provides the appropriate description of the asymptotic snapshot uncertainty.

The uncertainties of the three scale-invariant spatial EWSs are also qualitatively different.

\begin{theorem}[Uncertainty of skewness, kurtosis, and Moran's $I$]\label{thm:uncertainty}
Under the hypotheses of Theorem~\ref{thm:ratios}, let
\begin{equation}
K_{1,\mathrm c}
=
\frac{\mu_3(\bv_{\mathrm c})}{\mu_2(\bv_{\mathrm c})^{3/2}}.
\end{equation}
Then
\begin{enumerate}[label=(\roman*), leftmargin=*, itemsep=2pt]
\item $\sd[g_1]\to |K_{1,\mathrm c}|$.
\item $g_2$ and $I_{\mathrm M}$ converge in $L^2$ to the deterministic constants in~\eqref{eq:g2lim} and~\eqref{eq:moranlim}, so their variances tend to zero.
\end{enumerate}
\end{theorem}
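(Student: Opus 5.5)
The plan is to upgrade the convergence statements of Theorem~\ref{thm:ratios} and Corollary~\ref{cor:signflip} to convergence of first and second moments. For this we use the fact that, for fixed $N$, all three statistics are uniformly bounded. This gives uniform integrability for free, so convergence in probability (or in distribution) carries over to moments via the bounded convergence theorem.

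The first step is to record these deterministic bounds.

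For skewness and kurtosis, write $y_i=(P\bx)_i$. Then $|\mu_3|\le \max_i|y_i|\,\mu_2$ and $\max_i y_i^2\le N\mu_2$. Hence $|g_1|\le \sqrt N$. Likewise $\mu_4\le \max_i y_i^2\,\mu_2\le N\mu_2^2$, so $g_2\le N$.

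For Moran's $I$, the ratio $\bu^\top A\bu/\bu^\top\bu$ with $\bu=P\bx$ is bounded by the operator norm of $(A+A^\top)/2$. Therefore $|I_{\mathrm M}|\le (N/W)\|A\|_2$.

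These bounds hold whenever $\mu_2(\bx)>0$, which is an event of probability one for the nondegenerate Gaussian snapshot. If $C$ is singular one would note that $P\bx=\boldsymbol 0$ has probability zero once $|\gamma|\to\infty$ and $P\bv\neq\boldsymbol 0$; any residual null event can simply be discarded.

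For part (ii), Theorem~\ref{thm:ratios} gives $g_2\to c_2$ and $I_{\mathrm M}\to c_I$ in probability, where $c_2$ and $c_I$ are the constants in \eqref{eq:g2lim} and \eqref{eq:moranlim}. Since $(g_2-c_2)^2$ is bounded by a constant, for any $\epsilon>0$ we have
$\E[(g_2-c_2)^2]\le \epsilon^2+(N+|c_2|)^2\Pr(|g_2-c_2|>\epsilon)$.
The second term tends to zero, so $g_2\to c_2$ in $L^2$. Consequently $\Var(g_2)\le \E[(g_2-c_2)^2]\to0$. The same argument applies verbatim to $I_{\mathrm M}$.

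For part (i), Corollary~\ref{cor:signflip} gives $g_1\Rightarrow \pm K_{1,\mathrm c}$, each sign with probability $1/2$. It also gives $|g_1|\to|K_{1,\mathrm c}|$ in probability. Boundedness then yields $\E[g_1^2]=\E[|g_1|^2]\to K_{1,\mathrm c}^2$, again by bounded convergence. Corollary~\ref{cor:signflip} already establishes $\E[g_1]\to0$. Hence $\Var(g_1)\to K_{1,\mathrm c}^2$, and taking square roots gives $\sd[g_1]\to|K_{1,\mathrm c}|$.

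The main obstacle is the uniform boundedness step, in particular making sure the ratios are well defined with probability one for all $\alpha$ near $\alpha_{\mathrm c}$. The moment bounds above are elementary once $\mu_2>0$. The delicate point is therefore only the degenerate event $P\bx=\boldsymbol 0$, which I would dispose of by the Gaussian nondegeneracy of $\gamma$ along the noise-excited direction, using $\sigma_{\mathrm{eff,c}}^2>0$ and $\rho_{\mathrm c}<1$. Everything else reduces to the previously established convergence in probability plus dominated convergence.
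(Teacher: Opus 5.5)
Your proposal is correct and is the paper's argument. For fixed $N$ the three statistics are uniformly bounded ($|g_1|\le\sqrt N$, $g_2\le N$, $|I_{\mathrm M}|\le (N/W)\|A\|_2$), so the convergence in probability and in distribution from Theorem~\ref{thm:ratios} and Corollary~\ref{cor:signflip} upgrades to convergence of the first two moments. You go slightly further than the paper by deriving the bounds explicitly and by checking that the event $P\bx=\boldsymbol 0$ is null. The cleanest way to see that: $PCP\neq 0$ near $\alpha_{\mathrm c}$, so some linear functional of $P\bx$ is a nondegenerate Gaussian.
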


\begin{proof}
For fixed $N$, the sample skewness is bounded. Therefore, its signed two-point distributional limit, taking $K_{1,\mathrm c}$ and $-K_{1,\mathrm c}$ each with probability $1/2$, also gives convergence of its first two moments and limiting variance $K_{1,\mathrm c}^2$. The bounds $1\le g_2\le N$ and $|I_{\mathrm M}|\le(N/W)\|A\|_2$ provide uniform integrability, so convergence in probability upgrades to convergence in $L^2$.
\end{proof}

The contrast among the EWSs is now clear. The relative standard deviation of the spatial variance tends to $\sqrt2$. When $\rho_{\mathrm c}>0$, the CV becomes concentrated in probability around the deterministic limit in Theorem~\ref{thm:cv}, but its ordinary variance is generally not finite because of rare realizations with spatial mean arbitrarily close to zero. Skewness approaches a fixed magnitude but not a fixed sign, so its uncertainty can be as large as the signal itself. Kurtosis and Moran's $I$ become asymptotically deterministic, but the constants they approach are network dependent and do not supply a universal warning direction. Their limiting precision should also not be confused with rapid convergence: if the total non-critical variance is large, the bounded remainder in Lemma~\ref{lem:coordinate} can remain important until the dynamical system is fairly close to the bifurcation.

\section{Baseline referencing}\label{sec:baseline}

Baseline referencing is a preprocessing step for computing spatial EWSs. The idea is to reference each $x_i$ to a running mean of $x_i$, denoted by $b_i$ (defined by \eqref{eq:baselinedef} below), computed from a small number of early control parameter values \cite{Bandara2026arxiv}. 
Specifically, we use either $x_i-b_i$ (subtractive variant) or $x_i/b_i$ (ratio variant) to compute any spatial EWS. For example, the subtractive variant of the spatial variance is the sample variance of $\{ x_1 - b_1, \ldots, x_N - b_N \}$. 
Baseline referencing aims to reduce persistent node-to-node heterogeneity that is not generated by critical slowing down, and it numerically outperforms the original spatial EWSs in several settings \cite{Bandara2026arxiv}.

The ratio variant is natural when node states carry different units or different natural scales, because multiplying $x_i$ and $b_i$ by the same node-specific factor leaves $x_i/b_i$ unchanged. Its analysis runs parallel to the subtractive one, and we defer it to Appendix~\ref{app:ratio-baseline-referencing}. The remainder of this section treats the subtractive variant, which is the one that admits an exact and interpretable decomposition.

Let the baseline for the subtractive variant be the average of $\ell$ independent stationary snapshots at a fixed reference parameter $\alpha_0$:
\begin{equation}\label{eq:baselinedef}
\bb=\frac1\ell\sum_{r=1}^{\ell}\bx^{(r)}(\alpha_0)=\bx^*(\alpha_0)+\boldsymbol\epsilon,
\end{equation}
where $\boldsymbol\epsilon$ has distribution $\mathcal{N}\left( \boldsymbol 0, \frac{C(\alpha_0)}{\ell} \right)$.
The baseline samples are independent of the current snapshot, and $\alpha_0$ is typically farther from the bifurcation point than the current $\alpha$ value is. Define
\begin{align}
\by^\Delta &= \bx-\bb,\\
\bde(\alpha) &= \bx^*(\alpha)-\bx^*(\alpha_0),\\
V_\Delta &= \Var(\by^\Delta).
\end{align}
Note that $V_\Delta$ is the spatial variance after the subtractive baseline referencing.
A baseline assembled from $\ell$ samples obtained at several nearby reference parameters has the same structure after replacing $\bx^*(\alpha_0)$ and $C(\alpha_0)/\ell$ by the corresponding averages. For example, if we sample one independent stationary snapshot at each $\alpha = \alpha_i$, $i \in \{ 1, \ldots, \ell \} $, then
we replace $\bx^*(\alpha_0)$ and $C(\alpha_0)/\ell$ by $\sum_{i=1}^{\ell} \bx^*(\alpha_i) / \ell$ and $\sum_{i=1}^{\ell} C(\alpha_i) / \ell^2$, respectively.

\begin{proposition}[Decomposition of the baselined variance]\label{prop:baseline}
Under the above sampling scheme,
\begin{equation}\label{eq:EVdelta}
\E[V_\Delta] = \Var(\bde(\alpha))+\Phi(\alpha)+\frac1\ell\Phi(\alpha_0).
\end{equation}
\end{proposition}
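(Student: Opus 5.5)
The plan mirrors the proof of Proposition~\ref{prop:decomp}. First write the baselined snapshot as a deterministic vector plus zero-mean noise:
\begin{equation*}
\by^\Delta=\bx-\bb=\big(\bx^*(\alpha)+\bz\big)-\big(\bx^*(\alpha_0)+\boldsymbol\epsilon\big)=\bde(\alpha)+(\bz-\boldsymbol\epsilon).
\end{equation*}
Here $\bz\sim\mathcal N(\boldsymbol0,C(\alpha))$ is the current stationary fluctuation and $\boldsymbol\epsilon\sim\mathcal N(\boldsymbol0,C(\alpha_0)/\ell)$ is the baseline error. Because the baseline samples are independent of the current snapshot, $\bz$ and $\boldsymbol\epsilon$ are independent. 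Hence $\bz-\boldsymbol\epsilon$ has mean $\boldsymbol0$ and covariance $C(\alpha)+C(\alpha_0)/\ell$; the cross-covariances vanish.

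Next use $V_\Delta=\|P\by^\Delta\|^2/(N-1)$ and expand the square:
\begin{equation*}
V_\Delta=\frac{\|P\bde\|^2}{N-1}+\frac{2(P\bde)^\top P(\bz-\boldsymbol\epsilon)}{N-1}+\frac{(\bz-\boldsymbol\epsilon)^\top P(\bz-\boldsymbol\epsilon)}{N-1}.
\end{equation*}
Take expectations term by term:
\begin{itemize}
\item The first term is deterministic and equals $\Var(\bde(\alpha))$ by the definition of the sample variance.
\item The middle term has zero expectation because $\E[\bz-\boldsymbol\epsilon]=\boldsymbol0$.
\item For the last term, use the trace identity $\E[\bu^\top P\bu]=\Tr(P\,\E[\bu\bu^\top])$. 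This gives $\Tr\big(P C(\alpha)\big)/(N-1)+\Tr\big(P C(\alpha_0)\big)/[\ell(N-1)]$.
\end{itemize}
By the definition of $\Phi$ in~\eqref{eq:decomp}, this equals $\Phi(\alpha)+\Phi(\alpha_0)/\ell$, which gives~\eqref{eq:EVdelta}.

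The only point needing care is the vanishing of the mixed terms $\E[\bz\boldsymbol\epsilon^\top]$. This rests entirely on the independence of the baseline snapshots from the current one; the baseline snapshots are recorded at a different time and parameter value. Given that independence, the covariance of $\boldsymbol\epsilon$ being $C(\alpha_0)/\ell$ follows from averaging $\ell$ independent stationary draws, as already stated in~\eqref{eq:baselinedef}.

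The same computation covers baselines built from several reference parameters. One replaces $C(\alpha_0)/\ell$ by $\sum_i C(\alpha_i)/\ell^2$, and the last term becomes the corresponding average of $\Phi(\alpha_i)$ values.
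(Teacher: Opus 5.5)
Your proof is correct and is essentially the paper's argument: the paper expands $\|P\bde+P\bz-P\boldsymbol\epsilon\|^2$ into six terms and removes the $\bz$--$\boldsymbol\epsilon$ cross term using independence, which your grouping of $\bz-\boldsymbol\epsilon$ into a single zero-mean vector with covariance $C(\alpha)+C(\alpha_0)/\ell$ accomplishes in one step. In your closing aside on multiple reference parameters, the last term becomes $\ell^{-2}\sum_{i}\Phi(\alpha_i)$, i.e., $1/\ell$ times the average of the $\Phi(\alpha_i)$ rather than the average itself, and $\bde$ must then be measured from the averaged equilibrium $\ell^{-1}\sum_i \bx^*(\alpha_i)$.
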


\begin{proof}
We have
\begin{equation}
P\by^\Delta = P\bx - P\bb = P(\bx^*(\alpha) + \bz) - P(\bx^*(\alpha_0)+\boldsymbol\epsilon) = P\bde+P\bz-P\boldsymbol\epsilon.
\label{eq:baseline-referencing-proof-1}
\end{equation}
Substitution of \eqref{eq:baseline-referencing-proof-1} in $V_\Delta=\|P\by^\Delta\|^2/(N-1)$ yields
\begin{align}
V_\Delta
&=
\frac{1}{N-1} \left[ \|P\bde\|^2
+
\|P\bz\|^2
+
\|P\boldsymbol\epsilon\|^2 \right.
\notag\\
&\quad
\left. +
2(P\bde)^\top P\bz
-
2(P\bde)^\top P\boldsymbol\epsilon
-
2(P\bz)^\top P\boldsymbol\epsilon \right].
\label{eq:baseline-referencing-proof-2}
\end{align}
Because
the vectors $\bz$ and $\boldsymbol\epsilon$ are centered,
$\E[\bz] = \E[\boldsymbol\epsilon] = 0$ so that the expectation of the fourth and fifth terms on the right-hand side of \eqref{eq:baseline-referencing-proof-2} vanishes.
Moreover, because $\bz$ and $\boldsymbol\epsilon$ are independent, the expectation of the last term also vanishes.
The first term is equal to $\Var(\bde(\alpha))$.
Because $P^\top=P$ and $P^2=P$,
\begin{equation}
\E[\|P\bz\|^2]
=
\E[\bz^\top P\bz]
=
\Tr\bigl(PC(\alpha)\bigr).
\label{eq:baseline-referencing-proof-3}
\end{equation}
Finally, we obtain
\begin{equation}
\E[\|P\boldsymbol\epsilon\|^2]
=
\Tr\bigl(P\operatorname{Cov}(\boldsymbol\epsilon)\bigr)
=
\frac{1}{\ell}\Tr\bigl(PC(\alpha_0)\bigr).
\label{eq:baseline-referencing-proof-4}
\end{equation}
Applying the definition of $\Phi$ to
\eqref{eq:baseline-referencing-proof-3} and
\eqref{eq:baseline-referencing-proof-4}
concludes the proof.
\end{proof}

Comparing~\eqref{eq:EVdelta} with~\eqref{eq:decomp} shows what the baseline referencing changes.
First, the current fluctuation term $\Phi(\alpha)$, including its critical divergence, is untouched.
Second, at the reference parameter $\alpha=\alpha_0$, the first term on the right-hand side of~\eqref{eq:EVdelta} vanishes because
$\bde(\alpha_0)
=
\bx^*(\alpha_0)-\bx^*(\alpha_0)
=
\boldsymbol0$, and hence $\Var(\bde(\alpha_0))=0$.
For the raw spatial variance evaluated at the same parameter value, however, the corresponding deterministic contribution in~\eqref{eq:decomp} is
$S(\alpha_0)=\Var(\bx^*(\alpha_0))$,
which is generally nonzero. Thus, baselining replaces the pre-existing node-to-node variation of the equilibrium profile at $\alpha = \alpha_0$ (i.e., $S(\alpha_0)$) by zero. The expected baselined variance at $\alpha_0$ therefore contains only the current fluctuation contribution 
(i.e., $\Phi(\alpha)$)
and the additional fluctuation caused by estimating the baseline from finitely many snapshots
(i.e., $\Phi(\alpha_0)/\ell$).
As $\alpha$ moves away from $\alpha_0$, however, $\Var(\bde(\alpha))$ may become appreciably positive.
Baselining therefore does not, by itself, guarantee an improvement of the spatial variance as an EWS. What it guarantees is a zero deterministic offset at $\alpha=\alpha_0$, which may remain approximately small as $\alpha$ varies. This mechanism helps explain the improved numerical performance of the baseline-referenced spatial variance reported in~\cite{Bandara2026arxiv} and is often the main practical benefit of baseline referencing.
Third, \eqref{eq:EVdelta}
also has the constant term $\Phi(\alpha_0)/\ell$. This term shifts the level of the expected variance but does not change its trend with $\alpha$; uncertainty from estimating the baseline decreases as the number $\ell$ of reference snapshots increases.

Define the baselined coefficient of variation by
\begin{equation}
\mathrm{CV}_\Delta
=
\frac{\sqrt{V_\Delta}}
{|\overline{y^\Delta}|}.
\end{equation}
As for the raw CV, the baseline affects its behavior through both the numerator and denominator. However, it does not change its ultimate critical behavior.

\begin{corollary}[CV under subtractive baselining]\label{cor:baseline-cv}
Under the sampling scheme above, let $\alpha_0$ be fixed as
$\alpha\to\alpha_{\mathrm c}^-$, and assume the hypotheses of
Theorem~\ref{thm:cv}. Assume additionally that $\Pr(\overline{y^\Delta}=0)=0$ for $\alpha<\alpha_{\mathrm c}$ sufficiently close to $\alpha_{\mathrm c}$. Then, the conclusions of
Theorem~\ref{thm:cv} remain valid with $\mathrm{CV}$ replaced by
$\mathrm{CV}_\Delta$. In particular, if
$0<\rho_{\mathrm c}<1$, then
\begin{equation}
\mathrm{CV}_\Delta
\longrightarrow
\sqrt{\frac{N(1-\rho_{\mathrm c})}
{(N-1)\rho_{\mathrm c}}}
\end{equation}
in probability. If $\rho_{\mathrm c}=1$, then
$\mathrm{CV}_\Delta\to0$, whereas if $\rho_{\mathrm c}=0$, then
$\mathrm{CV}_\Delta\to\infty$ in probability.
\end{corollary}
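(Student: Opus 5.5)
The plan is to reduce $\mathrm{CV}_\Delta$ to the same form as the raw CV in the proof of Theorem~\ref{thm:cv}. The extra baseline terms only add pieces that stay bounded in probability, and these are swamped by the divergent critical coordinate.

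Write $\by^\Delta=\bx-\bb$. By Lemma~\ref{lem:coordinate}, $\bx=\gamma\bv+\br$ with $\br=O_{\mathbb P}(1)$ and $|\gamma|\to\infty$ in probability. The baseline satisfies $\bb=\bx^*(\alpha_0)+\boldsymbol\epsilon$, where $\alpha_0$ is fixed and $\boldsymbol\epsilon\sim\mathcal N(\boldsymbol0,C(\alpha_0)/\ell)$. Since the law of $\bb$ does not depend on $\alpha$, we have $\bb=O_{\mathbb P}(1)$ uniformly as $\alpha\to\alpha_{\mathrm c}^-$. Independence of $\bb$ from the current snapshot is not needed for this step; only its tightness matters. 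Hence
\begin{equation*}
\by^\Delta=\gamma\bv+\tilde\br,\qquad \tilde\br=\br-\bb=O_{\mathbb P}(1).
\end{equation*}
Applying $P$ and spatial averaging gives the analogues of \eqref{eq:centermean1} and \eqref{eq:centermean2}:
\begin{equation*}
P\by^\Delta=\gamma P\bv+O_{\mathbb P}(1),\qquad \overline{y^\Delta}=\gamma\bar v+O_{\mathbb P}(1).
\end{equation*}

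Next, we divide the numerator $\sqrt{V_\Delta}=\|P\by^\Delta\|/\sqrt{N-1}$ and the denominator $|\overline{y^\Delta}|$ by $|\gamma|$. Using $|\gamma|^{-1}O_{\mathbb P}(1)=o_{\mathbb P}(1)$, we obtain exactly the representation \eqref{eq:CV-law} with $\mathrm{CV}$ replaced by $\mathrm{CV}_\Delta$. The assumption $\Pr(\overline{y^\Delta}=0)=0$ guarantees that the ratio is almost surely well defined, and $\Pr(\gamma=0)=0$ because $\zeta$ is Gaussian with positive variance. The three cases then follow verbatim from the proof of Theorem~\ref{thm:cv}, using $\|P\bv\|\to\sqrt{1-\rho_{\mathrm c}}$ and $|\bar v|\to\sqrt{\rho_{\mathrm c}/N}$ together with the continuous mapping theorem. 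For $\rho_{\mathrm c}=0$, the denominator tends to $0$ in probability while the numerator tends to $1/\sqrt{N-1}>0$, so $\mathrm{CV}_\Delta\to\infty$ in probability.

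The only delicate point is the uniform tightness of the remainder $\tilde\br$ as $\alpha\to\alpha_{\mathrm c}^-$. For $\br$ this is already supplied by Lemma~\ref{lem:coordinate}. For $\bb$ it is immediate because its distribution is fixed. If the baseline were instead assembled from several reference parameters, as described after \eqref{eq:baselinedef}, tightness still holds provided those parameters stay bounded away from $\alpha_{\mathrm c}$, since $C(\alpha_i)$ is then bounded. I therefore do not expect a genuine obstacle; the argument is a perturbation of Theorem~\ref{thm:cv} by an $O_{\mathbb P}(1)$ shift.
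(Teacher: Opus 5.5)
Your proposal is correct and follows essentially the same argument as the paper. You use Lemma~\ref{lem:coordinate} and the tightness of the fixed-$\alpha_0$ baseline to write $\by^\Delta=\gamma\bv+O_{\mathbb P}(1)$, recover the analogues of \eqref{eq:centermean1}--\eqref{eq:centermean2}, and then rerun the proof of Theorem~\ref{thm:cv}; your extra remarks (independence not being needed, $\Pr(\gamma=0)=0$, and the multi-reference baseline) are valid refinements rather than a different route.
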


\begin{proof}
Lemma~\ref{lem:coordinate} gives
$\bx=\gamma\bv+\br$, where
$\br=O_{\mathbb P}(1)$ and
$|\gamma|\to\infty$ in probability.
Because $\alpha_0$ is fixed, the independently estimated baseline
$\bb$ is also $O_{\mathbb P}(1)$. Hence
\begin{equation}
\by^\Delta
=
\bx-\bb
=
\gamma\bv+O_{\mathbb P}(1),
\end{equation}
and therefore
\begin{align}
P\by^\Delta
&=
\gamma P\bv+O_{\mathbb P}(1),\\
\overline{y^\Delta}
&=
\gamma\bar v+O_{\mathbb P}(1).
\end{align}
These are the same leading relations used in the proof of
Theorem~\ref{thm:cv}, so the same limits of $\mathrm{CV}_\Delta$ follow.
\end{proof}

\begin{corollary}[Fixed baselines do not change the ratio limits]\label{cor:baseline-ratio}
Let $\bb\in\mathbb R^N$ be fixed and let $\by=\bx-\bb$. The statistics $g_1$, $g_2$, and $I_{\mathrm M}$ computed from $\by$ have the same leading limits as in Theorem~\ref{thm:ratios}.
\end{corollary}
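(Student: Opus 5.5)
The plan is to reduce the statement to the same leading-order relation that drives the proof of Theorem~\ref{thm:ratios}, namely $P\by=\gamma P\bv+O_{\mathbb P}(1)$ with $|\gamma|\to\infty$ in probability. All three statistics $g_1$, $g_2$, and $I_{\mathrm M}$ depend on $\by$ only through the centered vector $P\by$, because $\mu_k(\by)=N^{-1}\sum_i (P\by)_i^k$ and the numerator and denominator of Moran's $I$ are quadratic forms in $P\by$. It therefore suffices to show that subtracting a fixed $\bb$ leaves this representation intact, with the same random amplitude $\gamma$ and the same critical direction $P\bv$.

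First, I invoke Lemma~\ref{lem:coordinate}, which gives $\bx=\gamma\bv+\br$ with $\br=O_{\mathbb P}(1)$ and $|\gamma|\to\infty$ in probability, under the standing hypotheses $\sigma_{\mathrm{eff,c}}^2>0$ and $\rho_{\mathrm c}<1$ of Theorem~\ref{thm:ratios}. Then
\begin{equation}
\by=\bx-\bb=\gamma\bv+(\br-\bb).
\end{equation}
Since $\bb$ is a fixed deterministic vector and $N$ is fixed, $\br-\bb=O_{\mathbb P}(1)$. Applying $P$ gives $P\by=\gamma P\bv+\boldsymbol\eta'$ with $\boldsymbol\eta'=O_{\mathbb P}(1)$.

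Second, I rerun the binomial expansion from the proof of Theorem~\ref{thm:ratios} verbatim, with $\boldsymbol\eta$ replaced by $\boldsymbol\eta'$. This yields $\mu_k(\by)=\gamma^k\mu_k(\bv)+O_{\mathbb P}(|\gamma|^{k-1})$ for $k=2,3,4$. Because $\mu_2(\bv_{\mathrm c})=(1-\rho_{\mathrm c})/N>0$, the powers of $\gamma$ cancel in the normalized ratios. This gives the signed limit \eqref{eq:g1lim} for $g_1$ and the deterministic limit \eqref{eq:g2lim} for $g_2$. For Moran's $I$, both the numerator $(P\by)^\top A(P\by)$ and the denominator $\|P\by\|^2$ equal $\gamma^2$ times the corresponding quadratic form in $P\bv$, plus $O_{\mathbb P}(|\gamma|)$. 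Dividing and using $\bv\to\bv_{\mathrm c}$ gives \eqref{eq:moranlim}.

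The only point needing care is that the remainder really stays bounded in probability uniformly as $\alpha\to\alpha_{\mathrm c}^-$. This holds here because $\bb$ does not depend on $\alpha$, so I expect no genuine obstacle. I would also note in passing that the same argument covers a random baseline $\bb$ independent of the snapshot and drawn at a fixed $\alpha_0$, since it is then $O_{\mathbb P}(1)$ as well. That observation matches the one used in Corollary~\ref{cor:baseline-cv}.
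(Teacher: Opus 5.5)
Your proposal is correct and follows essentially the same route as the paper. The paper notes that $P\bx=\gamma P\bv+O_{\mathbb P}(1)$ and that subtracting the fixed vector $P\bb$ changes only the bounded remainder, so the argument of Theorem~\ref{thm:ratios} carries over; your version simply writes out that rerun, including the correct value $\mu_2(\bv_{\mathrm c})=(1-\rho_{\mathrm c})/N$.
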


\begin{proof}
Equation~\eqref{eq:centermean1} gives $P\bx=\gamma P\bv+O_{\mathbb P}(1)$. Because $P\bb$ is fixed, subtracting the baseline changes only the $O_{\mathbb P}(1)$ remainder and leaves the leading term $\gamma P\bv$ unchanged.
\end{proof}

Corollary~\ref{cor:baseline-ratio} explains why baseline subtraction can markedly improve variance-based indicators but not skewness, kurtosis, or Moran's $I$.

\section{Hopf bifurcations}\label{sec:hopf}

We have assumed that one real eigenvalue of the stability matrix approaches zero. At a Hopf bifurcation, the loss of stability instead occurs through a complex-conjugate pair whose real parts approach zero while their imaginary parts remain nonzero. The growing fluctuations are therefore associated with a two-dimensional real subspace rather than with one right eigenvector. For Hopf bifurcations, we derive the resulting covariance and state the consequences for the spatial variance, the CV, skewness, kurtosis, Moran's $I$, snapshot uncertainty, and baseline referencing.
Because the resulting algebra is longer than in the real-eigenvalue case while the conclusions are largely parallel, we show the Hopf theory in full in Appendix~\ref{app:hopf}.

Several conclusions from the simple real-eigenvalue case remain qualitatively the same: noise-excited critical slowing down produces a divergent spatial variance; subtractive baseline referencing does not change its leading critical behavior; skewness, kurtosis, and Moran's $I$ do not diverge. The main qualitative difference is that a Hopf bifurcation produces a two-dimensional rather than one-dimensional critical fluctuation. As a result, when the critical subspace has nonzero spatial mean, the CV, kurtosis, and Moran's $I$ generally retain non-degenerate limiting distributions. The relative uncertainty of the spatial variance can take values between $1$ and $\sqrt{2}$ rather than converging universally to $\sqrt{2}$.

\section{Conclusions}\label{sec:discussion}

We have developed a theory of spatial EWSs for stochastic dynamics on general networks. The theory clarifies both the information and the ambiguity carried by each statistic, including whether its direction of change is universal and whether its deterministic and stochastic contributions can be separated. One practical conclusion is that, among commonly used spatial statistics, the spatial variance is the one whose fluctuation contribution generically inherits the divergence caused by critical slowing down. 
Near a simple steady-state instability, the linearized fluctuations are dominated by a single critical mode. The left critical eigenvector determines how strongly noise excites this mode, whereas the right critical eigenvector determines its observed spatial pattern.
Because spatial centering retains only the non-uniform part of that pattern, the expected spatial variance acquires the leading term $\sigma_{\mathrm{eff}}^2(1-\rho)/[2\lambda_1(N-1)]$ and diverges when the limiting critical eigendirection is both noise-excited and non-uniform. Network heterogeneity makes a non-uniform critical eigenvector very likely although it does not guarantee non-uniformity. Except in special cases, the spatial CV cancels the growing amplitude shared by its numerator and denominator and converges to a finite value. Skewness, kurtosis, and Moran's $I$ likewise divide out the growing amplitude and approach network-dependent limits that provide no universal warning direction.

The same analysis provides theoretical support for the baseline referencing proposed in~\cite{Bandara2026arxiv}. Proposition~\ref{prop:decomp} shows that the expected spatial variance combines a deterministic contribution, $S(\alpha)$, arising from differences among equilibrium node states with a stochastic contribution, $\Phi(\alpha)$, arising from fluctuations about that equilibrium. The deterministic contribution vanishes for a uniform equilibrium but is generally present along a heterogeneous equilibrium branch. It can therefore cause the spatial variance to change even when the fluctuation contribution changes in the opposite direction or not at all. Subtractive baselining removes the equilibrium offset at the reference parameter while leaving the current fluctuation term intact (Proposition~\ref{prop:baseline}). This offset is precisely the confounding contribution that theories developed for homogeneous domains have not had to confront. Our results therefore support the baseline-referenced spatial variance as a natural default candidate for heterogeneous networks, consistent with~\cite{Bandara2026arxiv}, while not guaranteeing improved performance over every parameter range.

A principal limitation of this work is its reliance on linearization. As $\lambda_1\to0$ when $\alpha\to\alpha_{\mathrm c}^-$, stochastic excursions eventually become large enough to leave the local neighborhood of the equilibrium, so the Ornstein--Uhlenbeck approximation may lose accuracy before the deterministic bifurcation is reached. We have also assumed additive white noise, a scalar state at each node, and a static network, and we have analyzed the spatial statistics themselves rather than the performance of a particular detection algorithm. Finally, our analysis is confined to standard bifurcations; other types of transitions~\cite{kuznetsov2004elements, Ashwin2012PhilTransRSocA, Boettiger2013TheorEcol} require separate treatment.

\section*{Acknowledgments}

N.M. acknowledges financial support by the Japan Science and Technology Agency (JST) Moonshot R\&D (under Grant No.\ JPMJMS2021), the National Science Foundation (under grant no.\,2204936), and JSPS KAKENHI (under grant nos.\,JP 23H03414, 24K14840, and 24K03013).
During the preparation of this manuscript, the author used Claude Opus 5.0, University of Michigan (U-M) ChatGPT 5.5, and ChatGPT 5.5 Pro for language editing, code development, and mathematical analysis assistance. The author manually verified all code and manuscript text.

\appendix

\section{Proof of Lemma~\ref{lem:rank1}\label{app:rank1-proof}}

The stationary covariance has the integral representation \cite[Section~6.2]{Sarkka2019book}
\begin{equation}\label{eq:covariance-integral}
C
=
\int_0^\infty
e^{-Mt}BB^\top e^{-M^\top t}
\,\mathrm dt.
\end{equation}
Let $\mathcal P_1=\bv\bw^\top$ and
$\mathcal Q_1=I-\mathcal P_1$. Because $\bv$ and $\bw$ have finite limits, $\mathcal P_1$ remains bounded. Continuity of $M$ along the equilibrium branch and the spectral separation in~\eqref{eq:specgapass} imply that there exist constants $K,\eta_0>0$, independent of $\alpha$ sufficiently close to $\alpha_{\mathrm c}$, such that
\begin{equation}
\|e^{-Mt}\mathcal Q_1\|_2
\le
K e^{-\eta_0t}.
\end{equation}
We substitute
$e^{-Mt}=e^{-\lambda_1t}\mathcal P_1+e^{-Mt}\mathcal Q_1$
and its transpose into~\eqref{eq:covariance-integral}. The term obtained by taking the critical component from both $e^{-Mt}$ and $e^{-M^{\top} t}$ is
\begin{equation}
\int_0^\infty
e^{-2\lambda_1t}
\mathcal P_1BB^\top\mathcal P_1^\top
\,\mathrm dt
=
\frac{\sigma_{\mathrm{eff}}^2}{2\lambda_1}
\bv\bv^\top.
\end{equation}
Each remaining term has at least one factor
$e^{-Mt}\mathcal Q_1$, or its transpose, and therefore has a uniformly bounded integral as $\alpha\to\alpha_{\mathrm c}^-$. Therefore, we obtain~\eqref{eq:rank1}.

\section{Details of the numerical simulations}\label{app:sim}

This appendix describes the numerical simulations shown in Fig.~\ref{fig:ews} and the five networks used there. The simulation protocol follows the one in our previous study~\cite{Bandara2026arxiv}.

\subsection{Dynamical system}

We use the coupled double-well dynamics~\eqref{eq:doublewell} with $r_1=1$, $r_2=3$, $r_3=5$, and noise strength $\sigma=0.1$, so that an isolated noise-free node has stable equilibria at $x_i=1$ and $x_i=5$ and an unstable one at $x_i=3$. We drive the system toward a tipping point in two ways. First, we decrease the coupling strength $D$ with the stress fixed at $u=-5$. Second, we decrease $u$ with $D=0.05$ fixed. In both cases, all nodes start in the upper state, and the transition of interest is the collapse of that state.

\subsection{Numerical method}

For each network and each choice of control parameter (i.e., $D$ or $u$), we take $100$ equally spaced values of that parameter, which we call the simulation range. At each value, we integrate Eq.~\eqref{eq:doublewell} by the Euler--Maruyama scheme with time step $\Delta t$, initial condition $x_i(0)=5$ for every $i$, and final time $T=50$, chosen to allow the state to settle near the upper equilibrium with fluctuations caused by the dynamical noise. We set $\Delta t=0.01$ for the square lattice, Montreal, and jazz player networks, $\Delta t=0.005$ for the \textit{C.\ elegans} metabolic network, and $\Delta t = 0.001$ for the largest network, i.e., the US power grid. A smaller $\Delta t$ is necessary for the larger and denser networks because the coupling term makes the explicit Euler--Maruyama scheme unstable at a larger $\Delta t$.

We regard the state of the $i$th node at time $T$, denoted by $x_i(T)$, as an equilibrated sample in the presence of the dynamical noise. The single snapshot, $\{x_1(T),\ldots,x_N(T)\}$, is the data from which the five spatial EWSs are computed at each value of the control parameter.

Under descending $D$, the $100$ values of $D$ span $[0,1]$ for the square lattice, $[0.3,1]$ for Montreal and the US power grid, $[0.3,0.6]$ for the jazz player network, and $[0.3,0.8]$ for the \textit{C.\ elegans} metabolic network. Under descending $u$, they span $[-4,0]$ for all five networks. The simulation range differs across networks because it must contain the tipping point while leaving a sufficient number of values of the control parameter before it. We verified for each run that the noise-free system remains in the upper state at the far end of the range, that no state diverges, and that at least $30$ of the $100$ values of the control parameter precede the tipping point.

\subsection{Home range}

Equilibrated states recorded after a tipping event has occurred are not relevant to early warning. Therefore, we compute the EWSs only over the home range~\cite{maclaren2025applicability}, i.e., the contiguous sequence of values of the control parameter starting from the value farthest from the transition and ending just before the first value at which any node has left the upper state. We regard the $i$th node as remaining in the upper state as long as $x_i(T)>r_2=3$. The home range consists of $88$, $91$, $79$, $78$, and $82$ of the $100$ values of the control parameter under descending $D$, and $95$, $81$, $81$, $79$, and $78$ values under descending $u$, for the square lattice, Montreal, jazz player, \textit{C.\ elegans} metabolic, and US power grid networks, respectively. The dotted vertical line in each panel of Fig.~\ref{fig:ews} indicates the first value of the control parameter at which a node has left the upper state, marking the first value outside the home range. This control parameter value is approximately where the first tipping event occurs; EWSs aim to anticipate this event before it occurs.

\subsection{Networks}

We use five networks, i.e., one synthetic network and four empirical ones, which we summarize in Table~\ref{tab:networks}. We use the largest connected component of each network and treat it as undirected and unweighted, removing multiple edges and self-loops. The square lattice is a regular graph and hence satisfies the assumption of Theorem~\ref{thm:lattice}; the other four networks are degree-heterogeneous.

\begin{table}[t]
\centering
\caption{Networks used in Fig.~\ref{fig:ews}. $N$: number of nodes. $|E|$: number of edges.}
\label{tab:networks}
\begin{tabular}{@{}lrrp{0.52\textwidth}@{}}
\toprule
Network & $N$ & $|E|$ & Description\\
\midrule
Square lattice & 100 & 200 & Two-dimensional square lattice composed of $10\times10$ nodes with periodic boundary conditions. Each node is adjacent to its four nearest neighbors. It represents the spatially homogeneous domain used in much of the classical spatial EWS literature.\\
Montreal & 29 & 75 & Network of street gangs operating in Montreal, Quebec~\cite{descormiers2011alliances}. Each node is a street gang. An edge indicates an alliance or a rivalry between two gangs. The original data do not distinguish between the two types of relationship.\\
Jazz player & 198 & 2742 & Collaboration network of jazz musicians~\cite{gleiser2003community}. Each node is a musician. Two nodes are adjacent if the corresponding musicians performed together in the same band or ensemble.\\
\textit{C.\ elegans} metabolic & 453 & 2025 & Metabolic network of the nematode \textit{Caenorhabditis elegans}~\cite{jeong2000large}. Each node is a biochemical entity such as an enzyme, a metabolite, or a transient complex. An edge indicates that two entities participate in the same biochemical reaction.\\
US power grid & 4941 & 6594 & Infrastructure network representing the Western States power grid of the USA~\cite{watts1998collective}. Each node is a generator, a substation, or a transformer. An edge represents a transmission line connecting two of them.\\
\bottomrule
\end{tabular}
\end{table}

\section{Proof of Theorem~\ref{thm:varuncertainty}}\label{app:varuncertainty}



For a zero-mean Gaussian vector, Isserlis' formula gives
\begin{equation}
\E[z_i z_j z_k z_{\ell}]
=
C_{ij}C_{k\ell}
+
C_{ik}C_{j\ell}
+
C_{i\ell}C_{jk},
\end{equation}
which leads to
\begin{equation}
\operatorname{Cov}(z_i z_j,z_k z_{\ell})
=
C_{ik}C_{j\ell}
+
C_{i\ell}C_{jk}.
\end{equation}
Using
$\bz^\top P\bz=\sum_{i,j=1}^N P_{ij}z_i z_j$, we obtain
\begin{align}
\Var(\bz^\top P\bz)
&=
\sum_{i,j,k,\ell=1}^N
P_{ij}P_{k\ell}
\operatorname{Cov}(z_i z_j,z_k z_{\ell})\\
&=
\sum_{i,j,k,\ell=1}^N
P_{ij}P_{k\ell}
\left(
C_{ik}C_{j\ell}
+
C_{i\ell}C_{jk}
\right)\\
&=
2\Tr(PCPC),
\end{align}
where the last equality uses the symmetry of $P$ and $C$.
Because $V_{\mathrm{fluc}}=(\bz^\top P\bz)/(N-1)$, this gives
\eqref{eq:varVfluc}.

Recall that
we defined $s^2 = \sigma_{\mathrm{eff}}^2 / (2\lambda_1)$ in Lemma~\ref{lem:coordinate}.
Lemma~\ref{lem:rank1} gives
\begin{equation}
PC=s^2(P\bv)\bv^\top+O(1).
\label{eq:uncertainty-PC}
\end{equation}
Because $\bv^{\top} P \bv = 1-\rho\to1-\rho_{\mathrm c}>0$, we obtain
\begin{equation}
\Tr(PC)=s^2(1-\rho)+O(1)
\label{eq:TrPC-proof}
\end{equation}
and
\begin{equation}
\Tr(PCPC)=s^4(1-\rho)^2+O(s^2).
\label{eq:TrPCPC-s4}
\end{equation}
Therefore,
\begin{equation}
\frac{\sd[V_{\mathrm{fluc}}]}{\E[V_{\mathrm{fluc}}]}
=
\frac{\sqrt{2\Tr(PCPC)}}{\Tr(PC)}
\longrightarrow \sqrt{2}.
\label{eq:TrPCPC-TrPC-proof-convergence}
\end{equation}

It remains to show that the same limit holds for the full spatial variance $V$. Equation~\eqref{eq:V-realization-decomp} yields
\begin{equation}
V
=
S(\alpha)
+
\frac{2(P\bx^*)^\top P\bz}{N-1}
+
V_{\mathrm{fluc}}.
\label{eq:V-realization-decomp-2}
\end{equation}
The first term on the right-hand side of \eqref{eq:V-realization-decomp-2} is deterministic and bounded. The second term is linear in the zero-mean Gaussian vector $\bz$, so it has zero mean.
Using the fact that $\Var(\bu^\top\bz)=\bu^\top C\bu$ for a deterministic vector $\bu$, with $\bu=P^{\top}P\bx^*=P\bx^*$, we obtain
\begin{equation}
\Var \left(\frac{2(P\bx^*)^\top P\bz}{N-1}\right)
=
\frac{4}{(N-1)^2}
(P\bx^*)^\top P C P\bx^*,
\label{eq:var-of-second-term}
\end{equation}
where we used $P^\top=P$ and $P^2=P$.
By multiplying $P$ from the right to both sides of \eqref{eq:uncertainty-PC}, we obtain
\begin{equation}
PCP
=
s^2 (P\bv)(P\bv)^\top+O(1).
\end{equation}
Because $\bx^*$ and $\bv$ remain bounded near the bifurcation, we obtain
\begin{equation}
(P\bx^*)^\top P C P\bx^*
=
(P\bx^*)^\top P C P^2\bx^*
=
s^2\big[(P\bx^*)^\top P\bv\big]^2+O(1)
=
O(s^2).
\end{equation}
Therefore, \eqref{eq:var-of-second-term} is $O(s^2)$ as $\alpha \to \alpha_{\mathrm c}^-$.

Moreover, the covariance between the second term on the right-hand side of \eqref{eq:V-realization-decomp-2}
and $V_{\mathrm{fluc}}$ is zero because all third moments of a zero-mean Gaussian vector vanish.
Combining these results gives
\begin{equation}
\Var[V]
=
\Var[V_{\mathrm{fluc}}]+O(s^2).
\end{equation}
Similarly, Proposition~\ref{prop:decomp} and the boundedness of $S(\alpha)$ give
\begin{equation}
\E[V]
=
\E[V_{\mathrm{fluc}}]+O(1).
\end{equation}
Equations~\eqref{eq:TrPC-proof}, \eqref{eq:TrPCPC-s4}, and \eqref{eq:TrPCPC-TrPC-proof-convergence} imply $\E[V_{\mathrm{fluc}}]= O(s^2)$ and $\Var[V_{\mathrm{fluc}}] = O(s^4)$ when $\sigma_{\mathrm{eff,c}}^2>0$ and $\rho_{\mathrm c}<1$. Therefore, the bounded structural term and the linear cross term are asymptotically negligible, and we obtain
\begin{equation}
\frac{\sd[V]}{\E[V]}
\longrightarrow
\sqrt{2}.
\label{eq:CV-V}
\end{equation}


\section{Ratio baseline referencing}\label{app:ratio-baseline-referencing}

As introduced in Section~\ref{sec:baseline}, the ratio variant of baseline referencing divides each node state by its node-specific baseline before one computes the spatial EWS, so that one compares the ratios $x_i/b_i$ rather than the differences $x_i-b_i$. In this section, we analyze its behavior, focusing on $V$ for simplicity.

Let
\begin{equation}
D_b=\operatorname{diag}(b_1,\ldots,b_N).
\end{equation}
We assume that its entries are bounded away from zero, i.e., $|b_i|\ge b_{\min}>0$ for all $i$. We define
\begin{equation}
\by^{\mathrm{rel}}=D_b^{-1}\bx,
\end{equation}
so that $y_i^{\mathrm{rel}}=x_i/b_i$. We consider
\begin{equation}
V_{\mathrm{rel}}
=
\Var(\by^{\mathrm{rel}})
=
\frac{\|PD_b^{-1}\bx\|^2}{N-1}
\end{equation}
as a baseline referenced spatial variance.
Because $\bx=\bx^*(\alpha)+\bz$, we obtain
\begin{equation}
PD_b^{-1}\bx
=
PD_b^{-1}\bx^*(\alpha)
+
PD_b^{-1}\bz.
\label{eq:multiplicative1}
\end{equation}
Conditional on the fixed baseline, the second term on the right-hand side of~\eqref{eq:multiplicative1} is centered because $\E[\bz]=\boldsymbol0$. Therefore, the cross term has zero conditional expectation, and
\begin{equation}\label{eq:Vrel}
\E[V_{\mathrm{rel}}\mid\bb]
=
\Var\!\left(D_b^{-1}\bx^*(\alpha)\right)
+
\frac{1}{N-1}
\Tr\!\left(
PD_b^{-1}C(\alpha)D_b^{-1}
\right).
\end{equation}
To average the right-hand side of~\eqref{eq:Vrel} over a random baseline, we require the reciprocal factors $1/b_i$ and $1/(b_i b_j)$, including $1/b_i^2$ when $i=j$, to have finite absolute expectations. An unconstrained Gaussian approximation to the baseline generally does not satisfy this condition because it assigns positive density near zero; this is why we condition on a realized baseline and impose the lower bound on $|b_i|$.

The first term on the right-hand side of~\eqref{eq:Vrel} describes the node-to-node variation of the equilibrium values after each value has been divided by its own baseline.
This term approximately vanishes at $\alpha = \alpha_0$ because
$D_b^{-1}\bx^*(\alpha_0) \approx \bone$, and may remain small as $\alpha$ varies.
The second term is the contribution of the current stochastic fluctuations, with each component $z_i$ rescaled by $b_i$.

Using Lemma~\ref{lem:rank1}, the leading contribution of the second term in~\eqref{eq:Vrel} is
\begin{equation}\label{eq:Vrelleading}
\frac{\sigma_{\mathrm{eff}}^2}{2\lambda_1}
\Var\!\left(D_b^{-1}\bv\right).
\end{equation}
If $\sigma_{\mathrm{eff,c}}^2>0$ and $\Var(D_b^{-1}\bv_{\mathrm c})>0$, the leading rate is proportional to $\lambda_1^{-1}$, and its amplitude is determined by the spatial variation of the rescaled vector whose $i$th entry is $v_i/b_i$. Ratio baseline referencing can consequently increase or decrease the leading amplitude relative to the original spatial variance.

Ratio and subtractive baseline referencing have different benefits. Ratio referencing compares node states relative to their own baseline values and is invariant to node-specific changes of units. The invariance to node-specific units carries a cost. If some $b_i$ is close to zero, the corresponding equilibrium value and fluctuation are both strongly amplified. A small number of such nodes can then dominate $V_{\mathrm{rel}}$ and make it sensitive to baseline-estimation error. This is why the baseline values should be bounded away from zero.
In contrast, subtractive referencing measures absolute changes from those baseline values and is generally more stable when some baselines are close to zero.

If the baseline is estimated once and then held fixed during monitoring, and if $\sigma_{\mathrm{eff,c}}^2>0$ and
$\Var(D_b^{-1}\bv_{\mathrm c})>0$, then the leading random contribution to $V_{\mathrm{rel}}$ is proportional to $\gamma^2$.
Therefore, by the same comparison of quadratic, linear, and bounded terms as in the proof of Theorem~\ref{thm:varuncertainty}, the relative standard deviation of $V_{\mathrm{rel}}$ over current snapshots tends to $\sqrt2$.

\section{Spatial early warning signals near a Hopf bifurcation}\label{app:hopf}

This appendix shows the full analysis summarized in Section~\ref{sec:hopf}.

\subsection{Leading covariance and expected spatial variance}

Throughout Appendix~\ref{app:hopf}, we replace Assumption~\ref{ass:spec} by the following assumption.

\begin{assumption}[Simple Hopf pair]\label{ass:hopf}
We assume that the stable equilibrium branch has the finite limit $\bx_{\mathrm c}^*$ as a Hopf bifurcation is approached with $\alpha\to\alpha_{\mathrm c}^-$.
For $\alpha<\alpha_{\mathrm c}$ sufficiently close to $\alpha_{\mathrm c}$, suppose that the real matrix $M(\alpha)$ is diagonalizable over $\mathbb C$ and has a simple complex-conjugate eigenvalue pair
\begin{equation}\label{eq:hopfpair}
\lambda_+(\alpha)
=
\eta(\alpha)+\mathrm i\Omega(\alpha),
\qquad
\lambda_-(\alpha)
=
\eta(\alpha)-\mathrm i\Omega(\alpha),
\end{equation}
where
\begin{equation}
\eta(\alpha)>0,
\qquad
\eta(\alpha)\longrightarrow0,
\qquad
\Omega(\alpha)\longrightarrow\Omega_{\mathrm c}>0.
\end{equation}
All remaining eigenvalues satisfy
\begin{equation}
\operatorname{Re}\lambda_k(\alpha)\ge\delta>0.
\end{equation}
Let $\bv_{\mathrm H}$ and $\overline{\bv_{\mathrm H}}$ be the right eigenvectors associated with $\lambda_+$ and $\lambda_-$, respectively. Let $\bw_{\mathrm H}$ and $\overline{\bw_{\mathrm H}}$ be the corresponding left eigenvectors. We choose them as part of the biorthogonal families used in Section~\ref{sec:setup}, with
\begin{equation}
\|\bv_{\mathrm H}\|_2=1,
\qquad
\bw_{\mathrm H}^{\top}\bv_{\mathrm H}=1,
\qquad
\bw_{\mathrm H}^{\top}\overline{\bv_{\mathrm H}}=0.
\end{equation}
The vectors $\bv_{\mathrm H}$ and $\bw_{\mathrm H}$ have finite limits as $\alpha\to\alpha_{\mathrm c}^-$.
\end{assumption}

For a generic Hopf bifurcation, the transversality condition gives
\begin{equation}\label{eq:hopflinear}
\eta(\alpha)
=
\kappa_{\mathrm H}(\alpha_{\mathrm c}-\alpha)
\bigl(1+o(1)\bigr),
\qquad
\kappa_{\mathrm H}>0,
\end{equation}
under the parameter orientation used here~\cite{kuznetsov2004elements}.

Define the effective noise intensity of the Hopf pair by
\begin{equation}\label{eq:sigmaH}
\sigma_{\mathrm H}^2
=
\bw_{\mathrm H}^{\top}
BB^\top
\overline{\bw_{\mathrm H}}
=
\|B^\top\bw_{\mathrm H}\|_2^2.
\end{equation}
This quantity is real and non-negative. We write $\sigma_{\mathrm H}\ge0$ for its non-negative square root and define
\begin{equation}
\sigma_{\mathrm H,c}^2
=
\lim_{\alpha\to\alpha_{\mathrm c}^-}
\sigma_{\mathrm H}^2,
\end{equation}
which exists because $B$ is fixed and $\bw_{\mathrm H}$ has a finite limit. As in the steady-state case, the left eigenvector determines how strongly the noise excites the critical fluctuations.

We also write
\begin{align}\label{eq:pqH}
\bp &= \operatorname{Re}\bv_{\mathrm H},\\
\boldsymbol q &= \operatorname{Im}\bv_{\mathrm H}.
\end{align}
The vectors $\bp$ and $\boldsymbol q$ are linearly independent because $\bv_{\mathrm H}$ is associated with a nonreal eigenvalue of a real matrix.

\begin{lemma}[Leading covariance near a Hopf bifurcation]\label{lem:hopfcov}
Under Assumption~\ref{ass:hopf},
\begin{equation}\label{eq:hopfcov}
C
=
\frac{\sigma_{\mathrm H}^2}{2\eta}
\left(
\bv_{\mathrm H}\overline{\bv_{\mathrm H}}^{\top}
+
\overline{\bv_{\mathrm H}}\bv_{\mathrm H}^{\top}
\right)
+
O(1).
\end{equation}
Equivalently,
\begin{equation}\label{eq:hopfcovreal}
C
=
\frac{\sigma_{\mathrm H}^2}{\eta}
\left(
\bp\bp^\top
+
\boldsymbol q\boldsymbol q^\top
\right)
+
O(1),
\end{equation}
where the remainder is bounded in operator norm.
\end{lemma}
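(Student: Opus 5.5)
The proof will follow the same route as the proof of Lemma~\ref{lem:rank1} in Appendix~\ref{app:rank1-proof}. It starts from the integral representation
\begin{equation*}
C=\int_0^\infty e^{-Mt}BB^\top e^{-M^\top t}\,\mathrm dt
\end{equation*}
and replaces the rank-one critical projector by the rank-two spectral projector of the Hopf pair. Define
\begin{equation*}
\mathcal P_{\mathrm H}=\bv_{\mathrm H}\bw_{\mathrm H}^\top+\overline{\bv_{\mathrm H}}\,\overline{\bw_{\mathrm H}}^\top,
\qquad
\mathcal Q_{\mathrm H}=I-\mathcal P_{\mathrm H}.
\end{equation*}
By the biorthogonality normalization in Assumption~\ref{ass:hopf}, $\mathcal P_{\mathrm H}$ is a real projector commuting with $M$. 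It is uniformly bounded because $\bv_{\mathrm H}$ and $\bw_{\mathrm H}$ have finite limits. On its range we have
\begin{equation*}
e^{-Mt}\mathcal P_{\mathrm H}=e^{-\lambda_+t}\bv_{\mathrm H}\bw_{\mathrm H}^\top+e^{-\lambda_-t}\overline{\bv_{\mathrm H}}\,\overline{\bw_{\mathrm H}}^\top.
\end{equation*}
On the complementary range, continuity of $M$ and the gap $\operatorname{Re}\lambda_k\ge\delta$ should give
\begin{equation*}
\|e^{-Mt}\mathcal Q_{\mathrm H}\|_2\le Ke^{-\eta_0t},
\end{equation*}
with $K,\eta_0$ independent of $\alpha$ near $\alpha_{\mathrm c}$. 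This is exactly as in Appendix~\ref{app:rank1-proof}, and I will invoke it in the same way.

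Next I will split $e^{-Mt}=e^{-Mt}\mathcal P_{\mathrm H}+e^{-Mt}\mathcal Q_{\mathrm H}$, and likewise for the transpose. Every term containing at least one $\mathcal Q_{\mathrm H}$ factor has integrand bounded by a constant times $e^{-\eta_0 t}$, because $|e^{-\lambda_\pm t}|=e^{-\eta t}\le1$. Those terms are therefore $O(1)$ uniformly. The critical--critical part splits into four terms:
\begin{equation*}
\int_0^\infty e^{-(\lambda_a+\lambda_b)t}\,\bv_a\bw_a^\top BB^\top\bw_b\bv_b^\top\,\mathrm dt
=\frac{\bw_a^\top BB^\top\bw_b}{\lambda_a+\lambda_b}\,\bv_a\bv_b^\top,
\qquad a,b\in\{+,-\}.
\end{equation*}
Here $\bv_+=\bv_{\mathrm H}$ and $\bv_-=\overline{\bv_{\mathrm H}}$, and similarly for the left eigenvectors. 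For the same-sign pairs the denominator is $2\lambda_\pm$, whose modulus is at least $2|\Omega|\to2\Omega_{\mathrm c}>0$, so these contributions are $O(1)$.

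The mixed pairs have denominator $\lambda_++\lambda_-=2\eta\to0$. Their numerator is $\bw_{\mathrm H}^\top BB^\top\overline{\bw_{\mathrm H}}=\sigma_{\mathrm H}^2$, or its complex conjugate, which is the same real number. This yields exactly
\begin{equation*}
\frac{\sigma_{\mathrm H}^2}{2\eta}\bigl(\bv_{\mathrm H}\overline{\bv_{\mathrm H}}^\top+\overline{\bv_{\mathrm H}}\bv_{\mathrm H}^\top\bigr),
\end{equation*}
which is~\eqref{eq:hopfcov}. The real form~\eqref{eq:hopfcovreal} then follows by substituting $\bv_{\mathrm H}=\bp+\mathrm i\boldsymbol q$. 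The imaginary parts $\pm\mathrm i(\boldsymbol q\bp^\top-\bp\boldsymbol q^\top)$ cancel, leaving $2(\bp\bp^\top+\boldsymbol q\boldsymbol q^\top)$. As a consistency check, the same four-term structure can also be read off the double sum~\eqref{eq:Cspectral}.

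The step I expect to need the most care is the uniform-in-$\alpha$ control of the non-divergent pieces. This has two parts. First, the exponential bound on $e^{-Mt}\mathcal Q_{\mathrm H}$ must hold uniformly, which requires that the spectral projectors vary continuously; I would argue this via the Riesz integral over a fixed contour separating the Hopf pair from the rest of the spectrum. Second, the same-sign terms must be bounded, which relies on $\Omega_{\mathrm c}>0$. That assumption is precisely what distinguishes the Hopf case from a double zero eigenvalue, so I will state it explicitly at that point.
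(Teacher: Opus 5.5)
Your proposal is correct and follows essentially the paper's argument. The paper reads the four Hopf--Hopf terms directly off the double sum~\eqref{eq:Cspectral}: the mixed pairs have denominator $2\eta$ and numerator $\sigma_{\mathrm H}^2$, and the same-sign pairs have denominators tending to $\pm 2\mathrm i\Omega_{\mathrm c}$. It then uses the integral representation~\eqref{eq:covariance-integral} only to bound the terms involving non-critical directions, whereas you obtain the critical block from that same integral via the rank-two projector $\mathcal P_{\mathrm H}$, which is the identical computation. Your Riesz-contour argument for the uniform bound on $e^{-Mt}\mathcal Q_{\mathrm H}$ supplies slightly more justification than the paper, which simply asserts this bound from continuity of $M$ and the spectral gap.
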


\begin{proof}
Although the covariance expansion~\eqref{eq:Cspectral} was introduced under Assumption~\ref{ass:spec}, its derivation uses only stability, diagonalizability, and the biorthogonal eigenvector expansion. It therefore applies under Assumption~\ref{ass:hopf} as well.

The terms in the covariance expansion with index pairs $(+,-)$ and $(-,+)$ have denominator
\begin{equation}
\lambda_++\lambda_-=2\eta,
\end{equation}
and
\begin{equation}
\beta_{+-}
=
\beta_{-+}
=
\bw_{\mathrm H}^{\top}BB^\top\overline{\bw_{\mathrm H}}
=
\sigma_{\mathrm H}^2.
\end{equation}
The terms with index pairs $(+,+)$ and $(-,-)$ remain bounded because their denominators approach $2\mathrm i\Omega_{\mathrm c}$ and $-2\mathrm i\Omega_{\mathrm c}$, respectively. The remaining terms involve at least one non-critical eigendirection. To control them, we use the integral representation of the covariance given by~\eqref{eq:covariance-integral}. In each of the two matrix-exponential factors, we separate the part acting on the Hopf eigenspace from the part acting on the subspace spanned by the non-critical eigenvectors. Because $M(\alpha)\to M_{\mathrm c}$ and the Hopf pair remains separated from the rest of the spectrum, every term in the covariance integral containing at least one non-critical part is $O(1)$. This proves~\eqref{eq:hopfcov}. Writing $\bv_{\mathrm H}=\bp+\mathrm i\boldsymbol q$ gives~\eqref{eq:hopfcovreal}.

\end{proof}

The quantity corresponding to $\rho$ in the steady-state theory is
\begin{equation}\label{eq:rhoH}
\rho_{\mathrm H}
=
\frac{|\bone^\top\bv_{\mathrm H}|^2}{N}.
\end{equation}
Because $\|\bv_{\mathrm H}\|_2=1$,
\begin{equation}\label{eq:centerH}
\|P\bp\|^2+\|P\boldsymbol q\|^2
=
1-\rho_{\mathrm H}.
\end{equation}
The quantity $\rho_{\mathrm H}$ is unchanged if $\bv_{\mathrm H}$ is multiplied by a complex scalar of unit modulus, which merely rotates the pair $\bp,\boldsymbol q$ within their two-dimensional real span.

For every $\alpha<\alpha_{\mathrm c}$ sufficiently close to the bifurcation, we obtain
\begin{equation}
0\le\rho_{\mathrm H}<1.
\label{eq:0<rhoH<1-Hopf}
\end{equation}
Indeed, $\rho_{\mathrm H} = 1$ would require $\bv_{\mathrm H}$ to be a complex multiple (including the case of a real multiple) of $\bone$, which is impossible for a real matrix with a nonreal eigenvalue because $M\bone$ is real.

Let
\begin{equation}
\bv_{\mathrm H,c}
=
\lim_{\alpha\to\alpha_{\mathrm c}^-}\bv_{\mathrm H}(\alpha)
\end{equation}
and define
\begin{equation}
\rho_{\mathrm H,c}
=
\frac{|\bone^\top\bv_{\mathrm H,c}|^2}{N}
=
\lim_{\alpha\to\alpha_{\mathrm c}^-}\rho_{\mathrm H}(\alpha).
\end{equation}
By continuity of the Jacobian along the equilibrium branch, $M(\alpha)\to M_{\mathrm c}\equiv M(\alpha_{\mathrm c})$. Passing to the limit in $M\bv_{\mathrm H}=\lambda_+\bv_{\mathrm H}$ gives
\begin{equation}
M_{\mathrm c}\bv_{\mathrm H,c}
=
\mathrm i\Omega_{\mathrm c}\bv_{\mathrm H,c}.
\end{equation}
Thus, $\rho_{\mathrm H,c}<1$ by the same argument that a complex multiple of $\bone$ cannot be an eigenvector of the real matrix $M_{\mathrm c}$ for the nonreal eigenvalue $\mathrm i\Omega_{\mathrm c}$. Applying the spatial-centering operator $P$ subtracts the mean across nodes from both $\bp_{\mathrm c}$ and $\boldsymbol q_{\mathrm c}$. These two centered vectors, $P\bp_{\mathrm c}$ and $P\boldsymbol q_{\mathrm c}$, cannot both vanish. Therefore, the growing Hopf fluctuation cannot be removed completely by subtracting the spatial mean. However, the two centered vectors may become linearly dependent, in which case the two-dimensional Hopf fluctuation is reduced to a single spatial direction after centering.

\begin{theorem}[Spatial variance near a Hopf bifurcation]\label{thm:hopfvar}
Under Assumption~\ref{ass:hopf},
\begin{equation}\label{eq:hopfEV}
\E[V]
=
S(\alpha)
+
\frac{\sigma_{\mathrm H}^2}{\eta}
\frac{1-\rho_{\mathrm H}}{N-1}
+
O(1).
\end{equation}
For a generic Hopf bifurcation satisfying~\eqref{eq:hopflinear} and $\sigma_{\mathrm H,c}^2>0$,
$\E[V]$ grows in proportion to $(\alpha_{\mathrm c}-\alpha)^{-1}$.
\end{theorem}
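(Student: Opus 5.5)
The argument mirrors the proofs of Proposition~\ref{prop:decomp} and Theorem~\ref{thm:var}, with Lemma~\ref{lem:hopfcov} in place of Lemma~\ref{lem:rank1}. Proposition~\ref{prop:decomp} is exact and uses only $\E[\bz]=\boldsymbol 0$. So under Assumption~\ref{ass:hopf} we again have $\E[V]=S(\alpha)+\Tr(PC)/(N-1)$, and only the fluctuation term needs to be evaluated.

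We substitute the real form~\eqref{eq:hopfcovreal} into $\Tr(PC)$. The $O(1)$ remainder is bounded in operator norm and $\|P\|_2\le1$ with $N$ fixed, so its trace contribution stays $O(1)$. For the leading part, $\Tr(P\bp\bp^\top)=\|P\bp\|^2$ and $\Tr(P\boldsymbol q\boldsymbol q^\top)=\|P\boldsymbol q\|^2$. By~\eqref{eq:centerH} their sum is $1-\rho_{\mathrm H}$, which gives
\begin{equation*}
\Tr(PC)=\frac{\sigma_{\mathrm H}^2}{\eta}(1-\rho_{\mathrm H})+O(1).
\end{equation*}
Dividing by $N-1$ yields~\eqref{eq:hopfEV}.

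Before relying on~\eqref{eq:centerH}, we check it: $\|P\bv_{\mathrm H}\|^2=\|\bv_{\mathrm H}\|^2-|\bone^\top\bv_{\mathrm H}|^2/N=1-\rho_{\mathrm H}$, and $\|P\bv_{\mathrm H}\|^2=\|P\bp\|^2+\|P\boldsymbol q\|^2$ because $P$ is real and symmetric.

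For the growth rate, we use three facts. First, $S(\alpha)\to\Var(\bx_{\mathrm c}^*)$ stays bounded, by continuity of the equilibrium branch, which has a finite limit under Assumption~\ref{ass:hopf}. Second, $\sigma_{\mathrm H}^2\to\sigma_{\mathrm H,c}^2>0$ by assumption. Third, $1-\rho_{\mathrm H}\to1-\rho_{\mathrm H,c}>0$ by the argument preceding the theorem. Hence the coefficient $\sigma_{\mathrm H}^2(1-\rho_{\mathrm H})/(N-1)$ tends to a positive limit. Combining this with~\eqref{eq:hopflinear}, $\eta=\kappa_{\mathrm H}(\alpha_{\mathrm c}-\alpha)(1+o(1))$, the fluctuation term is asymptotic to a positive constant times $(\alpha_{\mathrm c}-\alpha)^{-1}$. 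The bounded $S(\alpha)$ and the $O(1)$ remainder are then negligible.

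The main obstacle is that Lemma~\ref{lem:hopfcov} must supply a remainder bounded uniformly in operator norm. The $(+,+)$ and $(-,-)$ terms have denominators $\lambda_\pm+\lambda_\pm\to\pm2\mathrm i\Omega_{\mathrm c}$, which stay away from zero only because $\Omega_{\mathrm c}>0$. Since we take that lemma as given, the remaining work is routine bookkeeping. The only points to verify are that the remainder's trace is controlled via $|\Tr(PR)|\le N\|R\|_2$, and that the positivity of $1-\rho_{\mathrm H,c}$ comes from the fact that a real matrix cannot have a complex multiple of $\bone$ as an eigenvector for a nonreal eigenvalue.
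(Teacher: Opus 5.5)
Your proposal is correct and follows the paper's own proof essentially step for step. You substitute the real form~\eqref{eq:hopfcovreal} of Lemma~\ref{lem:hopfcov} into $\Tr(PC)$, apply~\eqref{eq:centerH} and the exact decomposition~\eqref{eq:decomp}, and read off the $(\alpha_{\mathrm c}-\alpha)^{-1}$ rate from $\sigma_{\mathrm H}^2\to\sigma_{\mathrm H,c}^2>0$, $1-\rho_{\mathrm H,c}>0$, and~\eqref{eq:hopflinear}, while your extra checks (the remainder bound $|\Tr(PR)|\le N\|R\|_2$ and the verification of~\eqref{eq:centerH}) just make explicit details the paper leaves implicit.
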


\begin{proof}
Using~\eqref{eq:hopfcovreal},
\begin{align}
\Tr(PC)
&=
\frac{\sigma_{\mathrm H}^2}{\eta}
\left[
\Tr(P\bp\bp^\top)
+
\Tr(P\boldsymbol q\boldsymbol q^\top)
\right]
+
O(1) \notag\\
&=
\frac{\sigma_{\mathrm H}^2}{\eta}
\left(
\|P\bp\|^2+\|P\boldsymbol q\|^2
\right)
+
O(1).
\end{align}
Equation~\eqref{eq:centerH}, followed by the exact decomposition~\eqref{eq:decomp}, gives~\eqref{eq:hopfEV}. Because $\sigma_{\mathrm H}^2\to\sigma_{\mathrm H,c}^2>0$ and $1-\rho_{\mathrm H}\to1-\rho_{\mathrm H,c}>0$, the stated divergence rate follows from~\eqref{eq:hopflinear}.
\end{proof}

Unlike a saddle-node bifurcation, a Hopf bifurcation does not make the equilibrium equation singular. At $\alpha=\alpha_{\mathrm c}$, the critical eigenvalues $\pm\mathrm i\Omega_{\mathrm c}$ are nonzero, and all remaining eigenvalues are bounded away from zero. Hence $M(\alpha_{\mathrm c})$ is invertible. Under the usual smoothness conditions for a Hopf bifurcation, the implicit function theorem therefore gives a smooth continuation of the equilibrium branch through the bifurcation. In particular, if $\varepsilon=\alpha_{\mathrm c}-\alpha$,
\begin{equation}\label{eq:hopfstruct}
S(\alpha)
=
S(\alpha_{\mathrm c})
+
O(\varepsilon).
\end{equation}
Thus, the structural term remains bounded and has no square-root singularity. When $\sigma_{\mathrm H,c}^2>0$, the divergence of the expected spatial variance comes from the fluctuation term in~\eqref{eq:hopfEV}.
If $\sigma_{\mathrm H,c}^2=0$, the rate at which $\sigma_{\mathrm H}^2$ approaches zero relative to $\eta$ determines whether the Hopf contribution diverges.

\subsection{Behavior of spatial EWSs near the Hopf bifurcation}

We write
\begin{equation}
\bp_{\mathrm c}
=
\operatorname{Re}\bv_{\mathrm H,c},
\qquad
\boldsymbol q_{\mathrm c}
=
\operatorname{Im}\bv_{\mathrm H,c}.
\end{equation}
Let us define
\begin{equation}\label{eq:hopfrandomvector}
\bu_{\mathrm H}
=
G_1\bp_{\mathrm c}
+
G_2\boldsymbol q_{\mathrm c},
\end{equation}
where $G_1$ and $G_2$ are independent standard normal random variables.

\begin{theorem}[Hopf snapshot limit]\label{thm:hopfsnapshot}
Assume that $\sigma_{\mathrm H,c}^2>0$. Then
\begin{equation}\label{eq:hopfsnapshot}
\frac{\sqrt{\eta}}{\sigma_{\mathrm H}}
\left(
\bx-\bx^*(\alpha)
\right)
\xrightarrow{\mathrm d}
\bu_{\mathrm H}.
\end{equation}
Consequently,
\begin{equation}\label{eq:hopfcenterlimit}
\frac{\sqrt{\eta}}{\sigma_{\mathrm H}}P\bx
\xrightarrow{\mathrm d}
P\bu_{\mathrm H},
\end{equation}
and
\begin{equation}\label{eq:hopfmeanlimit}
\frac{\sqrt{\eta}}{\sigma_{\mathrm H}}\bar x
\xrightarrow{\mathrm d}
\frac{1}{N}\bone^\top\bu_{\mathrm H}.
\end{equation}
\end{theorem}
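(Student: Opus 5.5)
The plan is to work directly with the Gaussian law of $\bz=\bx-\bx^*(\alpha)$, which is $\mathcal N(\boldsymbol0,C(\alpha))$ at each $\alpha$. The rescaled vector $\tilde\bz=(\sqrt{\eta}/\sigma_{\mathrm H})\bz$ is then exactly Gaussian with mean zero and covariance $\tilde C=(\eta/\sigma_{\mathrm H}^2)C$. The limit $\bu_{\mathrm H}$ is also a centered Gaussian, with covariance $\bp_{\mathrm c}\bp_{\mathrm c}^\top+\boldsymbol q_{\mathrm c}\boldsymbol q_{\mathrm c}^\top$ because $G_1$ and $G_2$ are independent standard normals. For centered Gaussian vectors, convergence in distribution follows from convergence of the covariance matrices, since the characteristic functions $\exp(-\tfrac12\boldsymbol t^\top\tilde C\boldsymbol t)$ then converge pointwise and L\'evy's continuity theorem applies. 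The whole proof therefore reduces to showing that
\[
\tilde C\to\bp_{\mathrm c}\bp_{\mathrm c}^\top+\boldsymbol q_{\mathrm c}\boldsymbol q_{\mathrm c}^\top .
\]

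For this step I would invoke Lemma~\ref{lem:hopfcov} in its real form~\eqref{eq:hopfcovreal}. It gives $C=(\sigma_{\mathrm H}^2/\eta)(\bp\bp^\top+\boldsymbol q\boldsymbol q^\top)+R$, where $R$ is bounded in operator norm. Multiplying by $\eta/\sigma_{\mathrm H}^2$ gives
\[
\tilde C=\bp\bp^\top+\boldsymbol q\boldsymbol q^\top+(\eta/\sigma_{\mathrm H}^2)R .
\]
The hypothesis $\sigma_{\mathrm H,c}^2>0$ keeps $\sigma_{\mathrm H}^2$ bounded away from zero near $\alpha_{\mathrm c}$, and $\eta\to0$, so the last term vanishes. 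Assumption~\ref{ass:hopf} gives $\bv_{\mathrm H}\to\bv_{\mathrm H,c}$, hence $\bp\to\bp_{\mathrm c}$ and $\boldsymbol q\to\boldsymbol q_{\mathrm c}$, and the covariance limit follows.

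The two consequences come from the continuous mapping theorem applied to fixed linear maps. For~\eqref{eq:hopfcenterlimit}, I would write $P\bx=P\bx^*(\alpha)+P\bz$ and note that $(\sqrt\eta/\sigma_{\mathrm H})P\bx^*(\alpha)\to\boldsymbol0$ deterministically. This holds because $\bx^*(\alpha)\to\bx_{\mathrm c}^*$ is bounded, $\eta\to0$, and $\sigma_{\mathrm H}$ is bounded below. Slutsky's lemma then gives the limit $P\bu_{\mathrm H}$. The mean statement~\eqref{eq:hopfmeanlimit} follows in the same way with the linear map $\bu\mapsto N^{-1}\bone^\top\bu$.

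I expect no serious obstacle. The one delicate point is that the normalization of $\bv_{\mathrm H}$ is fixed only up to a unit complex phase. I would handle it by noting that the problem does not actually arise: the proof only uses convergence of $\bv_{\mathrm H}$ along $\alpha\to\alpha_{\mathrm c}^-$, which Assumption~\ref{ass:hopf} grants. In any case, the covariance $\bp\bp^\top+\boldsymbol q\boldsymbol q^\top=\mathrm{Re}(\bv_{\mathrm H}\overline{\bv_{\mathrm H}}^\top)$ is phase-invariant, so the limiting law does not depend on the choice of phase.
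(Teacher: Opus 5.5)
Your proposal is correct and follows essentially the same route as the paper. You rescale the Gaussian fluctuation and use the real form of Lemma~\ref{lem:hopfcov}, together with $\sigma_{\mathrm H}^2\to\sigma_{\mathrm H,c}^2>0$, $\eta\to0$, $\bp\to\bp_{\mathrm c}$, and $\boldsymbol q\to\boldsymbol q_{\mathrm c}$, to obtain convergence of the covariance to that of $\bu_{\mathrm H}$. You then apply the linear maps $P$ and $N^{-1}\bone^\top$ after noting that the rescaled equilibrium term vanishes, which is exactly what the paper does. Your explicit appeal to L\'evy's continuity theorem and your remark on the phase invariance of $\bp\bp^\top+\boldsymbol q\boldsymbol q^\top$ only spell out steps the paper states more briefly.
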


\begin{proof}
The stationary fluctuation $\bz=\bx-\bx^*$ is Gaussian.
Since $\operatorname{Cov}(\bz)=C$, the covariance of the rescaled fluctuation, $\sqrt{\eta} \bz / \sigma_{\mathrm H}$, is
\begin{equation}
\operatorname{Cov}\left(
\frac{\sqrt{\eta}}{\sigma_{\mathrm H}}\bz
\right)
=
\frac{\eta}{\sigma_{\mathrm H}^2}C.
\end{equation}
Using~\eqref{eq:hopfcovreal}, we obtain
\begin{equation}
\frac{\eta}{\sigma_{\mathrm H}^2}C
=
\bp\bp^\top
+
\boldsymbol q\boldsymbol q^\top
+
O\left(\frac{\eta}{\sigma_{\mathrm H}^2}\right).
\end{equation}
Because $\sigma_{\mathrm H}^2$ tends to a positive limit,
$\eta\to0$, $\bp\to\bp_{\mathrm c}$, and
$\boldsymbol q\to\boldsymbol q_{\mathrm c}$, this covariance converges to
$\bp_{\mathrm c}\bp_{\mathrm c}^\top
+
\boldsymbol q_{\mathrm c}\boldsymbol q_{\mathrm c}^\top$.
This is the covariance of
$\bu_{\mathrm H}=G_1\bp_{\mathrm c}+G_2\boldsymbol q_{\mathrm c}$
because $G_1$ and $G_2$ are independent standard normal random variables.
Convergence of the centered Gaussian distributions therefore gives~\eqref{eq:hopfsnapshot}. Because $\bx^*(\alpha)$ is bounded and $\sqrt{\eta}/\sigma_{\mathrm H}\to0$, the rescaled equilibrium term $(\sqrt{\eta}/\sigma_{\mathrm H}) \bx^*(\alpha)$ converges to zero. Applying the linear maps $P$ and $N^{-1}\bone^\top$ to $\bx=(\bx-\bx^*)+\bx^*$ then gives~\eqref{eq:hopfcenterlimit} and~\eqref{eq:hopfmeanlimit}, respectively.

\end{proof}

Theorem~\ref{thm:hopfsnapshot} replaces the single-random-coordinate representation used for a steady-state bifurcation. A Hopf snapshot is governed by two independent Gaussian amplitudes.

We define the symmetric positive-semidefinite matrix
\begin{equation}\label{eq:hopfgram}
\mathcal A_{\mathrm H}
=
\begin{pmatrix}
\|P\bp_{\mathrm c}\|^2
&
(P\bp_{\mathrm c})^\top(P\boldsymbol q_{\mathrm c})
\\
(P\bp_{\mathrm c})^\top(P\boldsymbol q_{\mathrm c})
&
\|P\boldsymbol q_{\mathrm c}\|^2
\end{pmatrix}
\end{equation}
and
\begin{equation}\label{eq:hopfmeanvector}
\boldsymbol c_{\mathrm H}
=
\begin{pmatrix}
\bar p_{\mathrm c}\\
\bar q_{\mathrm c}
\end{pmatrix}.
\end{equation}
Then, we obtain
\begin{equation}
\Tr(\mathcal A_{\mathrm H})
=
1-\rho_{\mathrm H,c}
\end{equation}
and
\begin{equation}
N\|\boldsymbol c_{\mathrm H}\|^2
=
\rho_{\mathrm H,c}.
\end{equation}

Using the definition $V=\|P\bx\|^2/(N-1)$, we obtain
\begin{equation}
\frac{\eta}{\sigma_{\mathrm H}^2}V
=
\frac{1}{N-1}
\left\|
\frac{\sqrt{\eta}}{\sigma_{\mathrm H}}P\bx
\right\|^2.
\label{eq:V-Hopf-expression-1}
\end{equation}
The continuous mapping theorem applied to~\eqref{eq:hopfcenterlimit}, together with~\eqref{eq:V-Hopf-expression-1}, yields
\begin{equation}
\frac{\eta}{\sigma_{\mathrm H}^2}V
\xrightarrow{\mathrm d}
\frac{\|P\bu_{\mathrm H}\|^2}{N-1}.
\label{eq:PuH-converge-Hopf}
\end{equation}
Because
$P\bu_{\mathrm H}=G_1P\bp_{\mathrm c}+G_2P\boldsymbol q_{\mathrm c}$,
we obtain
\begin{equation}
\|P\bu_{\mathrm H}\|^2
=
\boldsymbol G^\top\mathcal A_{\mathrm H}\boldsymbol G,
\label{eq:PuH-GAG1-Hopf}
\end{equation}
where
\begin{equation}
\boldsymbol G
=
\begin{pmatrix}
G_1\\
G_2
\end{pmatrix}.
\end{equation}
Substitution of \eqref{eq:PuH-GAG1-Hopf} into \eqref{eq:PuH-converge-Hopf} yields
\begin{equation}\label{eq:hopfVdist}
\frac{\eta}{\sigma_{\mathrm H}^2}V
\xrightarrow{\mathrm d}
\frac{\boldsymbol G^\top\mathcal A_{\mathrm H}\boldsymbol G}{N-1}.
\end{equation}

For the following CV statements, assume that $\Pr(\bar x=0)=0$ for $\alpha<\alpha_{\mathrm c}$ sufficiently close to $\alpha_{\mathrm c}$, so that the CV is defined almost surely. The CV behaves differently from its steady-state counterpart. If
$\boldsymbol c_{\mathrm H}\ne\boldsymbol0$,
then~\eqref{eq:hopfmeanlimit} gives
\begin{equation}
\frac{\sqrt{\eta}}{\sigma_{\mathrm H}}\bar x
\xrightarrow{\mathrm d}
\overline{u_{\mathrm H}}
=
G_1\bar p_{\mathrm c}+G_2\bar q_{\mathrm c}
=
\boldsymbol c_{\mathrm H}^\top\boldsymbol G.
\label{eq:Hopf-ave-x}
\end{equation}
Because both $(\eta / \sigma_{\mathrm H}^2) V$ and $(\sqrt{\eta} / \sigma_{\mathrm H}) |\bar{x}|$ are continuous functions of the same scaled snapshot in Theorem~\ref{thm:hopfsnapshot}, these two quantities converge jointly:
\begin{equation}
\left(
\frac{\sqrt{\eta}}{\sigma_{\mathrm H}}\sqrt V,
\frac{\sqrt{\eta}}{\sigma_{\mathrm H}}|\bar x|
\right)
\xrightarrow{\mathrm d}
\left(
\sqrt{
\frac{\boldsymbol G^\top\mathcal A_{\mathrm H}\boldsymbol G}{N-1}
},
|\boldsymbol c_{\mathrm H}^\top\boldsymbol G|
\right).
\end{equation}
Because the limit of $(\sqrt{\eta}  / \sigma_{\mathrm H}) |\bar{x}|$ is nonzero with probability one, the continuous mapping theorem gives
\begin{equation}\label{eq:hopfcvlimit}
\mathrm{CV}
=
\frac{
(\sqrt{\eta}/\sigma_{\mathrm H})\sqrt V
}{
(\sqrt{\eta}/\sigma_{\mathrm H})|\bar x|
}
\xrightarrow{\mathrm d}
\frac{
\sqrt{\boldsymbol G^\top\mathcal A_{\mathrm H}\boldsymbol G/(N-1)}
}{
|\boldsymbol c_{\mathrm H}^\top\boldsymbol G|
}.
\end{equation}
Because
$\boldsymbol c_{\mathrm H}^\top\boldsymbol G
=
G_1 \bar p_{\mathrm c} + G_2 \bar q_{\mathrm c}$
is Gaussian with variance
$\bar p_{\mathrm c}^{\,2}+\bar q_{\mathrm c}^{\,2}>0$,
the limiting denominator is nonzero with probability one.
The right-hand side of~\eqref{eq:hopfcvlimit} is generally a non-degenerate random variable. The numerator and denominator depend differently on the two Gaussian random variables. As in Section~\ref{sec:uncertainty}, the Gaussian denominator can be arbitrarily close to zero, so moments of the limiting CV are generally not finite.

If $\boldsymbol c_{\mathrm H}=\boldsymbol0$, or equivalently $\rho_{\mathrm H,c}=0$, then
$(\sqrt{\eta}/\sigma_{\mathrm H})|\bar x|$
converges to zero in probability. At the same time,
$(\sqrt{\eta}/\sigma_{\mathrm H})\sqrt V$
converges in distribution to
$\sqrt{\boldsymbol G^\top\mathcal A_{\mathrm H}\boldsymbol G/(N-1)}$.
This limiting random variable is strictly positive with probability one because $\mathcal A_{\mathrm H}$ is positive semidefinite and
$\Tr(\mathcal A_{\mathrm H})=1-\rho_{\mathrm H,c}=1$.
Combining the positive limiting numerator with the denominator's convergence to zero gives
$\mathrm{CV}\longrightarrow\infty$
in probability.

The remaining spatial EWSs are normalized statistics. As in the real-eigenvalue case discussed in the main text, this normalization divides out the growing critical amplitude. The difference at a Hopf bifurcation is that what survives is not a single fixed eigenvector but a random linear combination of the two critical directions, so the limits below are random variables rather than deterministic constants.

\begin{theorem}[Skewness, kurtosis, and Moran's $I$ near a Hopf bifurcation]\label{thm:hopfratios}
Under the hypotheses of Theorem~\ref{thm:hopfsnapshot}, let $\Theta$ be uniformly distributed on $[0,2\pi)$ and define
\begin{equation}\label{eq:hopfphasevector}
\bu_\Theta
=
(\cos\Theta)\bp_{\mathrm c}
+
(\sin\Theta)\boldsymbol q_{\mathrm c}.
\end{equation}
Here, as in Section~\ref{sec:setup}, $\mu_k(\bu_\Theta)$ denotes the $k$th spatial central moment across the $N$ components of $\bu_\Theta$, i.e.,
\begin{equation}
\mu_k(\bu_\Theta)
=
\frac{1}{N}
\sum_{i=1}^N(P\bu_\Theta)_i^k.
\end{equation}
For each realization of $\Theta$, this quantity is a scalar.
Then,
\begin{equation}\label{eq:hopfg1}
g_1
\xrightarrow{\mathrm d}
\frac{\mu_3(\bu_\Theta)}
{\mu_2(\bu_\Theta)^{3/2}},
\end{equation}
\begin{equation}\label{eq:hopfg2}
g_2
\xrightarrow{\mathrm d}
\frac{\mu_4(\bu_\Theta)}
{\mu_2(\bu_\Theta)^2},
\end{equation}
and
\begin{equation}\label{eq:hopfmoran}
I_{\mathrm M}
\xrightarrow{\mathrm d}
\frac{N}{W}
\frac{
(P\bu_\Theta)^\top A(P\bu_\Theta)
}{
(P\bu_\Theta)^\top(P\bu_\Theta)
}.
\end{equation}
The quantities in the denominators are nonzero with probability one. None of these three EWSs diverges.
\end{theorem}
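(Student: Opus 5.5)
The plan is to build on the scaled snapshot limit in Theorem~\ref{thm:hopfsnapshot}. Each of the three statistics is a function only of the centered vector $P\bx$, and each is invariant under multiplication of $P\bx$ by a positive scalar ($g_2$ and $I_{\mathrm M}$ are even invariant under any nonzero scalar). We may therefore replace $P\bx$ by $(\sqrt{\eta}/\sigma_{\mathrm H})P\bx$ without changing $g_1$, $g_2$, or $I_{\mathrm M}$. By~\eqref{eq:hopfcenterlimit}, this rescaled vector converges in distribution to $P\bu_{\mathrm H}=G_1P\bp_{\mathrm c}+G_2P\boldsymbol q_{\mathrm c}$.

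The key steps, in order, are as follows.

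First, I would write each statistic as $h(P\bx)$, where
\[
h_1(\by)=\frac{N^{-1}\sum_i y_i^3}{(N^{-1}\sum_i y_i^2)^{3/2}},\qquad h_2(\by)=\frac{N^{-1}\sum_i y_i^4}{(N^{-1}\sum_i y_i^2)^{2}},\qquad h_3(\by)=\frac{N}{W}\frac{\by^\top A\by}{\by^\top\by}.
\]
These maps are continuous on $\mathbb R^N\setminus\{\boldsymbol 0\}$ and are homogeneous of degree $0$ in the sense described above. Here I use $P^2=P$, so that $\mu_k(\bx)=N^{-1}\sum_i (P\bx)_i^k$ and the Moran numerator equals $(P\bx)^\top A(P\bx)$.

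Second, the continuous mapping theorem gives $h(P\bx)\xrightarrow{\mathrm d} h(P\bu_{\mathrm H})$, provided that $\Pr(P\bu_{\mathrm H}=\boldsymbol 0)=0$.

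Third, I would pass to polar coordinates: $(G_1,G_2)=R(\cos\Theta,\sin\Theta)$, where $R>0$ almost surely, $\Theta$ is uniform on $[0,2\pi)$, and $\Theta$ is independent of $R$ because the standard bivariate Gaussian is rotationally invariant. Then $P\bu_{\mathrm H}=R\,P\bu_\Theta$. Since $R>0$, degree-$0$ homogeneity removes $R$ and yields $h(P\bu_\Theta)$. This is exactly~\eqref{eq:hopfg1}--\eqref{eq:hopfmoran}. The limits are finite random variables, which gives the non-divergence claim.

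The main obstacle is the nondegeneracy condition: we need $P\bu_\Theta\neq\boldsymbol 0$, equivalently $\mu_2(\bu_\Theta)>0$, with probability one. The argument is as follows. The map $\theta\mapsto P\bu_\theta=\cos\theta\,P\bp_{\mathrm c}+\sin\theta\,P\boldsymbol q_{\mathrm c}$ is a linear image of the unit circle, and it vanishes at some $\theta$ only if $P\bp_{\mathrm c}$ and $P\boldsymbol q_{\mathrm c}$ are linearly dependent. I showed earlier, via $\rho_{\mathrm H,c}<1$, that $P\bp_{\mathrm c}$ and $P\boldsymbol q_{\mathrm c}$ are not both zero. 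Two cases follow. If they are linearly independent, $P\bu_\theta\neq\boldsymbol 0$ for every $\theta$. If they are dependent but not both zero, then $P\bu_\theta=\boldsymbol 0$ for exactly two antipodal values of $\theta$, which form a null set under the uniform law of $\Theta$. In either case the denominators are nonzero almost surely, and the discontinuity set $\{\boldsymbol 0\}$ of $h$ has zero probability under the law of $P\bu_{\mathrm H}$, because $\{P\bu_{\mathrm H}=\boldsymbol 0\}\subseteq\{R=0\}\cup\{P\bu_\Theta=\boldsymbol 0\}$. This justifies the continuous mapping step.

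A small extra check is needed for $g_1$ under rescaling. Because the factor $\sqrt{\eta}/\sigma_{\mathrm H}$ is positive, $g_1$ is preserved exactly. Unlike the real-eigenvalue case, no sign bookkeeping is required, because the sign is absorbed into the uniform phase $\Theta$.
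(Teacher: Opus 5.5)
Your proposal is correct and follows essentially the same route as the paper: invariance of $g_1$, $g_2$, and $I_{\mathrm M}$ under positive rescaling of $P\bx$, the limit $P\bu_{\mathrm H}=R\,P\bu_\Theta$ from \eqref{eq:hopfcenterlimit} in polar coordinates, the continuous mapping theorem, and almost-sure nondegeneracy of $P\bu_\Theta$ because $P\bp_{\mathrm c}$ and $P\boldsymbol q_{\mathrm c}$ cannot both vanish. Your explicit check that the discontinuity point $\boldsymbol 0$ of $h$ is a null set for the law of $P\bu_{\mathrm H}$, and your split into the linearly independent and dependent cases (with exactly two antipodal zeros in the latter), state slightly more carefully what the paper's proof leaves implicit.
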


\begin{proof}
We write
\begin{equation}
R
=
\sqrt{G_1^2+G_2^2}.
\end{equation}
Then, $R>0$ with probability one, and we can write
\begin{equation}
\begin{cases}
G_1 &= R\cos\Theta,\\
G_2 &= R\sin\Theta,
\end{cases}
\end{equation}
where $\Theta$ is uniformly distributed on $[0,2\pi)$. Therefore,
\begin{equation}
\bu_{\mathrm H}
=
R\bu_\Theta.
\end{equation}

Using~\eqref{eq:hopfcenterlimit}, we obtain
\begin{equation}
\frac{\sqrt{\eta}}{\sigma_{\mathrm H}}P\bx
\xrightarrow{\mathrm d}
P\bu_{\mathrm H}
=
R P\bu_\Theta.
\end{equation}
Skewness, kurtosis, and Moran's $I$ are unchanged when the centered snapshot is multiplied by the positive scalar $R$. Therefore, the common factor $R$ cancels from all three statistics. Applying the continuous mapping theorem gives~\eqref{eq:hopfg1}, \eqref{eq:hopfg2}, and \eqref{eq:hopfmoran}.

Finally, $P\bp_{\mathrm c}$ and $P\boldsymbol q_{\mathrm c}$ cannot both vanish, because that would imply $\rho_{\mathrm H,c}=1$, which is impossible for a Hopf bifurcation (see \eqref{eq:0<rhoH<1-Hopf}). Therefore,
\begin{equation}
P\bu_\Theta
=
(\cos\Theta)P\bp_{\mathrm c}
+
(\sin\Theta)P\boldsymbol q_{\mathrm c}
\end{equation}
can vanish, if at all, only for isolated values of $\Theta$. Because $\Theta$ has a continuous uniform distribution, these values of $\Theta$ occur with probability zero. Moreover,
\begin{equation}
\mu_2(\bu_\Theta)
=
\frac{1}{N}\|P\bu_\Theta\|^2.
\end{equation}
Therefore, $\mu_2(\bu_\Theta)>0$ with probability one, guaranteeing that the denominators in~\eqref{eq:hopfg1}, \eqref{eq:hopfg2}, and \eqref{eq:hopfmoran} are nonzero with probability one.

\end{proof}

A snapshot retains a random direction within the two-dimensional real critical subspace spanned by $\bp_{\mathrm c}$ and $\boldsymbol q_{\mathrm c}$. For this reason, Theorem~\ref{thm:hopfratios} shows that $g_1$, $g_2$, and Moran's $I$
generally remain random even arbitrarily close to the bifurcation. This result differs from the rank-one result in Theorem~\ref{thm:ratios}, especially for kurtosis and Moran's $I$, which converge to deterministic constants.

The transformation $\Theta\mapsto\Theta+\pi$ changes $\bu_\Theta$ to $-\bu_\Theta$. Therefore, the limiting skewness distribution is symmetric about zero. Because the sample skewness is bounded for fixed $N$, convergence in distribution also gives convergence of its expectation, and hence
$\E[g_1]\longrightarrow0$.

\subsection{Homogeneous networks}

The symmetric homogeneous scalar cases treated in Theorem~\ref{thm:lattice} and Corollary~\ref{cor:inputcoupling} have stability matrices of the form $aI-\omega A$ and hence have only real eigenvalues. They therefore cannot undergo a Hopf bifurcation under our hypotheses. Within the scalar-state framework of~\eqref{eq:sde}, a Hopf bifurcation requires a non-symmetric stability matrix.

\subsection{Snapshot uncertainty}

The rank-two structure also changes the snapshot-to-snapshot uncertainty of the spatial variance.

\begin{theorem}[Relative uncertainty at a Hopf bifurcation]\label{thm:hopfuncertainty}
Under the hypotheses of Theorem~\ref{thm:hopfsnapshot},
\begin{equation}\label{eq:hopfrelativeuncertainty}
\frac{\sd[V]}{\E[V]}
\longrightarrow
\frac{
\sqrt{2\Tr(\mathcal A_{\mathrm H}^2)}
}{
\Tr(\mathcal A_{\mathrm H})
}.
\end{equation}
If the two eigenvalues of $\mathcal A_{\mathrm H}$ are denoted by $\nu_1$ and $\nu_2$, then the limit is
\begin{equation}
\frac{\sqrt{2(\nu_1^2+\nu_2^2)}}{\nu_1+\nu_2}.
\end{equation}
This limiting value lies between $1$ and $\sqrt2$.
\end{theorem}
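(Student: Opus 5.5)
The plan is to repeat the argument in Appendix~\ref{app:varuncertainty}, with the rank-one expansion of Lemma~\ref{lem:rank1} replaced by the rank-two expansion of Lemma~\ref{lem:hopfcov}. I would use the exact decomposition~\eqref{eq:V-realization-decomp}, $V=S(\alpha)+2(P\bx^*)^\top P\bz/(N-1)+V_{\mathrm{fluc}}$. Isserlis' formula still gives $\Var[V_{\mathrm{fluc}}]=2\Tr(PCPC)/(N-1)^2$, because that step used only Gaussianity and the symmetry of $C$ and $P$. Likewise $\E[V_{\mathrm{fluc}}]=\Tr(PC)/(N-1)$. So the task reduces to the leading asymptotics of $\Tr(PC)$ and $\Tr(PCPC)$ under Assumption~\ref{ass:hopf}.

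Write $h=\sigma_{\mathrm H}^2/\eta\to\infty$, which diverges because $\sigma_{\mathrm H,c}^2>0$. By~\eqref{eq:hopfcovreal}, $C=h(\bp\bp^\top+\boldsymbol q\boldsymbol q^\top)+O(1)$. Introduce the $N\times 2$ matrix $Q=[P\bp,\;P\boldsymbol q]$, so that $PCP=hQQ^\top+O(1)$. Then $\Tr(PC)=\Tr(PCP)=h\Tr(Q^\top Q)+O(1)$. Since $Q^\top Q\to\mathcal A_{\mathrm H}$ by~\eqref{eq:hopfgram}, this gives $\Tr(PC)=h\Tr(\mathcal A_{\mathrm H})+o(h)$. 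Similarly, $\Tr(PCPC)=\Tr((PCP)^2)=h^2\Tr\big((Q^\top Q)^2\big)+O(h)$, because the cross terms are one power of $h$ times bounded matrices; hence $\Tr(PCPC)=h^2\Tr(\mathcal A_{\mathrm H}^2)+o(h^2)$. For nondegeneracy I would use $\Tr(\mathcal A_{\mathrm H})=1-\rho_{\mathrm H,c}>0$, which holds by~\eqref{eq:0<rhoH<1-Hopf} and its limiting version; consequently $\Tr(\mathcal A_{\mathrm H}^2)>0$ too. Taking the ratio gives $\sd[V_{\mathrm{fluc}}]/\E[V_{\mathrm{fluc}}]\to\sqrt{2\Tr(\mathcal A_{\mathrm H}^2)}/\Tr(\mathcal A_{\mathrm H})$.

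Next I would pass from $V_{\mathrm{fluc}}$ to $V$, exactly as in the real case. The structural term $S(\alpha)$ is deterministic and bounded by~\eqref{eq:hopfstruct}. The linear cross term has mean zero and variance $4(P\bx^*)^\top PCP(P\bx^*)/(N-1)^2=O(h)$. Its covariance with $V_{\mathrm{fluc}}$ vanishes because Gaussian third moments vanish. Thus $\Var[V]=\Var[V_{\mathrm{fluc}}]+O(h)$ and $\E[V]=\E[V_{\mathrm{fluc}}]+O(1)$, while the leading orders are $h^2$ and $h$ respectively. The first display~\eqref{eq:hopfrelativeuncertainty} follows. Diagonalizing the symmetric positive-semidefinite matrix $\mathcal A_{\mathrm H}$ rewrites the limit as $\sqrt{2(\nu_1^2+\nu_2^2)}/(\nu_1+\nu_2)$.

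Finally I would prove the bounds, with $\nu_1,\nu_2\ge0$ and $\nu_1+\nu_2>0$. The upper bound follows from $\nu_1^2+\nu_2^2\le(\nu_1+\nu_2)^2$, with equality when one eigenvalue is zero, i.e.\ when $P\bp_{\mathrm c}$ and $P\boldsymbol q_{\mathrm c}$ are dependent; this recovers the rank-one value $\sqrt2$. The lower bound follows from $2(\nu_1^2+\nu_2^2)\ge(\nu_1+\nu_2)^2$, with equality iff $\nu_1=\nu_2$. I expect no serious obstacle. The point needing the most care is making sure the $O(1)$ remainder in Lemma~\ref{lem:hopfcov} and the convergence $Q\to Q_{\mathrm c}$ contribute only $o(h)$ and $o(h^2)$. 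This needs only the uniform boundedness of $\bp$ and $\boldsymbol q$ together with $\Tr\mathcal A_{\mathrm H}>0$.
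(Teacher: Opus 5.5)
Your proposal is correct and follows essentially the same route as the paper. You use the rank-two covariance expansion to get $\Tr(PC)$ and $\Tr(PCPC)$ at orders $h$ and $h^2$ with Gram matrix $\mathcal A_{\mathrm H}$, you dispose of the bounded structural term and the Gaussian linear cross term exactly as in the real-eigenvalue case, and you obtain the bounds from the same eigenvalue inequalities. The only differences are presentational: you package the expansion through $Q=[P\bp,\,P\boldsymbol q]$, and you make explicit the nondegeneracy $\Tr(\mathcal A_{\mathrm H})=1-\rho_{\mathrm H,c}>0$ that the paper uses implicitly.
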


\begin{proof}
Let
\begin{equation}
r_{\mathrm H}(\alpha)
=
\frac{\sigma_{\mathrm H}^2}{\eta}.
\end{equation}
Under the hypotheses of Theorem~\ref{thm:hopfsnapshot},
$r_{\mathrm H}=\sigma_{\mathrm H}^2/\eta\to\infty$.
Equation~\eqref{eq:hopfcovreal} gives
\begin{equation}
C
=
r_{\mathrm H}
\left(
\bp\bp^\top+\boldsymbol q\boldsymbol q^\top
\right)
+
O(1).
\end{equation}
Therefore, we obtain
\begin{align}
\Tr(PC)
&=
r_{\mathrm H}
\left(
\|P\bp\|^2+\|P\boldsymbol q\|^2
\right)
+
O(1) \notag\\
&=
r_{\mathrm H}
\left[
\Tr(\mathcal A_{\mathrm H})+o(1)
\right]
+
O(1),
\label{eq:TrPC-Hopf-snapshot}
\end{align}
where the second equality follows from
$\bp\to\bp_{\mathrm c}$ and
$\boldsymbol q\to\boldsymbol q_{\mathrm c}$.

For computing $\Tr(PCPC)$, we write
$Q_{\mathrm H}=\bp\bp^\top+\boldsymbol q\boldsymbol q^\top$.
Expanding
$C=r_{\mathrm H}Q_{\mathrm H}+O(1)$ gives
\begin{equation}
\Tr(PCPC)
=
r_{\mathrm H}^2
\Tr(PQ_{\mathrm H}PQ_{\mathrm H})
+
O(r_{\mathrm H}).
\end{equation}
Moreover, it follows that
\begin{align}
\Tr(PQ_{\mathrm H}PQ_{\mathrm H})
&=
\|P\bp\|^4
+
2\left[(P\bp)^\top P\boldsymbol q\right]^2
+
\|P\boldsymbol q\|^4 \notag\\
&=
\Tr(\mathcal A_{\mathrm H}^2)+o(1).
\end{align}
Therefore, we obtain
\begin{equation}
\Tr(PCPC)
=
r_{\mathrm H}^2
\left[
\Tr(\mathcal A_{\mathrm H}^2)+o(1)
\right]
+
O(r_{\mathrm H}).
\label{eq:TrPCPC-Hopf-snapshot}
\end{equation}

By substituting
\eqref{eq:TrPC-Hopf-snapshot} and \eqref{eq:TrPCPC-Hopf-snapshot}
into \eqref{eq:meanVfluc} and~\eqref{eq:varVfluc}, respectively, we obtain
\begin{equation}
\frac{\sd[V_{\mathrm{fluc}}]}
{\E[V_{\mathrm{fluc}}]}
\longrightarrow
\frac{
\sqrt{2\Tr(\mathcal A_{\mathrm H}^2)}
}{
\Tr(\mathcal A_{\mathrm H})
}.
\end{equation}

It remains to pass from $V_{\mathrm{fluc}}$ to the full spatial variance. The realization-level decomposition~\eqref{eq:V-realization-decomp} consists of the bounded deterministic term $S(\alpha)$, a term linear in $\bz$, and $V_{\mathrm{fluc}}$. From~\eqref{eq:hopfcovreal}, the variance of the linear term is $O(r_{\mathrm H})$, whereas $\Var[V_{\mathrm{fluc}}]=O(r_{\mathrm H}^2)$. Their covariance is zero because all third moments of the centered Gaussian vector $\bz$ vanish. Therefore, we conclude
\begin{equation}
\Var[V]
=
\Var[V_{\mathrm{fluc}}]
+
O(r_{\mathrm H})
\end{equation}
and
\begin{equation}
\E[V]
=
\E[V_{\mathrm{fluc}}]
+
O(1).
\end{equation}
Thus, the full spatial variance has the same limiting relative standard deviation. Finally, because $\nu_1,\nu_2\ge0$ and are not both zero,
\begin{equation}
\frac{(\nu_1+\nu_2)^2}{2}
\le
\nu_1^2+\nu_2^2
\le
(\nu_1+\nu_2)^2.
\end{equation}
These inequalities give the lower and upper bounds $1$ and $\sqrt2$, respectively.
\end{proof}

Theorem~\ref{thm:hopfuncertainty} implies that the relative uncertainty need not approach $\sqrt{2}$ at a Hopf bifurcation, in contrast to the case of the rank-one bifurcations.
The spatial variance can be more precise (i.e., $\sd[V] / \E[V] < \sqrt{2}$)
when two centered directions,
$\bp_{\mathrm c}$ and $\boldsymbol q_{\mathrm c}$,
contribute comparably.

\subsection{Baseline referencing}

The exact subtractive-baseline decomposition in Proposition~\ref{prop:baseline} does not depend on the bifurcation type. If the reference parameter $\alpha_0$ is fixed away from the Hopf bifurcation, then
\begin{equation}
\E[V_\Delta]
=
\Var(\bde(\alpha))
+
\Phi(\alpha)
+
\frac{\Phi(\alpha_0)}{\ell},
\end{equation}
and the fluctuation term for the Hopf bifurcation satisfies
\begin{equation}
\Phi(\alpha)
=
\frac{\sigma_{\mathrm H}^2}{\eta}
\frac{1-\rho_{\mathrm H}}{N-1}
+
O(1).
\end{equation}
Thus, the divergent fluctuation term $\Phi(\alpha)$ is unchanged by subtractive baseline referencing. Compared with the raw decomposition
$\E[V]=S(\alpha)+\Phi(\alpha)$,
baselining replaces only the bounded structural term
$S(\alpha)=\Var(\bx^*(\alpha))$
by
$\Var(\bx^*(\alpha)-\bx^*(\alpha_0))$
and adds the constant baseline-estimation term
$\Phi(\alpha_0)/\ell$.
Moreover, the estimated subtractive baseline is $O_{\mathbb P}(1)$ when $\alpha_0$ is fixed, whereas the Hopf fluctuation grows as $\eta^{-1/2}$. Therefore, subtracting the baseline does not change the CV limit in~\eqref{eq:hopfcvlimit}, the divergence of the CV when $\rho_{\mathrm H,c}=0$, or the limiting distributions of skewness, kurtosis, and Moran's $I$ in Theorem~\ref{thm:hopfratios}.

For a fixed ratio baseline with diagonal matrix $D_b$ whose entries are bounded away from zero, the corresponding Hopf formulas are obtained by replacing
$\bp_{\mathrm c}$ and $\boldsymbol q_{\mathrm c}$
by $D_b^{-1}\bp_{\mathrm c}$ and $D_b^{-1}\boldsymbol q_{\mathrm c}$,
respectively, before spatial centering and then recomputing the corresponding Gram matrix and spatial-mean vector. In particular, the leading fluctuation contribution to the expected ratio-referenced variance is
\begin{equation}
\frac{\sigma_{\mathrm H}^2}{\eta(N-1)}
\left[
\|PD_b^{-1}\bp_{\mathrm c}\|^2
+
\|PD_b^{-1}\boldsymbol q_{\mathrm c}\|^2
+
o(1)
\right].
\end{equation}

\bibliographystyle{unsrt}
\bibliography{refs-theory}

\end{document}